\DeclareMathAlphabet\mathbfcal{OMS}{cmsy}{b}{n}
\newcommand{\cblue}{\color{black}}
\newcommand{\mbf}{\mathbf}
\newcommand{\mc}{\mathcal}
\DeclareMathOperator*{\argmin}{arg\,min}
\DeclareMathOperator*{\argmax}{arg\,max}
\newcommand{\bbm}{\begin{bmatrix}}
\newcommand{\ebm}{\end{bmatrix}}
\newcommand{\vep}{\varepsilon}
\newcommand{\tcr}[1]{\textcolor{black}{#1}}
\renewcommand{\l}{\left}
\renewcommand{\r}{\right}
\newcommand{\tcb}[1]{\textcolor{black}{#1}}
\def\wh{\widehat}
\def\wt{\widetilde}
\newcommand*{\customenum}[1]{%
  \expandafter\@customenum\csname c@#1\endcsname%
}
\newcommand*{\@customenum}[1]{%
  $\ifcase#1\or\text{(B5)}\or\text{(C1)}\or\text{(C3)}\or\text{(D3)}%
    \else\@ctrerr\fi$%
}
\AddEnumerateCounter{\customenum}{\@customenum}{\text{(C3)}}
\newcommand{\Var}[0]{\mathsf{Var}}
\newcommand{\p}{\mathsf{P}}
\newcommand{\cosux}{\cos( \langle u, X^{(j)}_1 - \wt X^{(j)}_{1} \rangle )}
\newcommand{\cosvxj}{\cos( \langle v, X^{(j)}_{1+\ell} - \wt X^{(j)}_{1+\ell} \rangle)}
\newcommand{\cosuy}{\cos( \langle u, X^{(j-1)}_1 - \wt X^{(j-1)}_{1} \rangle)}
\newcommand{\cosvyj}{\cos( \langle v, X^{(j-1)}_{1+\ell} - \wt X^{(j-1)}_{1+\ell} \rangle)}
\newcommand{\cosuxy}{\cos( \langle u, \wt X^{(j)}_1 - X^{(j-1)}_{1} \rangle )}
\newcommand{\cosvxyj}{\cos( \langle v, \wt X^{(j)}_{1+\ell} -  X^{(j-1)}_{1+\ell} \rangle )}
\newcommand{\hst}{\wt h_{s,t}}
\newcommand{\thetajl}{\theta_{j}}
\newcommand{\tstatkl}{T_\ell(G,k)}
\newcommand{\phixj}{\phi^{(j)}_\ell (u,v) }
\newcommand{\phiyj}{\phi^{(j-1)}_\ell (u,v) }
\theoremstyle{definition}
\newtheorem{theorem}{Theorem}
\newtheorem{lemma}{Lemma}
\newtheorem{assumption}{Assumption}
\newtheorem{example}{Example}
\theoremstyle{remark}
\newtheorem{remark}{Remark}
\title{Nonparametric data segmentation in multivariate time series via joint characteristic functions}
\author{Euan T. McGonigle$^1$ and Haeran Cho$^2$}
\begin{document}

\maketitle

\footnotetext[1]{School of Mathematical Sciences, University of Southampton.
Email: \url{e.t.mcgonigle@soton.ac.uk}.}

\footnotetext[2]{Institute for Statistical Science, School of Mathematics, University of Bristol.
Email: \url{haeran.cho@bristol.ac.uk}. Supported by Leverhulme Trust Research Project Grant RPG-2019-390.} 

\begin{abstract}

Modern time series data often exhibit complex dependence and structural changes which are not easily characterised by shifts in the mean or model parameters.
We propose a nonparametric data segmentation methodology for multivariate time series termed NP-MOJO. 
By considering joint characteristic functions between the time series and its lagged values, NP-MOJO is able to detect change points in the marginal distribution, but also those in possibly non-linear serial dependence, all without the need to pre-specify the type of changes.
We show the theoretical consistency of NP-MOJO in estimating the total number and the locations of the change points, and demonstrate the good performance of NP-MOJO against a variety of change point scenarios. 
We further demonstrate its usefulness in applications to seismology and economic time series. 
\end{abstract}

{\it Keywords:} 
change point detection, joint characteristic function, moving sum, multivariate time series, nonparametric

\section{Introduction}
Change point analysis has been an active area of research for decades, dating back to \cite{page1954continuous}. 
Literature on change point detection continues to expand rapidly due to its prominence in numerous applications, including biology \citep{jewell2020fast}, financial analysis \citep{lavielle2007adaptive} and environmental sciences \citep{carr2017exceptional}. 
Considerable efforts have been made for developing computationally and statistically efficient methods for data segmentation, a.k.a.\ multiple change point detection, in the mean of univariate data under independence \citep{killick2012, frick2014, fryzlewicz2014wild} and permitting serial dependence \citep{tecuapetla2017, dette2020multiscale, cho2020two, cho2021multiple}.
There also exist methods for detecting changes in the covariance \citep{aue2009, wang2021optimal}, parameters under linear regression \citep{bai1998estimating, xu2024change} or other models \citep{psr2014, safikhani2022joint} in fixed and high dimensions.
For an overview, see \cite{truong2020selective} and \cite{cho2021data}.

Any departure from distributional assumptions such as independence and Gaussianity tends to result in poor performance of change point algorithms. 
Furthermore, it may not be realistic to assume any knowledge of the type of change point that occurs, or to make parametric assumptions on the data generating process, for time series that possess complex structures and are observed over a long period.
Searching for change points in one property of the data (e.g.\ mean), when the time series instead undergoes changes in another (e.g.\ variance), may lead to misleading conclusions and inference on such data. 
Therefore, it is desirable to develop flexible, nonparametric change point detection techniques that are applicable to detect general changes in the underlying distribution of serially dependent data. 

There are several strategies for the nonparametric change point detection problem, such as those based on the empirical cumulative distribution and density functions \citep{carlstein1988, zou2014nonparametric, haynes2017computationally, padilla2021optimal, jula2022multiscale, padilla2022optimal2, misael2023change},
kernel transforms of the data \citep{harchaoui2009, celisse2018new, arlot2019kernel, li2019scan} or $U$-statistics measuring the `energy'-based distance between different distributions \citep{matteson2014, chakraborty2021high, boniece2022change}. 
There also exist graph-based methods applicable to non-Euclidean data \citep{chen2015graph, chu2019}.
All these methods can only detect changes in the marginal distribution of the data and apart from \cite{misael2023change}, assume serial independence.
We also mention \cite{cho2012a}, \cite{preuss2015detection} and \cite{korkas2017multiple} where the problem of detecting changes in the second-order structure is addressed, but their methods do not have power against changes in non-linear dependence.

We propose {NP-MOJO}, a NonParametric MOving sum procedure for detecting changes in the JOint characteristic function, which detects multiple changes in serial, possibly non-linear dependence as well as marginal distributions of a multivariate time series $\{ X_t \}_{t=1}^n$.
We adopt a moving sum procedure to scan the data for multiple change points.
The moving sum methodology has successfully been applied to a variety of change point testing \citep{chu1995mosum, huvskova2001permutation} and data segmentation problems \citep{eichinger2018}. 
Here, we combine it with a detector statistic carefully designed to detect changes in complex dependence structure beyond those detectable from considering the marginal distribution only.
Specifically, we utilise an energy-based distributional discrepancy that measures any change in the joint characteristic function of the time series at some lag $\ell \ge 0$, which allows for detecting changes in the joint distribution of $(X_t, X_{t + \ell})$ beyond the changes in their linear dependence.
To the best of our knowledge, NP-MOJO is the first nonparametric methodology which is able to detect changes in non-linear serial dependence in multivariate time series.

We establish that NP-MOJO achieves consistency in estimating the number and locations of the change points for a given lag, \tcb{providing convergence rates for the change point location estimators,} and propose a methodology that extends this desirable property of single-lag NP-MOJO to multiple lags. 
Combined with a dependent multiplier bootstrapping procedure, NP-MOJO and its multi-lag extension perform well across a wide range of change point scenarios in simulations and real data applications. \tcb{Accompanying R software implementing NP-MOJO is available as the R package} \verb!CptNonPar! \tcb{\citep{CptNonPar2023} on CRAN.}

\section{Model and measure of discrepancy}
\label{sec:model}

We observe a multivariate time series $\{ X_t \}_{t=1}^n$ of (finite) dimension $p$, where
\begin{align}\label{model-eq}
X_t = \sum_{j=0}^{q} X^{(j)}_t \cdot \mathbb{I} \{ \theta_{j}+1 \leq t \leq \theta_{j+1} \}
\end{align}
with $X_t = ( X_{t1}, \ldots , X_{tp})^\top$ and \tcb{$0 = \theta_0 < \theta_1 < \cdots < \theta_q < \theta_{q + 1} = n$}.
For each sequence $\{ X_t^{(j)} : \ t \geq 1\}, \, j = 0, \ldots, q$, there exists an $\mathbb{R}^p$-valued measurable function $g^{(j)}(\cdot) = ( g^{(j)}_1 (\cdot) , \ldots , g_p^{(j)} (\cdot) )^\top$ such that $X_t^{(j)} = g^{(j)}(\mc F_t)$ with $\mc F_t = \sigma( \vep_s : s \leq t )$, and i.i.d.\ random elements $\vep_t$. 
We assume that $g^{(j-1)} \neq g^{(j)}$ for all $j = 1, \ldots, q$, such that under the model~\eqref{model-eq}, the time series undergoes $q$ change points at locations $\Theta = \{ \theta_1, \ldots , \theta_{q} \}$, with the notational convention that $\theta_0=0$ and $\theta_{q+1} = n$. 
That is, $\{ X_t \}_{t=1}^n$ consists of $q+1$ stationary segments 
where the $j$-th segment is represented in terms of a segment-dependent `output' $g^{(j)} (\mc F_t)$, with the common `input' $\mc F_t$ shared across segments such that dependence across the segments is not ruled out.
Each segment has a non-linear Wold representation as defined by \cite{wu2005nonlinear}; this representation includes commonly adopted time series models including ARMA and GARCH processes.

Denote the inner product of two vectors $x$ and $y$ by $\langle x, y \rangle = x^\top y$ and $\imath$ the imaginary unit with $\imath^2 = -1$.
At some integer $\ell$, define the joint characteristic function of $\{X^{(j)}_t\}_{t \in \mathbb{Z}}$ at lag $\ell$, as
\begin{align*}
\phi^{(j)}_\ell (u, v) = \mathbb{E} \l\{ \exp \l( \imath \langle u, X^{(j)}_1 \rangle + \imath \langle v, X^{(j)}_{1+\ell} \rangle \r) \r\}, \quad 0 \leq j \leq q .
\end{align*}
We propose to measure the size of changes between adjacent segments under~\eqref{model-eq}, using an `energy-based' distributional discrepancy given by
\begin{align}\label{eq:test-stat-char}
d_\ell^{(j)} = \int_{\mathbb{R}^{p}} \int_{\mathbb{R}^{p}} \l\vert \phi^{(j)}_\ell (u,v) - \phi_{\ell}^{(j - 1)} (u,v) \r\vert^2 w(u,v) du dv, \quad 1 \leq j \leq q ,
\end{align}
where $w(u,v)$ is a positive weight function for which the above integral exists. 
For given lag $\ell \ge 0$, the quantity $d_\ell^{(j)}$ measures the weighted $L_2$-norm of the distance between the lag $\ell$ joint characteristic functions of $\{X^{(j - 1)}_t\}_{t \in \mathbb{Z}}$ and $\{X^{(j)}_t\}_{t \in \mathbb{Z}}$. A discrepancy measure of this form is a natural choice for nonparametric data segmentation, since:
\begin{lemma}
\label{lemma:discrepancy} 
\tcb{For any $\ell \geq 0$, $d^{(j)}_{\ell} = 0$ if and only if $(X_1^{(j)}, X^{(j)}_{1+\ell}) \stackrel{d}{=} (X_1^{(j-1)} , X_{1+\ell}^{(j-1)})$. }
\end{lemma}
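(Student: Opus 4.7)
The plan is to handle the two directions separately, with the easy direction being immediate and the substantive direction reducing to the uniqueness theorem for characteristic functions.

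For the ``if'' direction, I would observe that equality in distribution of the pairs $(X_1^{(j)}, X_{1+\ell}^{(j)})$ and $(X_1^{(j-1)}, X_{1+\ell}^{(j-1)})$ implies pointwise equality of the joint characteristic functions $\phi^{(j)}_\ell(u,v) = \phi^{(j-1)}_\ell(u,v)$ for all $(u,v) \in \mathbb{R}^p \times \mathbb{R}^p$, from which the integrand in \eqref{eq:test-stat-char} is identically zero and hence $d_\ell^{(j)} = 0$.

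For the ``only if'' direction, I would argue as follows. Since $w(u,v) > 0$ on $\mathbb{R}^p \times \mathbb{R}^p$ by hypothesis and the integrand $|\phi^{(j)}_\ell(u,v) - \phi^{(j-1)}_\ell(u,v)|^2 \, w(u,v)$ is nonnegative, $d_\ell^{(j)} = 0$ forces $\phi^{(j)}_\ell(u,v) = \phi^{(j-1)}_\ell(u,v)$ for Lebesgue-almost every $(u,v)$. The characteristic functions $\phi^{(j)}_\ell$ and $\phi^{(j-1)}_\ell$ are continuous on $\mathbb{R}^{2p}$ (by dominated convergence applied to $\exp(\imath \langle u, x\rangle + \imath \langle v, y\rangle)$), so two continuous functions agreeing almost everywhere agree everywhere. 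Thus $\phi^{(j)}_\ell \equiv \phi^{(j-1)}_\ell$ on $\mathbb{R}^{2p}$, and the uniqueness theorem for characteristic functions on $\mathbb{R}^{2p}$ yields the equality in distribution of the two pairs.

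There is no real obstacle: the argument is a textbook application of the continuity of characteristic functions together with the Cram\'er--L\'evy uniqueness theorem. The only mild point worth being explicit about is that the positivity assumption on $w$ is what converts ``zero integrated squared difference'' into ``zero difference almost everywhere,'' and that this in turn upgrades to pointwise equality by continuity.
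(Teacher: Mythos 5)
Your proposal is correct and follows essentially the same route as the paper's proof: the forward direction via pointwise equality of characteristic functions, and the converse via positivity of $w$ giving almost-everywhere equality and hence equality in distribution. You are somewhat more explicit than the paper in spelling out the continuity argument that upgrades almost-everywhere agreement of the characteristic functions to agreement everywhere before invoking the uniqueness theorem, which the paper leaves implicit.
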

Lemma~\ref{lemma:discrepancy} extends the observation made in \cite{matteson2014} about the correspondence between the characteristic function and marginal distribution. It shows that by considering the joint characteristic functions $\phi^{(j)}_\ell (u, v)$ at multiple lags $\ell \ge 0$, the discrepancy $d^{(j)}_{\ell}$ is able to capture changes in the serial dependence as well as those in the marginal distribution of $\{ X_t \}_{t = 1}^n$.

\tcb{The following lemma lists some choices of the weight function $w(u, v)$ and the associated representation of $d^{(j)}_\ell$ as the kernel-based discrepancy between $Y_t^{(j)} = (X_t^{(j)}, X^{(j)}_{t+\ell})$ and $Y_t^{(j - 1)}$, extending the observation made in \cite{matteson2014} for the setting where a sequence of independent observations are undergoing changes in the marginal distribution.}
Let $\Vert x \Vert$ denote the Euclidean norm of a vector $x$, and define $\tilde{Y}^{(j)}_t = ( \tilde X^{(j)}_t, \tilde X^{(j)}_{t+\ell})$ where $\tilde X^{(j)}_t = g^{(j)} (\tilde{\mc F}_t)$ with $\tilde{\mc F}_t = \sigma( \tilde{\vep}_s : s \leq t )$ and $\tilde{\vep}_t$ is an independent copy of $\vep_t$.

\begin{lemma}\label{lemma:weight-int-identities}
\begin{enumerate}[label = (\roman*)]
\item \label{lemma:weight-int1} For any $\beta>0$, suppose that $d^{(j)}_\ell$ in~\eqref{eq:test-stat-char} is obtained with respect to the following weight function:
\begin{align*} % \label{eq:w1-identity}
w_1(u,v) = C_1(\beta,p)^{-2} \exp \left \{ -\frac{1}{2 \beta^2} \left(  \Vert u \Vert^2 + \Vert v \Vert^2 \right)   \right \} \text{ \ with \ } C_1(\beta, p) =  (2 \pi)^{p/2} \beta^{p}.
\end{align*} 
Then, the function $h_1 : \mathbb{R}^{2p} \times \mathbb{R}^{2p} \to [0,1]$ defined as $h_1(x, y) = \exp(-\beta^2 \Vert x - y \Vert^2 /2) $ for $x, y \in \mathbb{R}^{2p}$, satisfies
\begin{align*} % \label{w2-stat}
d^{(j)}_{\ell}  &=  \mathbb{E} \left \{ h_1 \left( Y_1^{(j)},\tilde Y_1^{(j)} \right) \right \} +  \mathbb{E} \left\{ h_1 \left(Y_1^{(j-1)}, \tilde Y_1^{(j-1)} \right) \right \} - 2 \mathbb{E} \left\{ h_1 \left(\tilde Y_1^{(j)}, Y_1^{(j-1)} \right) \right\}.
\end{align*}

\item \label{lemma:weight-int2} For any $\delta >0$, suppose that $d^{(j)}_\ell$ is obtained with
\begin{align*}
w_2 (u,v) = C_2(\delta,p)^{-2} \prod_{s=1}^p u_s^2 v_s^2 \exp \left\{ -\delta \left( u_s^2 + v_s^2 \right)   \right \} \text{ \ with \ } C_2 (\delta,p) = \frac{\pi^{p/2}}{2^p \delta^{3p/2}}.
\end{align*} 
Then, the function $h_2 : \mathbb{R}^{2p} \times \mathbb{R}^{2p} \to [-2e^{-2/3},1]$ defined as 
\begin{align*}
h_2 (x, y) = \prod_{r=1}^{2p} \frac{ \left \{ 2 \delta - (x_r - y_r )^2  \right \} \exp \left \{ -\frac{1}{4\delta} (x_r - y_r)^2 \right\} }{2 \delta}
\end{align*}
for $x = (x_1, \ldots, x_{2p})^\top$ and $y = (y_1, \ldots, y_{2p})^\top$,  satisfies
\begin{align*}
d^{(j)}_{\ell}  &=  \mathbb{E} \left\{ h_2 \left( Y_1^{(j)},\tilde Y_1^{(j)} \right) \right\} +  \mathbb{E} \left\{ h_2 \left(Y_1^{(j-1)}, \tilde Y_1^{(j-1)} \right) \right\} - 2 \mathbb{E} \left\{ h_2 \left(\tilde Y_1^{(j)}, Y_1^{(j-1)} \right) \right\}.
\end{align*}

\end{enumerate}
\end{lemma}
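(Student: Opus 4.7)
The plan is to convert the integral defining $d^{(j)}_\ell$ into an expectation of a kernel, using the standard device that links squared differences of characteristic functions to expectations of cosines, and then, in each of the two cases, to evaluate a Gaussian-type integral that gives $h_k(x,y)$ explicitly. For any two $\mathbb{R}^{2p}$-valued random vectors $Y$ and $Z$ (defined on the same probability space, independent of one another) with respective independent copies $\tilde Y$ and $\tilde Z$, I would start from
\begin{equation*}
|\phi_Y(w) - \phi_Z(w)|^2 = \phi_Y(w)\overline{\phi_Y(w)} + \phi_Z(w)\overline{\phi_Z(w)} - 2\,\operatorname{Re}\{\phi_Y(w)\overline{\phi_Z(w)}\},
\end{equation*}
and use $\phi_Y\overline{\phi_Z} = \mathbb{E}[e^{i\langle w,\, Y - Z\rangle}]$ (valid by independence) together with the symmetry of $Y - \tilde Y$ and $Z - \tilde Z$ (which kills imaginary parts), giving
\begin{equation*}
|\phi_Y(w) - \phi_Z(w)|^2 = \mathbb{E}\{\cos(\langle w, Y - \tilde Y\rangle)\} + \mathbb{E}\{\cos(\langle w, Z - \tilde Z\rangle)\} - 2\,\mathbb{E}\{\cos(\langle w, Y - Z\rangle)\}.
\end{equation*}
Setting $Y = Y_1^{(j)}$, $\tilde Y = \tilde Y_1^{(j)}$, $Z = Y_1^{(j-1)}$, $\tilde Z = \tilde Y_1^{(j-1)}$ is consistent with the construction in the statement: $\tilde Y_1^{(j)}$ is built from $\tilde{\mathcal{F}}_t$ and is therefore independent of $Y_1^{(j-1)}$, while being an independent copy of $Y_1^{(j)}$. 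A Fubini swap (justified because $|\cos|\le 1$ and $w_k$ is integrable) then reduces both parts of the lemma to identifying
\begin{equation*}
h_k(x,y) \;=\; \int_{\mathbb{R}^{2p}} \cos(\langle (u,v),\, x - y\rangle)\, w_k(u,v)\, du\, dv, \qquad k = 1, 2.
\end{equation*}

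For part~(i), I would note that $w_1$ is exactly the density of the $\mathcal{N}(0, \beta^2 I_{2p})$ distribution, so the integral above is the real part of the characteristic function of that Gaussian, evaluated at $x - y$; this gives $\exp(-\beta^2\|x - y\|^2/2) = h_1(x,y)$ at once.

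For part~(ii), the weight $w_2$ factorises across the $2p$ coordinates of $w = (u,v)$, and so does $\exp(i\langle w, x - y\rangle)$, so the integral collapses to a product of one-dimensional pieces
\begin{equation*}
\int_{-\infty}^{\infty} w^2\, e^{iwt_r - \delta w^2}\, dw \;=\; -\frac{\partial^2}{\partial t_r^2}\left\{\sqrt{\pi/\delta}\, e^{-t_r^2/(4\delta)}\right\} \;=\; \sqrt{\pi/\delta}\,\frac{2\delta - t_r^2}{4\delta^2}\, e^{-t_r^2/(4\delta)},
\end{equation*}
with $t_r = x_r - y_r$. Multiplying over $r = 1, \ldots, 2p$ and using $C_2^{-2} = 4^{p}\delta^{3p}\pi^{-p}$ to absorb the prefactors produces one clean factor of $1/(2\delta)$ per coordinate, matching the stated form of $h_2$. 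The stated range of $h_2$ follows by elementary calculus applied to the single-coordinate function $(1 - s)\,e^{-s/2}$ with $s = t_r^2/(2\delta) \ge 0$.

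The only real obstacle is bookkeeping in part~(ii): one must verify that the constants $C_2^{-2}$, the $(\pi/\delta)^{p}$ coming from the $2p$ factors of $\sqrt{\pi/\delta}$, and the $(4\delta^2)^{-1}$ per coordinate combine to $(2\delta)^{-1}$ per coordinate. Everything else is the classical Gaussian-integral-plus-Fubini computation.
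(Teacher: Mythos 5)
Your proposal is correct and follows essentially the same route as the paper: expand $\bigl\vert \phi^{(j)}_\ell - \phi^{(j-1)}_\ell \bigr\vert^2$ into self- and cross-terms, use the independent copies $\tilde Y_1^{(j)}$ (built from $\tilde{\mc F}_t$, hence independent of $Y_1^{(j-1)}$) to write each product of characteristic functions as a single expectation, pass to cosines since the quantity is real, apply Fubini, and evaluate the resulting weighted cosine integral to identify the kernel. The only difference is computational: where the paper proves the auxiliary integrals in Lemmas~\ref{integral-lemma} and~\ref{integral-lemma2} via orthogonal invariance and induction with integration by parts, you obtain them more directly by recognising $w_1$ as the $\mc N(\mathbf 0, \beta^2 \mbf I_{2p})$ density and by differentiating the one-dimensional Gaussian integral twice under the integral sign for $w_2$ --- both standard and valid, and your constant bookkeeping for $C_2(\delta,p)^{-2}$ checks out.
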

\tcb{Lemma \ref{lemma:weight-int-identities} is a special case of Bochner's Theorem applied to the chosen weight functions, see for example Section 5.3 of \cite{sejdinovic2013equivalence}}. The weight function $w_1$ is commonly referred to as the Gaussian weight function. 
Both $w_1$ and $w_2$ are unit integrable % , i.e.\ $\int w_1(u,v) dudv = \int w_2(u,v) dudv = 1$, 
and separable in their arguments, such that $d^{(j)}_\ell$ is well-defined due to the boundedeness of the characteristic function. 
We provide an alternative weight function in Appendix~\ref{sec:weight} and also refer to \cite{fan2017multivariate} for other suitable choices.
\begin{remark}
From Lemma~\ref{lemma:weight-int-identities}, $d^{(j)}_\ell$ can be viewed as the squared maximum mean discrepancy (MMD) on a suitably defined reproducing kernel Hilbert space with the associated kernel function; see Lemma~6 of \cite{gretton2012kernel} and Section 2.6 of \cite{celisse2018new}.  
We also note the literature on the (auto)distance correlation for measuring and testing dependence in multivariate \citep{szekely2007} and time series \citep{zhou2012measuring, fokianos2017consistent, davis2018} settings. 
\end{remark}

\begin{remark}
\tcr{In Model~\eqref{model-eq} (and in our theoretical results), the dimension $p$ of the time series is assumed fixed. We would expect practical performance to deteriorate with increasing dimension since we use an energy-based method. For example, when the time series undergoes a change in both mean and variance, the pre- and post-change segments of the time series can be separated into an ``inner layer" and ``outer layer" based on their pairwise Euclidean distances. However, as \cite{chen2017new} note, ``data points in the outer layer find themselves to be closer to points in the inner layer than other points in the outer layer", due to the curse of dimensionality. See for example \cite{ramdas2015decreasing} or Section 2.2 of \cite{chu2019} for further discussion.}
\end{remark}

\section{Methodology}\label{sec:method}

\subsection{The NP-MOJO procedure} \label{sec:single}

In this section we describe our proposed nonparametric moving sum procedure for detecting changes in the joint characteristic function, henceforth referred to as NP-MOJO. The identities given in Lemma~\ref{lemma:weight-int-identities} allow for the efficient computation of the statistics approximating $d^{(j)}_\ell$ and their weighted sums, which forms the basis for 
% the data segmentation methodology proposed in Section~\ref{sec:single} below.
% We propose 
the NP-MOJO procedure for detecting multiple change points from a multivariate time series $\{X_t\}_{t = 1}^n$ under the model~\eqref{model-eq}.
Throughout, we present the procedure with a generic kernel $h$ associated with some weight function $w$.
We first introduce NP-MOJO for the problem of detecting changes in the joint distribution of $Y_t = (X_t, X_{t + \ell})$ at a given lag $\ell \ge 0$, and extend it to the multi-lag problem in Section~\ref{sec:multi}.

For fixed bandwidth $G \in \mathbb{N}$, NP-MOJO scans the data using a detector statistic computed on neighbouring moving windows of length $G$, which approximates the discrepancy between the local joint characteristic functions of the corresponding windows measured analogously as in~\eqref{eq:test-stat-char}.
Specifically, the detector statistic at location $k$ is given by the following two-sample $V$-statistic:
\begin{align}
\tstatkl  = \frac{1}{(G-\ell)^2} \left \{ \sum_{s,t = k - G+1}^{k - \ell} h(Y_s, Y_t) + \sum_{s,t = k+1}^{k +G - \ell} h(Y_s, Y_t) -2 \sum_{s = k - G+1}^{k - \ell} \sum_{t = k +1}^{k +G - \ell} h(Y_s, Y_t)  \right \} \nonumber
\end{align}
for $k = G, \ldots , n-G$, as an estimator of the local discrepancy measure
\begin{align}
\label{eq:pop-test-stat}
\mc D_\ell (G,k) = \sum_{j=0}^q \left( \frac{G - \ell -  |k - \theta_{j}\vert}{G-\ell} \right)^2 d^{(j)}_\ell \cdot \mathbb{I}{ \{ \vert k - \theta_{j} \vert \le G - \ell \}}. 
\end{align}
\tcb{At given $k$, the statistic $\tstatkl$ measures the difference in the distribution of $Y_t$ over the disjoint intervals of length $G - \ell$ around $k$, and satisfies}
\begin{equation} \label{eq:t-stat-exp}
\tcb{\mathbb{E} \{ T_\ell (G,k) \} = \mathcal{D}_\ell (G,k) + \mathcal{O}(G^{-1/2} ).   }
\end{equation}
We have $\mc D_\ell(G, k) = 0$ when the section of the data $\{X_t, \, \vert t - k \vert \le G - \ell \}$ does not undergo a change and accordingly, $T_\ell(G,k)$ is expected to be close to zero.
On the other hand, if $\vert k - \theta_j \vert < G - \ell$, then $\mc D_\ell(G, k)$ increases and then decreases around $\theta_j$ with a local maximum at $k = \theta_j$. \tcb{The statistic $T_\ell(G, k)$ is expected to behave similarly: in particular, at any change point location $\theta_j$, we have that $\mathbb{E} \{ T_\ell (G, \theta_j) \}  = d^{(j)} _\ell + \mathcal{O}(G^{-1/2} )$ (see Lemma~D.4 in the supplementary material for further information).}
We illustrate this using the following example.

\begin{figure}[]
\centering
\includegraphics[width = 0.8\textwidth]{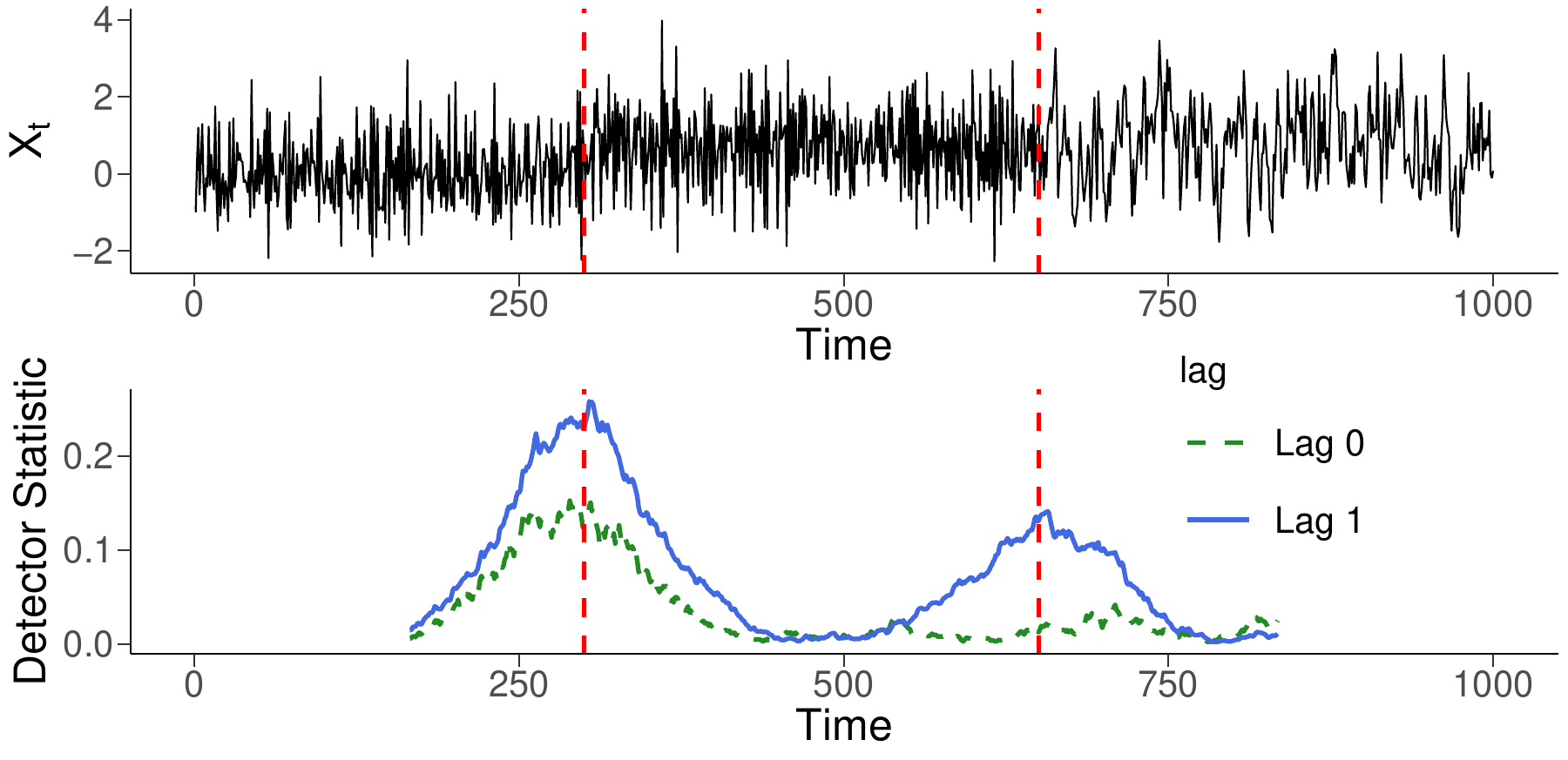}
\caption{Top: time series of length $n = 1000$ with change points $\theta_1 = 300$ and $\theta_2 = 650$ (vertical dashed lines), see Example~\ref{ex:time-series}. Bottom: corresponding detector statistics $\tstatkl$ computed at lags $\ell = 0$ (dashed) and $\ell = 1$ (solid).}
\label{mosum-example}
\end{figure}

\begin{example}\label{ex:time-series}
% An example of a sequence of detector statistics computed for lags $0$ and $1$ on a simulated time series is given in Figure~\ref{mosum-example}. 
A univariate time series $\{ X_t\}_{t=1}^{n}$ of length $n = 1000$ is generated as $X_t = \mu_t + \vep_t$, where $\mu_t = 0.7 \cdot \mathbb{I}{\{ t > \theta_1\} }$ and $\vep_t = \vep^{(1)}_t \cdot \mathbb{I}{\{ t < \theta_2 \} }  + \vep^{(2)}_t \cdot \mathbb{I}{\{ t \geq \theta_2 \} }$, with $\theta_1 = 300$ and $\theta_2 = 650$.
Each $\vep^{(j)}_t$ is an autoregressive (AR) process of order 1: $\vep_t^{(1)} = 0.5 \vep_{t-1}^{(1)} + W_{t}$ and $\vep_t^{(2)} = -0.5 \vep_{t-1}^{(2)} + W_{t}$, where $\{W_t\}_{t \in \mathbb{Z}}$ is a white noise process with $\Var(W_t) = \sqrt{1-0.5^2}$. This choice leads to $\Var(X_t) = 1$ for all $t$, see the top panel of Figure~\ref{mosum-example} for a realisation.
Then, the mean shift at $\theta_1$ is detectable at all lags while the autocorrelation change at $\theta_2$ is detectable at odd lags only, i.e.\ $d^{(2)}_\ell = 0$ for even $\ell \ge 0$.
The bottom panel of Figure~\ref{mosum-example} plots $T_\ell(G, k), \, G \le k \le n - G$, computed using kernel $h_{2}$ in Lemma~\ref{lemma:weight-int-identities}~\ref{lemma:weight-int2} with $G = 166$.
At lag $\ell = 0$, the detector statistic forms a prominent peak around $\theta_1$ but it is flat around $\theta_2$; at lag $\ell = 1$, the statistic $T_1(G, k)$ forms local maxima around both $\theta_j, \, j = 1, 2$. 
\end{example}

Based on these observations, it is reasonable to detect and locate the change points in the joint distribution of $(X_t, X_{t + \ell})$ as significant local maximisers of $T_\ell(G, k)$.
We adopt the selection criterion, first considered by \cite{eichinger2018} in the context of detecting mean shifts from univariate time series, for simultaneous estimation of multiple change points.
For some fixed constant $\eta \in (0, 1)$ and a threshold $\zeta_\ell (n, G) > 0$, we identify any local maximiser of $T_\ell(G, k)$, say $\wh\theta$, which satisfies
\begin{align}\label{eq:mosum:est} 
T_\ell( G,\wh \theta ) > \zeta_\ell (n, G) \quad \text{and} \quad
\wh \theta = {\argmax}_{k: \, \vert k - \wh \theta \vert \le \eta G} \tstatkl.
\end{align}
\tcb{That is, $\wh{\theta}$ is declared a change point if it is a local maximiser of $T_\ell(G, k)$ over a sufficiently large interval of size $\eta G$, at which the threshold $\zeta_\ell (n, G)$ is exceeded.} We denote the set of such estimators fulfilling~\eqref{eq:mosum:est} by $\wh{\Theta}_\ell$ with $\wh q_\ell = \vert \wh{\Theta}_\ell \vert$.
The choice of $\zeta_\ell (n, G)$ is discussed in Section~\ref{sec:bootstrap}. 

\subsection{Theoretical properties}
\label{sec:theory}

For some finite integer $\ell \ge 0$, we define the index set of the change points {\it detectable} at lag $\ell$ as $\mc I_\ell = \{ 1 \leq j \leq q : \, d^{(j)}_\ell \neq 0 \}$, and denote its cardinality by $q_\ell = \vert \mc I_\ell \vert \le q$.
Not all change points are detectable at all lags, see Example~\ref{ex:time-series} where we have $\mc I_0 = \{ 1 \}$ and $\mc I_1 = \{ 1, 2 \}$.
In this section, we show that the single-lag NP-MOJO described in Section~\ref{sec:single} consistently estimates the total number $q_\ell$ and the locations $\{\theta_j, \, j \in \mc I_\ell\}$ of the change points detectable at lag $\ell$, by $\wh{\Theta}_\ell$. 

Writing {$g_{ti}(\cdot) = \sum_{j=0}^q g^{(j)}_i(\cdot ) \cdot \mathbb{I}{\{ \theta_{j}+1 \leq t \leq \theta_{j+1} \}}$}, define $X_{ti, \{t-s \}} = g_{ti} ( \mc F_{t,\{t-s\}} )$, where $F_{t,\{t-s\}} = \sigma ( \ldots , \vep_{t-s-1} , \tilde{\vep}_{t-s} , \vep_{t-s+1} , \ldots , \vep_t )$ is a coupled version of $\mc F_t$ with $\vep_{t-s}$ replaced by its independent copy $\tilde{\vep}_{t-s}$. 
For a random variable $Z$ and $\nu > 0$, let $\Vert Z \Vert_\nu = \{ \mathbb{E} ( |Z|^\nu) \}^{1/\nu}$.
Analogously as in \cite{xu2024change}, we define the element-wise functional dependence measure and its cumulative version as 
\begin{align}
\label{eq:func:dep}
\delta_{s, \nu, i}  = \sup_{t \in \mathbb{Z}} \Vert X_{ti} - X_{ti,\{t-s\}}  \Vert_\nu  \text{ \ and \ } \Delta_{m, \nu} = \max_{1 \leq i \leq p} \sum_{s=m}^{\infty} \delta_{s, \nu, i}  , \  m \in \mathbb{Z}.
\end{align}
Then, we make the following assumptions on the degree of serial dependence in $\{X_t \}_{t=1}^n$.
\begin{assumption}\label{assum:functional-dep}
There exist some constants $C_F, C_X \in (0, \infty)$ and $\gamma_1 \in (0, 2)$ such that \begin{align*}
\sup_{m \geq 0} \exp (C_F m^{\gamma_1}) \Delta_{m,2}  \leq C_X .
\end{align*}
\end{assumption}

\begin{assumption}\label{assum:beta-mixing}
The time series $\{X_t \}_{t=1}^n$ is continuous and $\beta$-mixing with $\beta (m) \leq C_\beta m^{-\gamma_2}$ for some constants $C_\beta \in (0, \infty)$ and $\gamma_2 \ge 1$, where 
\begin{align*}
\beta (m) = \sup_{t \in \mathbb{Z}} \left( \sup \frac{1}{2} \sum_{r=1}^R \sum_{s=1}^S \left\vert \p (A_r \cap B_s ) - \p (A_r ) \p (B_s) \right\vert \right).
\end{align*}
Here, the inner supremum is taken over all pairs of finite partitions $\{A_1, \ldots, A_R\}$ of $\mc F_t = \sigma(\vep_u, \, u \le t)$ and $\{B_1, \ldots, B_S\}$ of $\sigma(\vep_u, \, u \ge t + m)$.
\end{assumption}

Assumptions~\ref{assum:functional-dep} and~\ref{assum:beta-mixing} require the serial dependence in $\{X_t\}_{t = 1}^n$, measured by $\Delta_{m, 2}$ and $\beta(m)$, to decay exponentially, and both are met by a range of linear and non-linear processes \citep{wu2005nonlinear, mokkadem1988mixing}.
Under Assumption~\ref{assum:functional-dep}, we have $\Vert X_{it} \Vert_2 < \infty$ for all $i$ and $t$. 
Assumption~\ref{assum:functional-dep} is required for bounding $T_{\ell}(G, k) - \mathbb{E} \{T_{\ell}(G, k) \}$ uniformly over $k$, while Assumption~\ref{assum:beta-mixing} is used for controlling the bias $\mathbb{E} \{T_{\ell}(G, k) \} - \mc D_\ell(G, k)$ which is attributed to serial dependence. 
A condition similar to Assumption~\ref{assum:beta-mixing} is often found in the time series literature making use of distance correlations, see e.g.\ \cite{davis2018} and \cite{yousuf2022targeting}.  
\tcb{Under the stronger assumption that $\{X^{(j)}_t\}$ and $\{X^{(j + 1)}_t\}$ are independent, we can derive the analogous results as those presented in Theorems~\ref{thm:consistency} and~\ref{thm:multilag-consistent}, under Assumption~\ref{assum:beta-mixing} only.}

\begin{assumption}\label{assum:kernel}
The kernel function $h$ is symmetric and bounded, and can be written as $h(x, y) = h_0 (x-y)$ for some  function $h_0: \mathbb{R}^{2p} \to \mathbb{R}$ that is Lipschitz continuous with respect to $\Vert \cdot \Vert$ with Lipschitz constant $C_h \in (0, \infty)$.
\end{assumption}
Assumption~\ref{assum:kernel} on the kernel function $h$ is met by $h_1$ and $h_2$ introduced in Lemma~\ref{lemma:weight-int-identities}, with constants $C_h$ bounded by $\beta e^{-1/2}$ and $2 \sqrt{2} p^{3/2} \delta^{-1/2}$, respectively.

\begin{assumption}\label{assum:change-point}
\begin{enumerate}[label = (\roman*)]
\item \label{assum:min-space} %  $G = G_n$ satisfies 
$G^{-1} \log(n) \to 0$ as $n \to \infty$ while $\min_{0 \leq j \leq q} ( \theta_{j+1} - \theta_{j} ) \geq 2G$.
\item \label{assum:change-size} ${ \sqrt{G/ \log (n)} } \min_{j \in \mc I_\ell} d^{(j)}_\ell \to \infty$.
\end{enumerate}
\end{assumption}

Recall that $\mc I_\ell$ denotes the index set of detectable change points at lag $\ell$, i.e.\ $d^{(j)}_\ell > 0$ iff $j \in \mc I_\ell$.
However, this definition of detectability is too weak to ensure that all $\theta_j, \, j \in \mc I_\ell$, are detected by NP-MOJO with high probability at lag $\ell$, since we do not rule out the case of local changes where $d^{(j)}_\ell~\to~0$.
Consider Example~\ref{ex:time-series}: the change in the autocorrelations results in $d^{(2)}_\ell > 0$ for all odd $\ell$ but the size of change is expected to decay exponentially fast as $\ell$ increases.
Assumption~\ref{assum:change-point} allows for local changes provided that $\sqrt{G/ \log (n)} d^{(j)}_\ell$ diverges sufficiently fast. 
Assumption~\ref{assum:change-point}~\ref{assum:min-space} on the minimum spacing of change points, is commonly imposed in the literature on change point detection using moving window-based procedures.
Assumption~\ref{assum:change-point} does not rule out $G / n \to 0$ and permits the number of change points $q$ to increase in $n$.
We discuss the selection of bandwidth in Section~\ref{sec:tuning}.

\begin{theorem}
\label{thm:consistency}
Let Assumptions~\ref{assum:functional-dep}, \ref{assum:beta-mixing}, \ref{assum:kernel} and~\ref{assum:change-point} hold and $\ell \ge 0$ be a finite integer, and set the threshold as $\zeta_\ell (n,G) = c_\zeta \sqrt{\log(n)/G}$ for some constant $c_\zeta > 0$. Then, there exists $c_0 > 0$, depending only on $C_F$, $C_X$, $\gamma_1$, $C_\beta$, $\gamma_2$ \tcb{and $p$}, such that as $n \to \infty$,
\begin{align*}
\p \left(\wh{q}_\ell = q_\ell , \, \max_{j \in \mc I_\ell} \min_{\wh\theta \in \wh{\Theta}_\ell} \, d^{(j)}_\ell \vert \wh{\theta} -\theta_{j} \vert \leq c_0 \sqrt{G \log (n)}  \right) \to 1.    
\end{align*}
\end{theorem}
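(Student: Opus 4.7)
The plan is to reduce the theorem to a uniform deviation bound for the detector around its population counterpart $\mc D_\ell(G,\cdot)$, and then a short shape analysis of $\mc D_\ell$ near each detectable change point. Concretely, I would show that on an event $\Omega_n$ of probability tending to one,
$$ \max_{G \le k \le n - G} \bigl| T_\ell(G,k) - \mc D_\ell(G,k) \bigr| \le C \sqrt{\log(n)/G} $$
for some constant $C$ depending only on $C_F, C_X, \gamma_1, C_\beta, \gamma_2, p$, and then deduce both $\wh q_\ell = q_\ell$ and the localisation rate from this bound.

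For the deviation bound, I would split $T_\ell - \mc D_\ell$ into the bias $\E T_\ell(G,k) - \mc D_\ell(G,k)$ and the fluctuation $T_\ell(G,k) - \E T_\ell(G,k)$. The bias is $\mc O(G^{-1/2})$ by~\eqref{eq:t-stat-exp} and the paper's Lemma~D.4; this is where the $\beta$-mixing in Assumption~\ref{assum:beta-mixing} is used, since $\mc D_\ell$ is built from the stationary joint law of $(X_1^{(j)}, X^{(j)}_{1+\ell})$ while $T_\ell(G,k)$ averages kernel evaluations on pairs whose laws approach this one at rate $\beta(\cdot)$. For the fluctuation, $T_\ell(G,k)$ is a two-sample $V$-statistic with bounded Lipschitz kernel (Assumption~\ref{assum:kernel}); I would couple each $h(Y_s, Y_t)$ to a version built from independent-replica trajectories, with the coupling error controlled via the functional dependence measure $\Delta_{m,2}$ of Assumption~\ref{assum:functional-dep} and the Lipschitz constant $C_h$. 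Combined with a Hoeffding/Bernstein-type inequality for sums of bounded, weakly dependent variables, and a union bound over the $n$ values of $k$, this yields the $\sqrt{\log(n)/G}$ rate uniformly.

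For the population shape, Assumption~\ref{assum:change-point}\ref{assum:min-space} forces at most one $\theta_j$ to lie within $G - \ell$ of any $k$, so the sum in~\eqref{eq:pop-test-stat} reduces to a single term, giving $\mc D_\ell(G, \theta_j) = d^{(j)}_\ell$ and, for $\Delta = |k - \theta_j| \le G - \ell$,
$$ \mc D_\ell(G, \theta_j) - \mc D_\ell(G, k) = d^{(j)}_\ell \Bigl[ 1 - \bigl( 1 - \Delta/(G-\ell) \bigr)^2 \Bigr] \ge c\, d^{(j)}_\ell \Delta / G. $$
On $\Omega_n$, correct counting $\wh q_\ell = q_\ell$ follows from: (a) if $k$ is at distance more than $G-\ell$ from every $\theta_j$ then $\mc D_\ell(G,k) = 0$, hence $T_\ell(G,k) < \zeta_\ell(n,G)$ for any $c_\zeta > C$, ruling out spurious detections; and (b) at each detectable $\theta_j$, $T_\ell(G,\theta_j) \ge d^{(j)}_\ell - C\sqrt{\log n/G}$ exceeds the threshold by Assumption~\ref{assum:change-point}\ref{assum:change-size}, while the shape bound combined with the $\eta G$ local-maximiser criterion in~\eqref{eq:mosum:est} yields exactly one $\wh\theta$ per $\eta G$-neighbourhood (these being disjoint by the minimum-spacing condition). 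Finally, $T_\ell(G, \wh\theta) \ge T_\ell(G, \theta_j)$ combined with the shape bound gives $c\, d^{(j)}_\ell |\wh\theta - \theta_j|/G \le 2C\sqrt{\log n/G}$, i.e.\ $d^{(j)}_\ell |\wh\theta - \theta_j| \le c_0 \sqrt{G \log n}$.

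The hardest step is the uniform concentration at the sharp rate $\sqrt{\log(n)/G}$. The detector is a two-sample $V$-statistic over dependent observations, so neither the classical Hoeffding inequalities for $V$-statistics (which presume independence) nor off-the-shelf maximal inequalities for weakly dependent sums apply directly; the delicate work is in simultaneously using the functional-dependence coupling (Assumption~\ref{assum:functional-dep}), the Lipschitz structure of $h_0$ (Assumption~\ref{assum:kernel}) to turn coupling errors into $L_2$-bounds on kernel evaluations, and a $\beta$-mixing blocking or Berbee-type coupling for the bias, all while preserving uniformity in $k$ and the logarithmic dependence on $n$.
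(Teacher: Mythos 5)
Your proposal is correct and follows essentially the same route as the paper: a uniform bound $\max_k|T_\ell(G,k)-\mc D_\ell(G,k)|\lesssim\sqrt{\log(n)/G}$ obtained by splitting into a $\beta$-mixing-controlled bias of order $G^{-1/2}$ (via a Hoeffding decomposition of the $V$-statistic) and a fluctuation term controlled through the functional dependence measure of the diagonal sums $h(Y_t,Y_{t+s})$ with a Nagaev-type inequality plus a union bound, followed by exactly the same quadratic shape analysis of $\mc D_\ell$ to get correct counting and the rate $d^{(j)}_\ell|\wh\theta-\theta_j|\lesssim\sqrt{G\log n}$. The only place the paper is more careful than your sketch is in establishing uniqueness of the accepted local maximiser near each $\theta_j$, which it does via an auxiliary event comparing $T_\ell(G,\theta_j\pm\lfloor r\eta(G-\ell)/2\rfloor)$ to the detector over overlapping half-windows, but this is a refinement of the argument you already indicate.
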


Theorem~\ref{thm:consistency} establishes that, for given $\ell$, NP-MOJO correctly estimates the total number and the locations of the change points detectable at lag $\ell$ \tcb{(including the no change case where $q_\ell = 0$)}. 
In particular, by Assumption~\ref{assum:change-point}, the change point estimators satisfy
\begin{align*}
\min_{\wh\theta \in \wh{\Theta}_\ell} \, \vert \wh{\theta} -\theta_{j} \vert
= O_P \left\{ (d^{(j)}_\ell)^{-1} \sqrt{G \log (n)} \r\} = o_P \{ \min( \theta_{j} - \theta_{j-1} , \theta_{j+1} - \theta_j) \} \text{ \ for all \ } j \in \mc I_\ell,
\end{align*}
i.e.\ the change point estimators converge to the true change point locations in the rescaled time.
Further, the rate of estimation is inversely proportional to the size of change $d^{(j)}_\ell$, such that the change points associated with larger $d^{(j)}_\ell$ are estimated with better accuracy.
Also making use of the energy-based distributional discrepancy, \cite{matteson2014} establish the consistency of their proposed E-Divisive method for detecting changes in (marginal) distribution under independence.
In addition to detection consistency, we further derive the rate of estimation for NP-MOJO which is applicable to detect changes in complex time series dependence besides those in marginal distribution, in broader situations permitting serial dependence.

\tcr{Compared to the optimal rate of estimation known for some parametric change point problems, the rate reported in Theorem~\ref{thm:consistency} is sub-optimal due to the bias of order $O(G^{-1/2})$ (see~\eqref{eq:t-stat-exp}) in $U$- and $V$-statistics in the presence of serial dependence. 
In the next theorem, we relax Assumptions~\ref{assum:functional-dep} and~\ref{assum:beta-mixing} to serial independence, and derive a faster rate of estimation for detecting change points in the marginal distribution (namely $\theta_j, \, j \in \mc I_0 = \{1, \ldots, q_0\}$) using NP-MOJO with lag $\ell = 0$.}
\tcr{\begin{theorem}
\label{thm:consistency-marginal}
Let Assumptions~\ref{assum:kernel} and~\ref{assum:change-point} hold, the latter with $\ell = 0$, and assume that $\{ X_t \}_{t=1}^n$ are independent over time, so that $q_0 = q$.
Set the threshold as $\zeta(n, G) = c_\zeta \sqrt{\log(n)/G}$ for some constant $c_\zeta > 0$. 
Then, there exists $c_0 > 0$ depending on $p$, such that as $n \to \infty$,
\begin{align*}
\p \left( \hat{q} = q,  { \max_{1 \le j \le q} \min_{\wh\theta \in \wh{\Theta}_0} } (d^{(j)}_0)^2 \vert \wh{\theta} -\theta_{j} \vert \leq c_0 \log (n)  \right) \to 1.    
\end{align*}
\end{theorem}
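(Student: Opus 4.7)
The plan is to first replay the argument of Theorem~\ref{thm:consistency} to get detection consistency and the coarser rate, then sharpen the localization using a telescoping fluctuation bound that is genuinely better under i.i.d.\ data. Under independence, Assumption~\ref{assum:beta-mixing} is trivially met with $\beta(m)=0$ for $m\ge 1$, and since $g^{(j)}\ne g^{(j-1)}$ forces a change in the marginal law we have $q_0=q$; boundedness of $h$ in Assumption~\ref{assum:kernel} removes any need for Assumption~\ref{assum:functional-dep}. A uniform maximal inequality for bounded $V$-statistics, together with the bias $|\E T_0(G,k)-\mc D_0(G,k)|=O(G^{-1})$ (arising solely from the diagonal terms under independence, hence strictly smaller than the stochastic fluctuation $\sqrt{\log n / G}$), yields $\sup_k|T_0(G,k)-\mc D_0(G,k)|=O_P(\sqrt{\log n / G})$. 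With $\zeta(n,G)=c_\zeta \sqrt{\log n / G}$ this gives $\wh q_0=q$ and the preliminary rate $|\wh\theta_j-\theta_j|=O_P(\sqrt{G\log n}/d^{(j)}_0)$ for the estimator $\wh\theta_j\in\wh\Theta_0$ closest to each $\theta_j$.

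Next I would isolate the drift. A direct computation from~\eqref{eq:pop-test-stat} gives, for $v=\wh\theta_j-\theta_j$ with $|v|\le \eta G$ and no other change point inside the window (guaranteed by the preliminary rate and Assumption~\ref{assum:change-point}(i)),
\begin{equation*}
\mc D_0(G,\theta_j)-\mc D_0(G,\theta_j+v)\;\ge\; c_1\,\frac{d^{(j)}_0\,|v|}{G}.
\end{equation*}
The crux is then a sharp local fluctuation inequality: uniformly over $|v|$ up to the preliminary bound,
\begin{equation*}
\Bigl|\{T_0(G,\theta_j+v)-\E T_0(G,\theta_j+v)\}-\{T_0(G,\theta_j)-\E T_0(G,\theta_j)\}\Bigr|\;\le\; C_2\,\frac{\sqrt{|v|\log n}}{G}.
\end{equation*}
To prove this I would invoke the Hoeffding decomposition $T_0(G,k)=\E T_0(G,k)+L(k)+Q(k)$ into a linear influence-function part and a degenerate second-order part. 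A single unit shift from $k$ to $k+1$ swaps only one observation in each of the left and right windows; the $O(G)$ kernel pairs affected all share the swap variable, so averaging against the other $G-1$ independent observations keeps the variance of a single increment of $L$ at $O(G^{-2})$ (rather than the naive $O(G^{-1})$), and only $O(v)$ distinct swap variables appear over $v$ consecutive increments, so cross-covariances between them are $O(G^{-3})$. The partial sum $\sum_{i=0}^{|v|-1}\{L(\theta_j+i+1)-L(\theta_j+i)\}$ thus has variance $O(|v|/G^2)$, and a Bernstein inequality together with a union bound over $v$ yields the displayed rate; the degenerate $Q$-part contributes strictly smaller-order fluctuations after the same telescoping.

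Combining the drift bound with the fluctuation bound, the local-maximum inequality $T_0(G,\wh\theta_j)\ge T_0(G,\theta_j)$, and the $O(G^{-1})$ bias, gives
\begin{equation*}
c_1\,\frac{d^{(j)}_0|v|}{G}\;\le\; C_2\,\frac{\sqrt{|v|\log n}}{G}+O(G^{-1}),
\end{equation*}
which, by Assumption~\ref{assum:change-point}(ii), rearranges to $(d^{(j)}_0)^2|v|\le c_0\log n$. The main obstacle is the sharp local fluctuation bound: handling $T_0(G,\theta_j+v)-\E$ and $T_0(G,\theta_j)-\E$ separately gives only $O(\sqrt{\log n / G})$, which reproduces the coarser rate of Theorem~\ref{thm:consistency}. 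Extracting the extra $\sqrt{|v|/G}$ factor requires careful bookkeeping of which kernel evaluations actually change as $k$ moves, and relies on the i.i.d.\ structure to keep the per-increment variance at $O(G^{-2})$---a reduction that fails in the dependent setting of Theorem~\ref{thm:consistency}, where each increment inherits an additional contribution from serial covariances among the kernel pairs.
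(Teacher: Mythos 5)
Your proposal is correct in outline and isolates exactly the right key lemma: the localized fluctuation bound
\begin{equation*}
\bigl|\{T_0(G,\theta_j+v)-\E T_0(G,\theta_j+v)\}-\{T_0(G,\theta_j)-\E T_0(G,\theta_j)\}\bigr| \;\lesssim\; \frac{\sqrt{|v|\log n}}{G},
\end{equation*}
balanced against the drift of order $d^{(j)}_0|v|/G$, which is precisely how the paper obtains the $\log n$ rate. Where you differ is in how this lemma is proved. The paper does not pass through a Hoeffding decomposition into linear and degenerate parts at all: it algebraically regroups $G^2\{T_0(G,k)-T_0(G,\theta_j)\}$ into nine complete one-sample $V$-statistics and two-sample $U$-statistics whose index sets are exactly the boundary blocks of size $|v|$ (or $G$), each carrying a scaling factor such as $|v|^2/G^2$ or $(G-|v|)|v|/G^2$, and then applies Hoeffding's classical exponential inequality for bounded $U$/$V$-statistics to each term, obtaining tails of the form $\exp(-C_hG^2|v|^{-1}z^2)$; a union bound over $j$ and $k$ finishes it. Your route --- per-increment variance $O(G^{-2})$ for the H\'ajek projection, telescoping to variance $O(|v|/G^2)$, then Bernstein plus a union bound over $v$ --- gives the same sub-Gaussian parameter $\sqrt{|v|}/G$ and is sound for the linear part, but the one place that needs shoring up is your closing remark that the degenerate part ``contributes strictly smaller-order fluctuations after the same telescoping'': a variance computation alone does not yield the uniform high-probability statement you need, and exponential concentration for degenerate $U$-statistics (Arcones--Gin\'e type) is considerably more delicate than Bernstein. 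The paper's regrouping sidesteps this entirely, since Hoeffding's inequality for bounded $U$-statistics applies regardless of degeneracy; if you adopt the same regrouping for your second-order term, your argument closes. The remaining ingredients --- the drift lower bound, the $O(G^{-1})$ bias under independence, the coarse localization from Theorem~\ref{thm:consistency}, and the final rearrangement to $(d^{(j)}_0)^2|v|\le c_0\log n$ --- match the paper's proof.
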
}

\subsection{Multi-lag extension of NP-MOJO}
\label{sec:multi}

In this section, we address the problem of combining the results of the NP-MOJO procedure when it is applied with multiple lags.
Let $\mc L \subset \mathbb{N}_0 = \{ 0, 1, \ldots \}$ denote a (finite) set of non-negative integers.
Recall that given $\ell \in \mc L$, NP-MOJO returns a set of change points estimators $\wh{\Theta}_\ell$. 
Denote the union of change point estimators over all lags $\mc L$ by $\widetilde{\Theta} = \bigcup_{\ell \in \mc L} \wh{\Theta}_\ell = \{\wt\theta_j, \, 1 \le j \le Q: \, \wt\theta_1 < \ldots, < \wt\theta_Q \}$, and denote by $\mathbb{T}(\wt\theta) = \max_{\ell \in \mc L} T_\ell(G, \wt\theta)$ the maximum detector statistic at $\wt\theta$ across all $\ell \in \mc L$. % \cdot \mathbb{I}{\{ T_\ell(G, \wt\theta) > \zeta_\ell(n, G) \}}
We propose to find a set of the final change point estimators $\wh{\Theta} \subset \wt{\Theta}$ by taking the following steps; we refer to this procedure as multi-lag NP-MOJO.

\begin{enumerate}[label = {Step~\arabic*.}]
\setcounter{enumi}{-1}
\item Set $\wh{\Theta} = \emptyset$ and select a constant $c \in (0, 2]$.

\item Set $\wt{\Theta}_1 = \wt{\Theta}$ and $m = 1$.
Iterate Steps~2--4 for $m = 1, 2, \ldots$, while $\wt{\Theta}_{m } \ne \emptyset$.

\item Let $\wt{\theta}_m = \min \, \widetilde{\Theta}_{m}$ and identify $\mc C_m= \{ \wt{\theta} \in \wt{\Theta}_{m} : \, \wt{\theta} - \wt{\theta}_m  < cG \}$.

\item Identify
$\wh\theta_m = \argmax_{\wt\theta \in \mc C_m} \mathbb{T}(\wt\theta)$; if there is a tie, we arbitrarily break it. 

\item Add $\wh{\theta}_m$ to $\wh{\Theta}$ and update $m \leftarrow m + 1$ and $\widetilde{\Theta}_{m} = \widetilde{\Theta}_{m-1} \setminus \mc C_{m-1}$.
\end{enumerate}

At iteration $m$ of the multi-lag NP-MOJO, Step~2 identifies the minimal element from the current set of candidate change point estimators $\wt{\Theta}_m$, and a cluster of estimators $\mc C_m$ whose elements are expected to detect the identical change points from multiple lags.
Then, Step~3 finds an estimator $\wh\theta \in \mc C_m$, which is associated with the largest detector statistic at some lag, and it is added to the set of final estimators. 
This choice is motivated by Theorem~\ref{thm:consistency}, which shows each $\theta_j$ is estimated with better accuracy at the lag associated with the largest change in the lagged dependence (measured by $d^{(j)}_\ell$).
Iterating these steps until all the elements of $\wt\Theta$ are either added to $\wh{\Theta}$ or discarded, we obtain the set of final change point estimators.

We define a subset of $\mc L$ containing the lags at which the $j$-th change point is detectable, as $\mc L^{(j)}  = \{ \ell \in \mc L : \, d^{(j)}_\ell \neq 0 \}$.
Re-visiting Example~\ref{ex:time-series}, when we set $\mc L = \{ 0, 1\}$, it follows that $\mc L^{(1)} = \{ 0, 1 \}$ and $\mc L^{(2)} = \{ 1 \}$.
To establish the consistency of the multi-lag NP-MOJO, we formally assume that all changes points are detectable at some lag $\ell \in \mc L$.
\begin{assumption}\label{assum:lag-assumption}
For $\mc L \subset \mathbb{N}_0$ with $L = \vert \mc L \vert < \infty$, we have
$\cup_{\ell \in \mc L} \mc I_\ell = \{1, \ldots, q\}$.
Equivalently, $\mc L^{(j)} \neq \emptyset$ for all $j = 1, \ldots, q$.
\end{assumption}

Under Assumptions~\ref{assum:functional-dep}--\ref{assum:lag-assumption}, the consistency of the multi-lag NP-MOJO procedure is largely a consequence of Theorem~\ref{thm:consistency}.
Assumption~\ref{assum:change-point}~\ref{assum:change-size} requires that at any lag $\ell \in \mc L$ and a given change point $\theta_j$, we have either $j \in \mc I_\ell$ with $d^{(j)}_\ell$ large enough (in the sense that $\sqrt{G/ \log (n)} d^{(j)}_\ell \to \infty$), or $j \notin \mc I_\ell$ such that $d^{(j)}_\ell = 0$.
Such a dyadic classification of the change points rules out the possibility that for some $j$, we have $d^{(j)}_\ell > 0$ but $d^{(j)}_\ell = O\{\sqrt{\log (n) / G} \}$, in which case $\theta_j$ may escape detection by NP-MOJO at lag $\ell$.
We therefore consider the following alternative:
\begin{assumption}
\label{assum:change-point2}
\begin{enumerate}[label = (\roman*)]
\item \label{assum:min-space2} % $G = G_n$ satisfies 
$G^{-1}\log(n) \to 0$ as $n \to \infty$ while $\min_{0 \leq j \leq q} ( \theta_{j+1} - \theta_{j} ) \geq 4G$.
\item \label{assum:change-size2} $\sqrt{G/ \log (n)} \min_{1 \le j \le q} \max_{\ell \in \mc L^{(j)}} d^{(j)}_\ell  \to \infty$.
\end{enumerate}
\end{assumption}
Compared to Assumption~\ref{assum:change-point}, Assumption~\ref{assum:change-point2} requires that the change points are further apart from one another relative to $G$ by the multiplicative factor of two.
At the same time, the latter only requires that for each $j = 1, \ldots, q$, there exists {\it at least one} lag $\ell \in \mc L$ at which $d^{(j)}_\ell$ is large enough to guarantee the detection of $\theta_j$ by NP-MOJO with large probability. 
Theorem~\ref{thm:multilag-consistent} establishes the consistency of multi-lag NP-MOJO under either Assumption~\ref{assum:change-point} or~\ref{assum:change-point2}.

\begin{theorem}\label{thm:multilag-consistent}
Suppose that Assumptions~\ref{assum:functional-dep}, \ref{assum:beta-mixing}, \ref{assum:kernel} and~\ref{assum:lag-assumption} hold and at each $\ell \in \mc L$, we set $\zeta_\ell(n, G) = c_{\zeta, \ell} \sqrt{\log(n)/G}$ with some constants $c_{\zeta, \ell} > 0$.
Let $\wh{\Theta} = \{\wh\theta_j, \, 1 \le j \le \wh q: \, \wh\theta_1 < \ldots < \wh\theta_{\wh q}\}$ denote the set of estimators returned by multi-lag NP-MOJO with tuning parameter $c$.
\begin{enumerate}[label = (\roman*)]
\item \label{thm:multilag1} If Assumption~\ref{assum:change-point} holds for all $\ell \in \mc L$ and $c = 2\eta$ with $\eta \in (0, 1/2]$, then with $c_0$ as in Theorem~\ref{thm:consistency}, \tcr{depending only on $C_F$, $C_X$, $\gamma_1$, $C_\beta$, $\gamma_2$ and $p$},
\begin{align*}
\p \left(\wh{q} = q , \, \max_{1 \leq j \leq q} \max_{\ell \in \mc L^{(j)}} d^{(j)}_\ell \left\vert \wh{\theta}_{j} -\theta_{j} \right\vert \leq c_0 \sqrt{G \log (n)}   \right) \to 1 \text{ \ as \ } n \to \infty.   
\end{align*}

\item \label{thm:multilag2} If Assumption~\ref{assum:change-point2} holds and $c = 2$, then the conclusion of~\ref{thm:multilag1} holds.
\end{enumerate}
\end{theorem}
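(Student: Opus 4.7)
The plan is to reduce the analysis of multi-lag NP-MOJO to Theorem~\ref{thm:consistency} via a cluster-and-argmax argument, carried out on the intersection of the high-probability events delivered by the single-lag result at each $\ell \in \mc L$ (maintained by a union bound since $|\mc L|$ is finite). For part~\ref{thm:multilag1}, Theorem~\ref{thm:consistency} applies at every $\ell \in \mc L$, so $\wh{\Theta}_\ell$ contains exactly $q_\ell$ estimators, one per detectable change point, with $|\wt\theta^{(\ell)}_j - \theta_j| \le c_0 \sqrt{G\log(n)}/d^{(j)}_\ell = o(G)$ for each $j \in \mc I_\ell$. For part~\ref{thm:multilag2}, Theorem~\ref{thm:consistency} is invoked at $\ell^*_j \in \argmax_{\ell \in \mc L^{(j)}} d^{(j)}_\ell$ for each $j$ (where Assumption~\ref{assum:change-point}~\ref{assum:change-size} follows from Assumption~\ref{assum:change-point2}~\ref{assum:change-size2}), and at the remaining lags I supplement with a separate localisation statement showing that every element of $\wh{\Theta}_\ell$ produced by~\eqref{eq:mosum:est} must lie strictly less than $G$ away from some true change point, which follows from the uniform bound on $|T_\ell - \mc D_\ell|$ and the quadratic decay of $\mc D_\ell$ away from $\Theta$.

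Next I verify the clustering step. Any two estimators in $\wt\Theta$ that (by Step 1) lie near distinct $\theta_{j_1} < \theta_{j_2}$ are separated by more than $cG$: in part~\ref{thm:multilag1} via $c = 2\eta \le 1 < 2$ and Assumption~\ref{assum:change-point}~\ref{assum:min-space}, and in part~\ref{thm:multilag2} via $c = 2$ together with the enlarged spacing $\ge 4G$ from Assumption~\ref{assum:change-point2}~\ref{assum:min-space2}. Conversely, two estimators within $o(G)$ (or, in the weak-lag case of part~\ref{thm:multilag2}, within strictly less than $G$) of the same $\theta_j$ differ by less than $cG$. Hence each cluster $\mc C_m$ produced in Step~2 corresponds to a unique $\theta_j$, and Assumption~\ref{assum:lag-assumption} ensures every $\theta_j$ is represented, giving $\wh q = q$.

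It remains to bound $|\wh\theta_j - \theta_j|$ within the cluster $\mc C_m$ corresponding to $\theta_j$. Single-lag consistency at $\ell^*_j$ delivers $\wt\theta^{(\ell^*_j)}_j \in \mc C_m$ with $|\wt\theta^{(\ell^*_j)}_j - \theta_j| \le c_0\sqrt{G\log(n)}/d^{(j)}_{\ell^*_j}$. Combining the tent-function form of $\mc D_{\ell^*_j}(G,\cdot)$ from~\eqref{eq:pop-test-stat} with the uniform bound $\max_k |T_\ell(G,k) - \mc D_\ell(G,k)| = O_P(\sqrt{\log(n)/G})$ established en route to Theorem~\ref{thm:consistency} yields $T_{\ell^*_j}(G, \wt\theta^{(\ell^*_j)}_j) \ge d^{(j)}_{\ell^*_j} - C\sqrt{\log(n)/G}$ for some constant $C$. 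By the argmax property of $\wh\theta_j$, $\mathbb{T}(\wh\theta_j) \ge d^{(j)}_{\ell^*_j} - C\sqrt{\log(n)/G}$; writing $\mathbb{T}(\wh\theta_j) = T_{\ell^{**}}(G,\wh\theta_j)$ and bounding $\mc D_{\ell^{**}}(G,k) \le d^{(j)}_{\ell^{**}}(1 - |k-\theta_j|/G)^2 \le d^{(j)}_{\ell^*_j}(1 - |k-\theta_j|/G)^2$, the elementary inequality $1 - (1-x)^2 \ge x$ on $[0,1]$ gives $d^{(j)}_{\ell^*_j}|\wh\theta_j - \theta_j| \le c_0 \sqrt{G\log(n)}$, which is the claimed rate since $\max_{\ell \in \mc L^{(j)}} d^{(j)}_\ell = d^{(j)}_{\ell^*_j}$.

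The main obstacle is the weak-lag analysis in part~\ref{thm:multilag2}: at lags $\ell$ where $d^{(j)}_\ell$ is too small for Theorem~\ref{thm:consistency} to apply, one still needs to control the locations of threshold-exceeding local maxima of $T_\ell(G,\cdot)$, in order to rule out spurious clusters. This hinges on choosing $c_{\zeta,\ell}$ large enough to suppress noise-driven maxima via the uniform $|T_\ell - \mc D_\ell|$ bound, combined with the quadratic peak shape of $\mc D_\ell$ around each $\theta_j$; the enlarged spacing $\ge 4G$ in Assumption~\ref{assum:change-point2}~\ref{assum:min-space2} is precisely the slack required to absorb the larger localisation error that arises at weak lags.
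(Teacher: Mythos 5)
Your proposal is correct and follows the same overall architecture as the paper's proof: condition on the intersection $\cap_{\ell \in \mc L} \mc E_{\ell, n}$ of the single-lag events, import detection and localisation from Theorem~\ref{thm:consistency} (at every lag for part~\ref{thm:multilag1}, only at the favourable lag $\ell^*_j$ for part~\ref{thm:multilag2}, supplemented by the weaker "within $G - \ell$ of some $\theta_j$" statement at weak lags, which is exactly the paper's condition (a$'$)), and verify via the same triangle inequalities against the $2G$ resp.\ $4G$ spacing that each cluster $\mc C_m$ collects precisely the estimators of one change point. The one place where you genuinely diverge is the final localisation step. The paper shows that $T_\ell(G,\wt\theta)/d^{(j)}_\ell = 1 + o(1)$ for every cluster member detected at a lag in $\mc L^{(j)}$, concludes that Step~3's argmax must select an estimator detected at $\ell^{(j)} = \argmax_{\ell \in \mc L^{(j)}} d^{(j)}_\ell$, and then simply inherits the rate from Theorem~\ref{thm:consistency} at that lag. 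You instead sandwich the selected point directly: the argmax property gives $\mathbb{T}(\wh\theta_j) \ge d^{(j)}_{\ell^*_j} - C\sqrt{\log(n)/G}$, the uniform deviation bound together with $d^{(j)}_{\ell} \le d^{(j)}_{\ell^*_j}$ for all $\ell \in \mc L$ gives the matching upper bound $d^{(j)}_{\ell^*_j}(1 - \vert\wh\theta_j - \theta_j\vert/(G-\ell))^2 + C'\sqrt{\log(n)/G}$, and $1-(1-x)^2 \ge x$ closes the argument (this is the same computation as Step~3 of Theorem~\ref{thm:consistency}, lifted to the lag-maximised statistic). Your route is slightly more robust in that it never needs to identify \emph{which} lag the selected estimator was detected at, whereas the paper's route gives the extra structural information that the winner comes from the favourable lag; both yield the stated rate. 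Note only that your argument, like the paper's, relies on the maximising lag $\ell^{**}$ contributing a single tent term to $\mc D_{\ell^{**}}(G,\cdot)$ near $\theta_j$, which is guaranteed by the spacing assumptions and should be stated explicitly.
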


Under Assumption~\ref{assum:change-point2}~\ref{assum:change-size2}, which is weaker than Assumption~\ref{assum:change-point}~\ref{assum:change-size}, we may encounter a situation where $\sqrt{G/\log(n)} d^{(j)}_\ell = O(1)$ while $d^{(j)}_\ell > 0$ at some lag $\ell \in \mc L$.
Then, we cannot guarantee that such $\theta_j$ is detected by NP-MOJO at lag $\ell$ and, even so, we can only show that its estimator $\wt\theta \in \wt{\Theta}_\ell$ satisfies $\vert \wt\theta - \theta_j \vert = O(G)$. 
This requires setting the tuning parameter $c$ maximally for the clustering in Step~2 of multi-lag NP-MOJO, see Theorem~\ref{thm:multilag-consistent}~\ref{thm:multilag2}.
At the same time, there exists a lag well-suited for the localisation of each change point and Step~3 identifies an estimator detected at such lag, and the final estimator inherits the rate of estimation attained at the favourable lag.

\subsection{Threshold selection via dependent wild bootstrap}
\label{sec:bootstrap}

Theorem~\ref{thm:consistency} gives the choice of the threshold $\zeta_\ell (n,G) = c_\zeta \sqrt{\log(n)/G}$ which guarantees the consistency of NP-MOJO in multiple change point estimation.
The choice of $c_\zeta$ influences the finite sample performance of NP-MOJO but it depends on many unknown quantities involved in specifying the degree of serial dependence in $\{X_t\}_{t = 1}^n$ (see Assumptions~\ref{assum:functional-dep} and~\ref{assum:beta-mixing}), which makes the theoretical choice of little practical use.
Resampling is popularly adopted for the calibration of change point detection methods including threshold selection.
However, due to the presence of serial dependence, permutation-based approaches such as that adopted in \cite{matteson2014} or sample splitting adopted in \cite{padilla2021optimal} are inappropriate.

We propose to adopt the dependent wild bootstrap procedure proposed in \cite{leucht2013dependent}, in order to approximate the quantiles of $\max_{G \leq k \leq  n-G} T_\ell(G,k) $ in the absence of any change point, from which we select $\zeta_\ell(n, G)$. 
Let $\{ W_t^{[r]} \}_{t=1}^{n-G}$ denote a bootstrap sequence generated as a Gaussian AR($1$) process with $\Var (W_t^{[r]}) = 1$ and the AR coefficient $\exp(-1/b_n)$, where the sequence $\{ b_n\}$ is chosen such that $b_n = o (n)$ and $\lim_{n \to \infty} b_{n} = \infty$.
We construct bootstrap replicates using $\{ W_t^{[r]} \}_{t=1}^{n-G}$ as $T^{[r]}_{\ell} = \max_{G \leq k \leq n-G}  T^{[r]}_\ell(G, k)$, where

\begin{align*}
\begin{split}
T^{[r]}_\ell(G,k) &= \frac{1}{(G-\ell)^2} \left \{ \sum_{s,t = k - G+1}^{k - \ell} \bar{W}^{[r]}_{s,k} \bar{W}^{[r]}_{t,k} h(Y_s, Y_t) + \sum_{s,t = k+1}^{k +G - \ell} \bar{W}^{[r]}_{s-G,k}  \bar{W}^{[r]}_{t-G,k} h(Y_s, Y_t) \right. \\
&  \qquad \qquad \qquad \qquad  \left. -2 \sum_{s = k - G+1}^{k - \ell} \sum_{t = k +1}^{k +G - \ell} \bar{W}^{[r]}_{s,k} \bar{W}^{[r]}_{t-G,k} h(Y_s, Y_t)  \right \},
\end{split}
\end{align*}
with $\bar{W}^{[r]}_{t,k} = W^{[r]}_t - (G-\ell)^{-1} \sum_{u = {k} -G+1}^{{ k}-\ell} W^{[r]}_u$.
Independently generating $\{W^{[r]}_t\}_{t = 1}^{n - G}$ for $r = 1, \ldots, R$ ($R$ denoting the number of bootstrap replications), we store $T^{[r]}_\ell$ and select the threshold as $\zeta_\ell (n,G) = q_{1-\alpha} ( \{T^{[r]}_{\ell} \}_{r=1}^{R} )$, the $(1-\alpha)$-quantile of $\{T^{[r]}_{\ell} \}_{r=1}^{R}$ for the chosen level $\alpha \in (0, 1]$. 
Additionally, we can compute the importance score for each % change location estimator 
$\wh{\theta} \in \wh{\Theta}_\ell$ as 
\begin{align}\label{eq:cpt-score}
s(\wh{\theta}) = \frac{\l\vert\left\{ 1 \le r \le R : \, T_\ell (G, \wh{\theta}) \ge T^{[r]}_{\ell,r} \right\} \r\vert }{R}. 
\end{align}
 Taking a value between $0$ and $1$, the larger $s(\wh\theta)$ is, the more likely that there exists a change point close to $\wh\theta$ empirically.
The bootstrap procedure generalises to the multi-lag NP-MOJO straightforwardly.
In practice, we observe that setting $\wh{\theta}_j = \argmax_{\tilde{\theta} \in \mc C_j} s(\tilde{\theta})$ (with some misuse of the notation, $s(\cdot)$ is computed at the relevant lag for each $\wt\theta$) works well in Step~3 of multi-lag NP-MOJO. This is attributed to the fact that this score inherently takes into account the varying scale of the detector statistics at multiple lags and `standardises' the importance of each estimator.
In all numerical experiments, our implementation of multi-lag NP-MOJO is based on this choice of $\wh\theta_j$. 
We provide the algorithmic descriptions of NP-MOJO and its multi-lag extension in Algorithms~\ref{algo-single-lag} and~\ref{algo-multi-lag} in Appendix~\ref{sec:algo}.

\section{Implementation of NP-MOJO}
\label{sec:tuning}

In this section, we discuss the computational aspects of NP-MOJO and provide recommendations for the choice of tuning parameters based on extensive numerical results. \tcb{Numerical studies analysing NP-MOJO's sensitivity to these tuning parameters can be found in Appendix~\ref{sec:sim-study}.}
\smallskip

\textbf{Computational complexity:} owing to the moving sum-based approach, the cost of sequentially computing $T_{\ell}(G, k)$ from $T_{\ell}(G, k - 1)$ is $O(G)$, giving the overall cost of computing $T_{\ell}(G, k), \, G \le k \le n - G$, as $O (nG)$. 
Exact details of the sequential update are given in Appendix~\ref{sec:cost}. 
The bootstrap procedure described in Section~\ref{sec:bootstrap} is performed once per lag for simultaneously detecting multiple change points, in contrast with E-Divisive \citep{matteson2014} that requires the permutation-based testing to be performed for detecting each change point. 
With $R$ bootstrap replications, the total computational cost is $O (\vert \mc L \vert RnG)$ for multi-lag NP-MOJO using the set of lags $\mc L$ and bootstrapping, as opposed to $O (Rq n^2)$ for E-Divisive. \tcr{Furthermore, the bootstrap procedure can be parallelised in a straightforward manner, which we include as an option in the implementation of the method.}

\tcr{We ran simulations to compare the computational speed of the competing nonparametric methods. We simulate realisations under the change in mean model~\eqref{eq:mean:model}, with increasing values of sample size $n$ and the number of equispaced change points $q$ ($(n, q) \in \{ (100,1), (500,2), (1000,3), (2000,5), (5000,10), (10000, 20) \}$). We use the same settings for each method as in the main simulation study, using the parallelised version of multi-lag NP-MOJO when $n \geq 2000$, and compute the average run time over 100 realisations. The results are displayed in Figure~\ref{fig:speed}. The fastest method by far is cpt.np, followed by KCPA and NP-MOJO. E-Divisive and NWBS are noticeably slower than the other methods. In particular, when $n=10000$, the average running time of cpt.np is $0.17$ seconds, KCPA is $46.26$ seconds, NP-MOJO is $2.31$ minutes, NWBS is $30.06$ minutes, and E-Divisive is $70.37$ minutes. Also, we observe that KCPA's running time is increasing at a faster rate than NP-MOJO's, and may exceed the running time of NP-MOJO for larger values of $n$.}

\begin{figure}[h!t!]
\centering
\includegraphics[width = 0.5\textwidth]{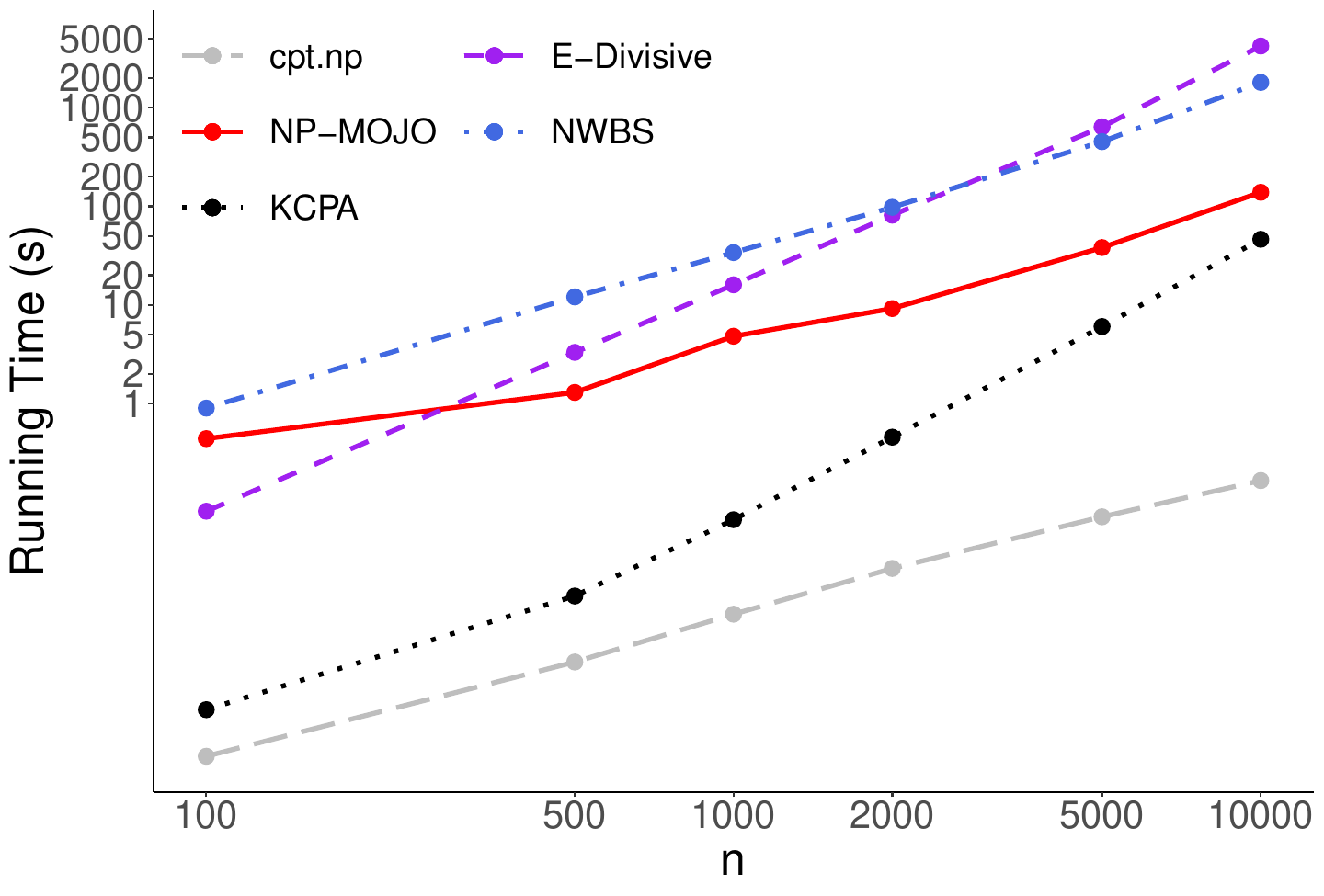}
\caption{Running time comparisons between the competing nonparametric methods. Both
axes are in the log scale.}
\label{fig:speed}
\end{figure}
{Kernel function:} \tcb{as with any kernel-based approach, NP-MOJO's performance will vary with the choice of kernel, and a kernel that works well for one type of change point may not be the best for another type of change point. }
Based on empirical performance and versatility to a wide range of change point scenarios (see Appendix~\ref{sec:sim-study:kernel}), we recommend the use of the kernel function $h_2$ in Lemma~\ref{lemma:weight-int-identities}~\ref{lemma:weight-int2}. The parameter $\delta$ is set using the `median trick', a common heuristic used in kernel-based methods \citep{li2019scan}. Specifically, we set $\delta$ to be a half the median of all $\Vert Y_s - Y_t \Vert^2$ involved in the calculation of $\tstatkl$. For $p$-variate i.i.d.\ Gaussian data with common variance~$\sigma^2$, this corresponds to $\delta \approx \sigma p$ as the dimension $p$ increases \citep{ramdas2015decreasing}. \tcb{As with the kernel $h_2$, the median trick can also be used when setting $\beta$ if the kernel $h_1$ is used.}

\textbf{Bandwidth:} due to the nonparametric nature of NP-MOJO, it is advised to use a larger bandwidth than that shown to work well for the moving sum procedure for univariate mean change detection \citep{eichinger2018}. 
In our simulation studies and data applications, we set $G=\lfloor n/6 \rfloor$. 

It is often found that using multiple bandwidths and merging the results improves the adaptivity of moving window-based procedures, such as the `bottom-up' merging proposed by \cite{messer2014} or the localised pruning of \citep{cho2020two}. We empirically explore the multiscale extension of the multi-lag NP-MOJO with bottom-up merging, see Appendix~\ref{sec:multiscale} for details of its implementation and Appendix~\ref{sec:sim-multiscale} for a proof of concept numerical study \tcr{involving multiscale change point scenarios.}
\tcb{We leave a theoretical investigation into the multiscale extension of NP-MOJO for future research.}

\textbf{Parameters for change point estimation:} we set $\eta = 0.4$ in~\eqref{eq:mosum:est} following the recommendation in \cite{meier2021mosum}. 
For multi-lag NP-MOJO, we set $c = 1$ for clustering the estimators from multiple lags, a choice that lies between those recommended in Theorem~\ref{thm:multilag-consistent}~\ref{thm:multilag1} and~\ref{thm:multilag2}, since we do not know whether Assumptions~\ref{assum:change-point} or~\ref{assum:change-point2} hold in practice.
\tcb{Appendices~\ref{sec:sim-study:eta} and~\ref{sec:sim-study:c} demonstrate that within a reasonable range, NP-MOJO is insensitive to the choices of $\eta$ and $c$.}
To further guard against spurious estimators, we only accept those $\wh\theta$ that lie in intervals of length greater than $\lfloor 0.02 G \rfloor$ where the corresponding $T_\ell(G, k)$ exceeds $\zeta_\ell (n, G)$.

\textbf{Parameters for the bootstrap procedure:} the choice of $b_{n}$ sets the level of dependence in the multiplier bootstrap sequences. \cite{leucht2013dependent} show that a necessary condition is that $\lim_{n \to \infty} (b_{n}^{-1} + b{_n} n^{-1}) =0$, giving a large freedom for choice of $b_{n}$. We recommend $b_{n} =  1.5 n^{1/3}$, which works well in practice. 
\tcb{Appendix~\ref{sec:sim-study:bn} demonstrates that within a reasonable range, NP-MOJO is insensitive to the choice of $b_n$. As for $\alpha$, its choice amounts to setting the level of significance in statistical testing.
This provides a more systematic alternative to the problem of model selection in multiple change point detection compared to others, such as those requiring the selection of a threshold that is known up to a rate (or a range of rates, see e.g.\ \citeauthor{misael2023change}, \citeyear{misael2023change}), or constants involved in the penalty of a penalised cost function \citep{arlot2019kernel}.}
In all numerical experiments, we use $\alpha = 0.1$ with $R = 499$ bootstrap replications.

\textbf{Set of lags $\mc L$:} \tcb{the flexibility of NP-MOJO in its ability to detect changes in dependence, comes at the price of having to select the set of lags $\mc L$.} The choice of $\mc L$ depends on the practitioner's interest and domain knowledge, a problem commonly faced by general-purpose change point detection methods, such as the choice of the quantile level in \cite{jula2022multiscale}, the parameter of interest in \cite{zhao2022segmenting} and the estimating equation in \cite{kirch2024data}.
For example, for monthly data, using $\mc L = \{0, 3, 12\}$ allows for detecting changes in the quarterly and yearly seasonality.
Even when the interest lies in detecting changes in the marginal distribution only, it helps to jointly consider multiple lags, since any marginal distributional change is likely to result in changes in the joint distribution of $(X_{t}, X_{t+\ell})$. \tcb{As we consider time series that exhibit short range dependence, we would expect that NP-MOJO will not have detection power at large lags.}
In simulations, we use $\mc L = \{0, 1, 2\}$ which works well not only for detecting changes in the mean and the second-order structure, but also for detecting changes in (non-linear) serial dependence and higher-order characteristics. \tcr{For a practical approach to lag selection, see Appendix~\ref{sec:lag-select} in the supplementary material, where we propose a semi-automatic method for choosing the set of lags $\mc L$ given some initial set $\tilde{\mc L}$. }

\section{Simulation study}
\label{sec:sim}

\tcr{We conduct extensive simulation studies with varying change point scenarios ($30$ scenarios where $q \ge 1$, $7$ with $q = 0$), sample sizes ($n \in \{ 500, 1000, 2000, 10000 \}$) and dimensions $p \in \{ 1,2, 5,10 \}$, and consider both evenly-spaced and multiscale change point settings}. We provide complete descriptions of the simulation studies in Appendix~\ref{sec:sim-study} where, for comparison, we consider not only nonparametric but also parametric data segmentation procedures well-suited to detect the types of changes in consideration, which include changes in the mean, second-order and higher-order moments and non-linear serial dependence.
Due to space constraint, here we focus on a selection of the results \tcr{in the evenly-spaced setting with $n = 1000$} comparing both single-lag and multi-lag NP-MOJO %with the set of lags $\mc L = \{0, 1, 2\}$ 
(denoted by NP-MOJO-$\ell$ and NP-MOJO-$\mc L$ respectively), with the nonparametric competitors: E-Divisive \citep{matteson2014}, NWBS \citep{padilla2021optimal}, KCPA \citep{celisse2018new, arlot2019kernel} and cpt.np \citep{haynes2017computationally}.
E-Divisive and KCPA are applicable to multivariate data segmentation whilst NWBS and cpt.np are not. The scenarios are: 
\begin{enumerate}[label=\customenum*]
\item \label{model-b5} $X_t = \sum_{j = 0}^3 \Sigma_j^{1/2} \mathbb{I}{\{\theta_j + 1 \le t \le \theta_{j + 1}\}} \cdot \vep_t$, where $\vep_t = (\vep_{1t}, \vep_{2t})^\top$ with $\vep_{it} \sim_{\text{i.i.d.}} t_5$, $(\theta_{1}, \theta_{2}, \theta_{3}) = (250, 500, 750)$, $\Sigma_0 = \Sigma_2 = \begin{psmallmatrix}1 & 0\\ 0 & 1\end{psmallmatrix}$ and $\Sigma_1 = \Sigma_3 = \begin{psmallmatrix}1 & 0.9\\ 0.9 & 1\end{psmallmatrix}$. 
\item \label{model-c1} $X_t = X^{(j)}_t = a_j  X^{(j)}_{t-1} + \vep_t$ for $\theta_j + 1 \le t \le \theta_{j + 1}$, where $q = 2$, $(\theta_{1}, \theta_{2}) = (333, 667)$ and $(a_0, a_1, a_2) = (-0.8, 0.8, -0.8)$.
\item \label{model-c3} $X_t = X^{(j)}_t = \sigma^{(j)}_t \vep_t$ with $(\sigma_t^{(j)})^2 = \omega_j + \alpha_j (X^{(j)}_{t - 1})^2 + \beta_j (\sigma^{(j)}_{t - 1})^2$ for $\theta_j + 1 \le t \le \theta_{j + 1}$, where $q = 1$, $\theta_1 = 500$, $(\omega_0, \alpha_0, \beta_0) = (0.01, 0.7, 0.2)$ and $(\omega_1, \alpha_1, \beta_1) = (0.01, 0.2, 0.7)$.
\item\label{model-d3} $X_t = 0.4X_{t-1} + \vep_t$ where $\vep_t \sim_{\text{i.i.d.}} \mc N(0, 0.5^2)$ for $t \leq \theta_1$ and $t \geq \theta_2 + 1$, and $ \vep_t \sim_{\text{i.i.d.}} \text{Exponential}(0.5) - 0.5$ for $\theta_1 + 1 \leq t \leq \theta_2$, with $q = 2$ and $(\theta_{1}, \theta_{2}) = (333, 667)$.
\end{enumerate}

\tcb{Additional simulations for differing sample sizes can be found in Appendix~\ref{sec:sim-study:n},} \tcr{and simulations with uneven spacing between neighbouring segments examining the performance of the multiscale version of multi-lag NP-MOJO are given in Appendix~\ref{sec:sim-multiscale}.}
The above scenarios consider changes in the covariance of bivariate, non-Gaussian random vectors in~\ref{model-b5}, changes in the autocorrelation (while the variance stays unchanged) in~\ref{model-c1}, a change in the parameters of an ARCH($1$, $1$) process in~\ref{model-c3}, and changes in higher moments of serially dependent observations in~\ref{model-d3}. 
Table~\ref{main-text-table} reports the distribution of the estimated number of change points and the average covering metric (CM) and V-measure (VM) over 1000 realisations. 
Taking values between $[0, 1]$, CM and VM close to $1$ indicates better accuracy in change point location estimation, see Appendix~\ref{sec:power} for their definitions and complete discussions of change point scenarios.

In the case of~\ref{model-c1}, we have $q_\ell = 0, \, \ell \ne 1$, while $q_1 = 2$, and thus we report $\wh q_\ell - q_\ell$ for the respective single-lag NP-MOJO-$\ell$.
Across all scenarios, NP-MOJO-$\mc L$ shows good detection and estimation accuracy and demonstrates the efficacy of considering multiple lags, see~\ref{model-c3} and~\ref{model-d3} in particular.
As the competitors are calibrated for the independent setting, they tend to either over- or under-detect the number of change points in the presence of serial dependence in~\ref{model-c1}, \ref{model-c3} and~\ref{model-d3}.
In Appendix~\ref{sec:power}, we compare NP-MOJO against change point methods proposed for time series data where it performs comparably to methods specifically calibrated for the change point scenarios considered. 

 \begingroup
 \setlength{\tabcolsep}{3pt}
 \setlength{\LTcapwidth}{\textwidth}
   {\small
 \begin{longtable}{c c ccccc cc}
 \caption{Distribution of the estimated number of change points and the average CM and VM over 1000 realisations. The modal value of $\wh q - q$ in each row is given in bold.
 Also, the best performance for each metric is underlined for each scenario. }
 \label{main-text-table}
 \endfirsthead
 \endhead
 \toprule	
 && \multicolumn{5}{c}{$\wh{q} - q$ / $\wh{q}_\ell - q_\ell$} &   &   \\ 
  Model       & Method      &  $\leq-2$     & $-1$        & $\mathbf{0}$  & $1$    & $\geq 2$    & CM & VM       \\ 
  \cmidrule(lr){1-2} \cmidrule(lr){3-7} \cmidrule(lr){8-9}
        \ref{model-b5}    &  NP-MOJO-$0$      & 0.000       & 0.001  &   \bf{0.997}   &   0.002   & 0.000            & \underline{0.974} & \underline{0.959}  \\
  &  NP-MOJO-$1$        & 0.005    & 0.121    & \bf{0.867}     & 0.007   &  0.000 & 0.931 & 0.927 \\
        &  NP-MOJO-$2$      &   0.006  & 0.103   & \bf{0.884}   &   0.007  &  0.000  & 0.935 & 0.929 \\  &  NP-MOJO-$\mc L$   &  0.000      &    0.001   & \underline{\textbf{0.999}}  & 0.000     &   0.000          &  0.973 & 0.958    \\ \cmidrule(lr){3-7} \cmidrule(lr){8-9}
        &  E-Divisive      &  \bf{0.670}      &  0.189   &  0.101  & 0.032   &  0.008         & 0.431 & 0.335 \\
        &  KCPA      &  0.322      &  0.000  &  {\bf{0.662}}     &   0.015        &  0.001 & 0.775 & 0.725 \\      
      \cmidrule(lr){1-2}    \cmidrule(lr){3-7} \cmidrule(lr){8-9} 
        \ref{model-c1}    &  NP-MOJO-$0$     &  --     &  -- & \bf{0.851}   & 0.140     &     0.009          &  -- & --  \\
 &  NP-MOJO-$1$        &  0.000   &   0.002  & \bf{0.956}  &        0.042 & 0.000      & {0.978} & {0.961}   \\
       &  NP-MOJO-$2$      & -- &   -- & \bf{0.836}  & 0.149        &  0.015              & --   & -- \\ 
        &  NP-MOJO-$\mc L$   &  0.000      &    0.002   & \underline{\textbf{0.986}}  & 0.012     &   0.000          &  \underline{0.980} & \underline{0.963}    \\ \cmidrule(lr){3-7} \cmidrule(lr){8-9}
       &  E-Divisive       & 0.001 &  0.001   & 0.012   & 0.035   &  \textbf{0.951}         & 0.685 & 0.686  \\
       &  KCPA      & \textbf{0.792} & 0.002   & 0.065   & 0.025   &  {0.116}         & 0.399 & 0.132 \\      
       &  NWBS      & 0.013     & {0.001}   & 0.007 &   0.015  &   \textbf{0.964}    & 0.398   & 0.558  \\
       &  cpt.np      &  {0.000}   & 0.000 &   0.002  &   0.003    & \textbf{0.995}   & 0.593 & 0.647 \\  
  \cmidrule(lr){1-2}  \cmidrule(lr){3-7} \cmidrule(lr){8-9} 
              \ref{model-c3}    &  NP-MOJO-$0$      &  --      &  0.409  & \textbf{0.533}    & 0.056     &     0.002          & 0.744  & 0.484  \\
 &  NP-MOJO-$1$       & --    &   0.236  &  {\textbf{0.682}}  &  0.081  &    0.001 & 0.819 &  0.633    \\
       &  NP-MOJO-$2$      & -- &  0.299  & \textbf{{0.626}}  & 0.073       &  0.002    &  {0.787} &{0.571} \\
       &  NP-MOJO-$\mc L$   &  --      &    0.210   & \underline{\textbf{0.727}}  & 0.062     &   0.001          &  \underline{0.823} & \underline{0.645}     \\ \cmidrule(lr){3-7} \cmidrule(lr){8-9}
       &  E-Divisive      & --       &  0.032   & 0.327   & 0.211   &  \textbf{0.430}         & 0.742 & 0.602   \\
       &  KCPA       &  --      & \textbf{0.418}   & 0.262 & 0.171     &   0.149       & 0.667  & 0.370\\      
       &  NWBS      & --     & \textbf{0.895}   & 0.048 &   0.020  &   0.037    & 0.525   & 0.069  \\
       &  cpt.np      & --     & 0.000   & 0.013 &     0.047   &  \textbf{0.940}   & 0.634 & 0.554  \\  
  \cmidrule(lr){1-2}  \cmidrule(lr){3-7} \cmidrule(lr){8-9}     
              \ref{model-d3}   &  NP-MOJO-$0$      &  0.003      & 0.139  & {\bf{0.809}}     & 0.049      &   0.000         & {0.899}  &  {0.872} \\
 &  NP-MOJO-$1$        &0.006     &  0.155   &  \bf{0.792}   & 0.047    & 0.000 & {0.892} &  {0.864}\\
       &  NP-MOJO-$2$      & 0.021     & 0.248   &  \bf{0.685}  &   0.045  & 0.001   &  0.848 & 0.819 \\ 
       &  NP-MOJO-$\mc L$   &  0.002      &    0.082   & \underline{\textbf{0.914}}  & 0.002     &   0.000          &  \underline{0.917} & \underline{0.884}    \\ \cmidrule(lr){3-7} \cmidrule(lr){8-9}
       &  E-Divisive       & 0.005       &  0.002   & 0.072   & 0.118   &  \textbf{0.803}         & 0.681 & 0.707  \\
       &  KCPA      &  0.441      & 0.012   & \bf{0.481} & 0.052     &   0.014       & 0.667  & 0.500 \\      
       &  NWBS      & 0.047     & 0.015   & 0.139 &   0.124   &   \bf{0.675}    & 0.680   & 0.676 \\
       &  cpt.np      & 0.000     & 0.000   & 0.045 &     0.055   &  \textbf{0.900}   & 0.726 & 0.756 \\  
       \bottomrule
 \end{longtable}}
 \endgroup

\section{Data applications}
\label{sec:data}

\subsection{California seismology measurements data set}
\label{sec:seis}

We analyse a data set from the High Resolution Seismic Network, operated by the Berkeley Seismological Laboratory. 
Ground motion sensor measurements were recorded in three mutually perpendicular directions at $13$ stations near Parkfield, California, USA for $740$ seconds from 2am on December 23rd 2004. The data has previously been analysed in \cite{xie2019asynchronous} and \cite{chen2020high}. \cite{chen2020high} pre-process the data by removing a linear trend and down-sampling, and the processed data is available in the \verb!ocd! \verb!R! package \citep{ocd2020}. 
According to the Northern California Earthquake Catalog, an earthquake of magnitude 1:47 Md hit near Atascadero, California (50 km away from Parkfield) at 02:09:54.01.

\begin{figure}[H]
\centering
\includegraphics[width =\textwidth]{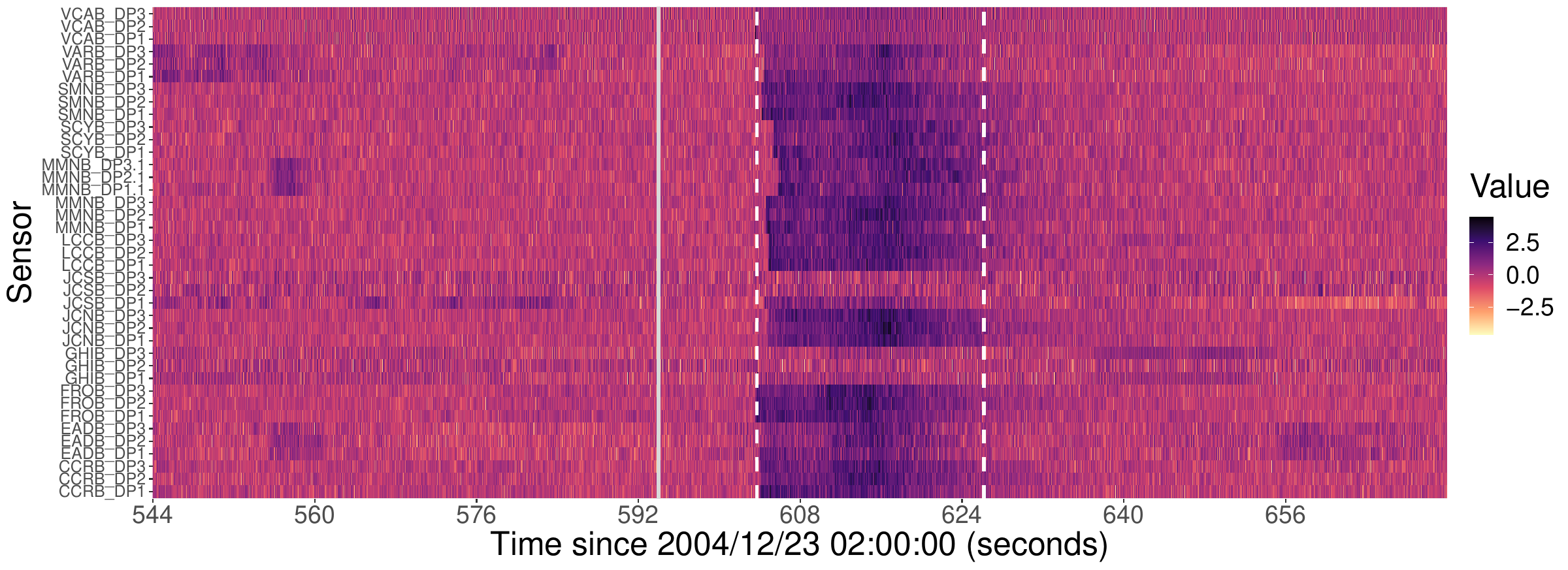}
\caption{Heat map of standardised sensor data. Change points detected by multi-lag NP-MOJO are shown in vertical dashed lines, and the time of the earthquake is given by solid vertical line.}
\label{parkfield-plot-2}
\end{figure}

We analyse time series of dimension $p = 39$ and length $n = 2000$ by taking a portion of the data set between $544$ and $672$ seconds after $2$am, which covers the time at which the earthquake occurred ($594$ seconds after).
We apply the multi-lag NP-MOJO with tuning parameters selected as in Section~\ref{sec:tuning}, using $G = 333$ and set of lags $\mc L = \{0, \ldots , 4\}$. 
We detect two changes at all lags; the first occurs at between 603.712 and 603.968 seconds after 2am and may be attributed to the earthquake.  
As noted in \cite{chen2020high}, P waves, which are the primary preliminary wave and arrive first after an earthquake, travel at up to $6$km/s in the Earth's crust.
This is consistent with the delay of approximately $9$ seconds between the occurrence of the earthquake and the first change point detected by multi-lag NP-MOJO.
We also note that performing online change point analysis, \cite{xie2019asynchronous} and \cite{chen2020high} report a change at 603.584 and 603.84 seconds after the earthquake, respectively.
The second change is detected at between 626.176 and 626.496 seconds after 2am.
It may correspond to the ending of the effect of the earthquake, as sensors return to `baseline' behaviour. 
Figure~\ref{parkfield-plot-2} plots the heat map of the data with each series standardised for ease of visualisation, along with the onset of the earthquake and the two change points detected by the multi-lag NP-MOJO.
It suggests, amongst other possible distributional changes, the time series undergoes mean shifts as found in \cite{chen2020high}.
We also examine the sample correlations computed on each of the three segments, see Figure~\ref{parkfield-cors} where the data exhibit a greater degree of correlation in segment $2$ compared to the other two segments.
Recalling that each station is equipped with three sensors, we notice that pairwise correlations from the sensors located at the same stations undergo greater changes in correlations. 
A similar observation is made about the sensors located at nearby stations.

\begin{figure}[]
\centering
\includegraphics[width =\textwidth]{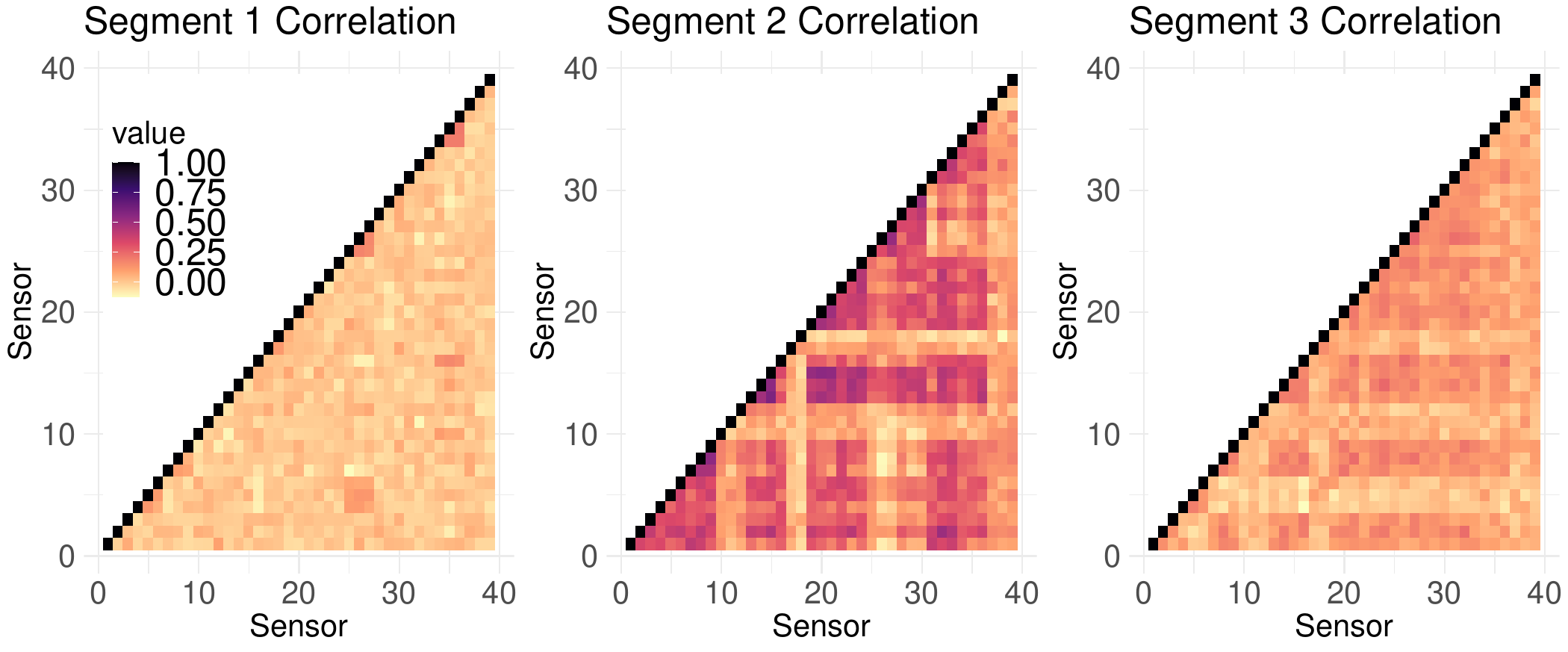}
\caption{Sample correlations from the three segments defined by the change point estimators.}
\label{parkfield-cors}
\end{figure}

\subsection{US recession data}
\label{sec:recession}

We analyse the US recession indicator data set.
Recorded quarterly between $1855$ and $2021$ ($n = 667$), $X_t$ is recorded as a $1$ if any month in the quarter is in a recession (as identified by the Business Cycle Dating Committee of the National Bureau of Economic Research), and $0$ otherwise.
The data has previously been examined for change points under piecewise stationary autoregressive models for integer-valued time series in \cite{hudecova2013structural} and \cite{diop2021piecewise}. 
We apply the multi-lag NP-MOJO with $G = 111$ and $\mc L = \{0, \ldots, 4\}$. 
All tuning parameters are set as recommended in Section~\ref{sec:tuning} with one exception, $\delta$ for the kernel $h_2$.
We select $\delta = 1$ for lag $0$ and $2$ otherwise, since pairwise distances for binary data are either $0$ or $1$ when $\ell  = 0$ such that the median heuristic would not work as desired. 

At all lags, we detect a single change point located between 1933:Q1 and 1938:Q2. Multi-lag NP-MOJO estimates the change point at 1933:Q1, which is comparable to the previous analyses: \cite{hudecova2013structural} report a change at 1933:Q1 and \cite{diop2021piecewise} at 1932:Q4. 
The change coincides with the ending of the Great Depression and beginning of World War II. 
The left panel of Figure~\ref{recession-plot-2} plots the detected change along with the sample average of $X_t$ over the two segments (superimposed on $\{X_t\}_{t = 1}^n$), showing that the frequency of recession is substantially lower after the change. 
The right panel plots the detector statistics $T_\ell(G, k)$ at lags $\ell \in \mc L$, divided by the respective threshold $\zeta_\ell(n, G)$ obtained from the bootstrap procedure. 
The thus-standardised $T_4(G, k)$, shown in solid line, displays the change point with the most clarity, attaining the largest value over the widest interval above the threshold (standardised to be one). 
At lag $4$, the detector statistic has the interpretation of measuring any discrepancy in the joint distribution of the recession indicator series and its yearly lagged values.

\begin{figure}[]
\centering
\includegraphics[width =\textwidth]{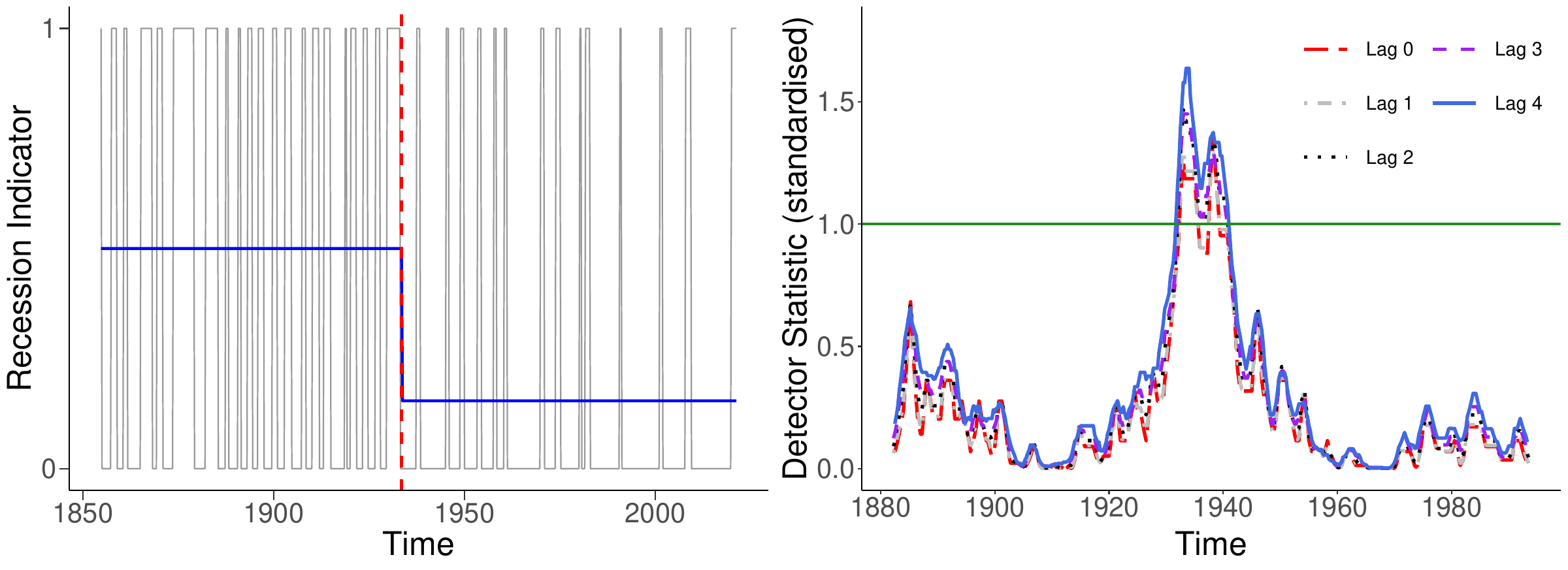}
\caption{Left: quarterly US recession indicator series. A change point detected by multi-lag NP-MOJO is shown in vertical dashed lines and the sample means over the two segments in solid line. Right: $T_\ell(G, k), \, G \le k \le n - G$ for lags $\ell \in \mc L$, after standardisation by respective thresholds.}
\label{recession-plot-2}
\end{figure}

\bibliographystyle{apalike}
\bibliography{paper-ref}

\clearpage

\appendix

\appendixpage
\numberwithin{equation}{section}
\numberwithin{figure}{section}
\numberwithin{table}{section}
\numberwithin{theorem}{section}
\numberwithin{lemma}{section}

\section{Additional discussions about NP-MOJO}\label{sec:a}

\subsection{Computational complexity}
\label{sec:cost}

As briefly discussed in Section~\ref{sec:tuning}, we can perform a sequential update of $T_{\ell}(G, k)$ to enable efficient computation. 
By symmetry of the kernel $h$, we only need to calculate $h(Y_s , Y_t)$ for $(s, t)$ satisfying $1 \leq t \leq s \leq n$ and $|s - t| \leq 2G-\ell$, giving $O (n G)$ total computations for evaluating $h(Y_s , Y_t)$ for such $s$ and $t$. Then, writing 
\begin{align*}
\tstatkl &=   \frac{1}{(G-\ell)^2} \left\{ \sum_{s,t = k - G+1}^{k - \ell} h(Y_s, Y_t) + \sum_{s,t = k+1}^{k +G - \ell} h(Y_s, Y_t) -2 \sum_{s = k - G+1}^{k - \ell} \sum_{t = k +1}^{k +G - \ell} h(Y_s, Y_t)  \right\} \\
& =: T^{(1)}_{\ell}(G,k) + T^{(1)}_{\ell}(G,k+G) - 2 T^{(2)}_{\ell} (G,k),
\end{align*}
we can sequentially update $T^{(1)}_{\ell}(G, k)$ and $T^{(2)}_{\ell}(G, k)$. For example,
\begin{align*}
 T^{(1)}_{\ell}(G,k+1) =& \, T^{(1)}_{\ell}(G,k) -2 \sum_{s = k - G + 1}^{k-\ell} h(Y_{s}, Y_{k-G+1}) + 2 \sum_{s = k-G+2}^{k-\ell +1} h(Y_{s}, Y_{k-G+2}) \\
 & + h(Y_{k-G+1}, Y_{k-G+1}) - h(Y_{k-G+2}, Y_{k-G+2}),
\end{align*}
and a similar updating equation is available for $T^{(2)}_\ell(G, k)$. 
This update can be performed efficiently by pre-computing $\sum_{s = k - G + 1}^{k - \ell} h(Y_s, Y_{k - G + u})$ for all $k$ and $u = 1, 2$, which requires $O(n)$ computations.
% has complexity $O(G)$, bringing the total cost for computing the sequence of $\tstatkl$ to $O (nG)$. 
In a similar fashion, the bootstrap replicates $T^{[r]}_{\ell}, \, 1 \le \ell \le R$, can also be computed using sequential updates in $O (nG)$ computational cost, giving the total cost for multi-lag NP-MOJO using the set of lags $\mc L$ as $O ( \vert \mc L \vert R nG)$. 
\subsection{Alternative weight function}
\label{sec:weight}

The following lemma describes the use of an additional weight function and kernel pair, supplementing Lemma~\ref{lemma:weight-int-identities} in the main text.

\begin{lemma}
\label{lemma:weight-int3} 
For any $\gamma \in (0,2)$, suppose that $d^{(j)}_\ell$ is obtained with
\begin{align*}
w_3(u, v) = C_3 (\gamma, p)^{-1}\left( \Vert u \Vert^2 + \Vert v \Vert^2 \right)^{-(\gamma+2p)/2} \text{ \ with \ } C_3(\gamma, p) = \frac{2 \pi^{p/2} \Gamma (1 - \gamma/2 ) }{\gamma 2^{\gamma} \Gamma((p+\gamma)/2 )}.   
\end{align*}  
If $\max_{0 \le j \le q} \max_{1 \le i \le p} \mathbb{E} ( \vert X_{1i}^{(j)}\vert^{\gamma} ) \le C < \infty$, then the function $h_3 : \mathbb{R}^{2p} \times \mathbb{R}^{2p} \to [ 0 , \infty )$ defined as $h_3 (x, y) =  \Vert x - y \Vert^\gamma $ for $x, y \in \mathbb{R}^{2p}$, satisfies
\begin{align*}
d^{(j)}_{\ell} & = 2  \mathbb{E} \left\{ h_3 \left(\tilde Y_1^{(j)}, Y_1^{(j-1)} \right) \right\} -  \mathbb{E} \left\{ h_3 \left( Y_1^{(j)},\tilde Y_1^{(j)} \right) \right\} - \mathbb{E} \left\{ h_3 \left(Y_1^{(j-1)}, \tilde Y_1^{(j-1)} \right) \right\}.
\end{align*}
\end{lemma}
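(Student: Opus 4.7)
The plan is to recognise that this is the standard energy-distance representation applied to the random vectors $Y^{(j)}_1 = (X^{(j)}_1, X^{(j)}_{1+\ell})$ viewed as elements of $\mathbb{R}^{2p}$. First I would set $w = (u,v) \in \mathbb{R}^{2p}$, observe that $\phi^{(j)}_\ell(u,v)$ is simply the characteristic function of $Y^{(j)}_1$ evaluated at $w$, and note that $w_3(u,v) = C_3(\gamma,p)^{-1} \Vert w \Vert^{-(\gamma + 2p)}$. Thus the integral defining $d^{(j)}_\ell$ reduces to the classical weighted $L_2$ discrepancy between the characteristic functions of $Y^{(j)}_1$ and $Y^{(j-1)}_1$ over $\mathbb{R}^{2p}$ with the fractional-power radial weight.

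Next I would invoke the Fourier identity, central to the distance-covariance literature (Feuerverger; Sz\'ekely--Rizzo): for any $\gamma \in (0,2)$ and $z \in \mathbb{R}^d$,
\begin{align*}
\int_{\mathbb{R}^d} \frac{1 - \cos \langle w, z \rangle}{\Vert w \Vert^{d+\gamma}} \, dw \;=\; \frac{\pi^{d/2} \Gamma(1-\gamma/2)}{\gamma\, 2^{\gamma-1}\, \Gamma((d+\gamma)/2)}\, \Vert z \Vert^\gamma.
\end{align*}
Taking $d = 2p$ reproduces exactly $C_3(\gamma,p)$ from the statement, up to the factor of $2$ that will appear naturally in the next step.

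With this identity in hand I would expand the squared modulus using the conjugate trick and the independent copies $\tilde Y$ introduced before Lemma~\ref{lemma:weight-int-identities}:
\begin{align*}
\l\vert \phi^{(j)}_\ell(w) - \phi^{(j-1)}_\ell(w) \r\vert^2
&= \mathbb{E}\bigl[\cos\langle w, Y^{(j)}_1 - \tilde Y^{(j)}_1 \rangle\bigr] + \mathbb{E}\bigl[\cos\langle w, Y^{(j-1)}_1 - \tilde Y^{(j-1)}_1 \rangle\bigr] \\
&\quad - 2\,\mathbb{E}\bigl[\cos\langle w, \tilde Y^{(j)}_1 - Y^{(j-1)}_1 \rangle\bigr].
\end{align*}
Rewriting each $\cos$ as $1 - (1 - \cos)$, the three $1$'s cancel in the symmetric combination ($1 + 1 - 2 = 0$), so multiplying by $\Vert w \Vert^{-(2p+\gamma)}$ and integrating term by term produces three copies of the Fourier identity above, applied to $z = Y^{(j)}_1 - \tilde Y^{(j)}_1$, $z = Y^{(j-1)}_1 - \tilde Y^{(j-1)}_1$, and $z = \tilde Y^{(j)}_1 - Y^{(j-1)}_1$. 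Gathering constants yields the stated formula in which the $-1$ factors on the like-distribution terms and $+2$ on the cross term are exactly swapped relative to Lemma~\ref{lemma:weight-int-identities}, matching the lemma as written.

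The main technical step is justifying the Fubini interchange of $E$ and $\int_{\mathbb{R}^{2p}} \cdot\, dw$. The singularity of $\Vert w \Vert^{-(2p+\gamma)}$ at the origin is integrable against $1 - \cos\langle w, z\rangle = O(\Vert w \Vert^2 \Vert z \Vert^2)$ for $\gamma < 2$, while the tail is integrable since $\gamma > 0$ and $\vert 1 - \cos\vert \le 2$; this is precisely why the cosine combination (rather than termwise integration of $\vert \phi \vert^2$) is essential. The moment hypothesis $\max_{j,i} \mathbb{E}\vert X^{(j)}_{1i}\vert^\gamma \le C$ ensures that $\mathbb{E}\Vert Y^{(j)}_1 - \tilde Y^{(j-1)}_1\Vert^\gamma$ and its analogues are finite, so the right-hand side (and the Fubini step) is well-defined. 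No other obstacle arises; the remainder is Gamma-function bookkeeping to confirm the constant $C_3(\gamma,p)$.
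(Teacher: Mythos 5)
Your proposal is correct and follows the same overall strategy as the paper's proof: expand $\vert \phi^{(j)}_\ell - \phi^{(j-1)}_\ell \vert^2$ using the independent copies, exploit the $1-\cos$ cancellation ($1+1-2=0$), apply a Sz\'ekely--Rizzo-type Fourier identity, and justify the interchange of expectation and integral by Fubini via the $\gamma$-moment condition. The one substantive difference is in the integral identity you invoke: you concatenate $w=(u,v)$ and $z=Y-\tilde Y$ in $\mathbb{R}^{2p}$ and use the standard single-cosine identity $\int_{\mathbb{R}^{2p}}(1-\cos\langle w,z\rangle)\Vert w\Vert^{-(2p+\gamma)}\,dw \propto \Vert z\Vert^\gamma$ directly, whereas the paper routes the expansion through products $\cos\langle u,x\rangle\cos\langle v,y\rangle$ (a representation that is valid only after integration against the even weight, since the $\sin\cdot\sin$ cross-terms integrate to zero) and then proves a bespoke product-of-cosines identity (Lemma~\ref{integral-lemma3}) via $2\cos a\cos b=\cos(a+b)+\cos(a-b)$ and a lengthy $2p$-dimensional spherical-coordinate computation. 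Your route is the cleaner of the two: the single-cosine expansion of $\vert\phi-\phi\vert^2$ is exact pointwise in $w$, and the identity you cite is standard, so you bypass the paper's supporting lemma entirely.

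One piece of bookkeeping you should not wave away: the identity in dimension $d=2p$ produces the constant $C_3(\gamma,2p)=2\pi^{p}\Gamma(1-\gamma/2)/\{\gamma 2^\gamma\Gamma(p+\gamma/2)\}$, which is \emph{not} equal to $C_3(\gamma,p)$ ``up to a factor of $2$'' --- the ratio is $\pi^{p/2}\Gamma((p+\gamma)/2)/\Gamma(p+\gamma/2)$. The normaliser that makes the stated identity hold with $h_3(x,y)=\Vert x-y\Vert^\gamma$ exactly is $C_3(\gamma,2p)^{-1}$, which is what the paper's Lemma~\ref{integral-lemma3} delivers; the $C_3(\gamma,p)^{-1}$ appearing in the statement is evidently a typo. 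Your argument goes through verbatim once the normaliser is read as $C_3(\gamma,2p)^{-1}$, but as written your constant-matching sentence is incorrect and, taken literally, would leave a spurious multiplicative factor in the final identity.
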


The weight function $w_3$ was previously used in \cite{bakirov2006multivariate} in the context of independence testing and in \cite{matteson2014} for measuring changes in the marginal distribution of independent data (with $\ell = 0$).
In contrast to $w_1$ and $w_2$ given in Lemma~\ref{lemma:weight-int-identities}, $w_3$ is non-separable and non-integrable, and does not fulfil Assumption~\ref{assum:kernel}. 
As a consequence, we require an additional condition on the moments of $X_t$ for $d^{(j)}_\ell$ to be well-defined as well as for the consistency of NP-MOJO when it is applied with the kernel $h_3$. 

\subsection{\tcb{Multiscale extension}}\label{sec:multiscale}

In Section~\ref{sec:tuning}, we discussed the possibility of extending NP-MOJO to the multiple bandwidth setting. A multiple-bandwidth moving window procedure can be particularly beneficial in the presence of multiscale change points (a mixture of smaller changes over long time periods, and large changes over short time periods), when the single bandwidth approach may fail. Here, we describe explicitly how a multiscale version of NP-MOJO can be implemented.

We use the multi-lag NP-MOJO method combined with the `bottom-up’ merging as proposed by \cite{messer2014} (see also \cite{meier2021mosum}). Denote the range of bandwidths by $\mc G = \{G_r, 1 \leq r \leq R : G_1 < \ldots < G_R \}$, and let
$\widehat{\Theta} (G)$ denote the set of estimators detected using multi-lag NP-MOJO with some bandwidth $G \in \mc G$. Then, we accept all estimators in $\widehat{\Theta}(G_1)$ returned with the finest bandwidth $G_1$ to the set of final estimators $\widehat{\Theta}$ and, sequentially for $r = 2, \ldots , R$, accept $\widehat{\theta} \in \widehat{\Theta}(G_r)$ as a final change point if and only if $\min_{\theta \in \widehat{\Theta}} | \theta - \widehat{\theta} |\geq C G_r$, for some $C \in (0,1)$. That is, we only accept the estimators that do not detect the change points which have previously been detected at a finer scale. The algorithmic description of multiscale, multi-lag NP-MOJO is given in Algorithm~\ref{algo-multiscale} in Section~\ref{sec:algo}, \tcr{whilst simulations assessing its performance are provided in Section~\ref{sec:sim-multiscale}.}

\subsection{\tcr{Adaptive selection of the set of lags $\mc L$}}\label{sec:lag-select}

The multi-lag NP-MOJO approach gives a theoretically sound way for a practitioner wishing to detect changes at a given set of lags of interest $\mc L$, provided changes are detectable. However, in practice, a practitioner may be interested in running NP-MOJO for any lags that may possess change points, or may not have prior knowledge of the set of lags that may possess change points. To address this issue we suggest the following adaptive method to set $\mathcal{L}$, motivated by the bottom-up multiscale MOSUM algorithm:

\begin{enumerate}
\item[1.] Run the multi-lag NP-MOJO procedure for an initial set of lags $\tilde{\mathcal{L}} = \{ 0, 1 , \ldots , L \}$ for some $L >0$, obtaining the set of change point estimators $\widehat{\Theta}$. Set $\ell = L+1$.
\item[2.] Run the single-lag NP-MOJO procedure for the lag $\ell$, obtaining the set of change point estimators $\widehat{\Theta}_\ell$. 
\item[3.] For each $\widehat{\theta} \in \widehat{\Theta}_\ell$, add $\widehat{\theta}$ to $\widehat{\Theta}$ if $\widehat{\theta}$ satisfies:
\begin{equation*}
\min_{{\theta \in \Theta}}| \widehat{\theta} - \theta  | \geq cG  ,
\end{equation*}
where $c \in (0,1)$ is the lag-merging parameter from the multi-lag NP-MOJO procedure.
\item[4.] If no new change points were added to $\widehat{\Theta}$, stop the algorithm. Otherwise, set $\ell \leftarrow \ell +1$ and return to step 2 above.
\end{enumerate}

This provides a semi-automatic approach for choosing $\mathcal{L}$. However, we note that this does not preclude the possibility of a change point only occurring at some large positive lag, for example, with initial set $\tilde{ \mathcal{L}} = \{ 0, 1, , \ldots , L \}$ and the change point only occurs at some lag $\ell > L + 1$. In this scenario, the above approach would fail. In any case, we would not expect NP-MOJO to have detection power in such a setting, since we consider time series with short range dependence. The algorithmic description of the approach is given in Algorithm~\ref{algo-lag-select} in Section~\ref{sec:algo}, whilst a small simulation study assessing its performance is provided in Section~\ref{sec:sim-lag}.

\subsection{Algorithms}\label{sec:algo}

The algorithmic descriptions of the NP-MOJO procedure is summarised in Algorithms~\ref{algo-single-lag} and~\ref{algo-multi-lag}, corresponding to the single lag and multi-lag versions respectively. \tcb{The algorithmic description of multiscale, multi-lag NP-MOJO is given in Algorithm~\ref{algo-multiscale}}, \tcr{and the algorithm for NP-MOJO with adaptive lag selection is given in Algorithm~\ref{algo-lag-select}.}

\begin{algorithm}[]
\DontPrintSemicolon
 \KwIn{Multivariate time series $\{X_{t} \}_{t=1}^n$, bandwidth $G$, lag $\ell$, kernel $h$, selection parameter $\eta$, threshold level $\alpha$, bootstrap parameters $b_n$ and $R$ \;}
 \BlankLine

    \For{$k \in \{G,\ldots , n-G\}$}{
    \BlankLine
       Compute $\tstatkl$ \;
    }
      \For{$r \in \{1, ..., R\}$}{
       Compute $T^{[r]}_{\ell}$ \;
     }
    $\zeta_\ell (n,G)\leftarrow q_{1-\alpha} ( \{ T^{[r]}_{\ell}  \}_{r=1}^{R}) $ \;
                $\wh{\Theta}_\ell \leftarrow$ Set of change point estimators obtained with bandwidth $G$ and threshold $\zeta_\ell (n,G)$ according to~\eqref{eq:mosum:est} \;
      \For{$\wh \theta \in  \wh \Theta_{\ell}$}{
      \BlankLine
       Compute $s(\widehat{\theta} ) $ according to Equation~\eqref{eq:cpt-score}  \;
     }
 \KwOut{Estimated change locations $\wh{\Theta}_\ell$, change scores $\{ s ( \widehat{\theta} ) : \ \widehat{\theta} \in \wh{\Theta}_\ell \}$}
 \caption{Single-lag NP-MOJO algorithm}
 \label{algo-single-lag}
\end{algorithm}

\begin{algorithm}[]
\DontPrintSemicolon
\SetKwData{npMOJO}{NP-MOJO}
\SetKwData{sort}{SORT}
 \KwIn{Multivariate time series $\{X_{t} \}_{t=1}^n$, bandwidth $G$, set of lags $\mc L$, kernel $h$, selection parameter $\eta$, threshold level $\alpha$, bootstrap parameters $b_n$ and $R$, merge parameter $c$ \;}
Initialise $\wh \Theta \leftarrow  \emptyset$ \;
    \For{$\ell \in \mc L$}{
    \BlankLine
       $ \{ \wh{\Theta}_\ell , \{ s ( \widehat{\theta}) : \ \widehat{\theta} \in \wh{\Theta}_\ell \}  \}  \leftarrow \npMOJO(\{X_{t} \}_{t=1}^n, G, \ell, h , \eta , \alpha, b_n, R  ) $
    }
    \BlankLine
    $\widetilde{\Theta} \leftarrow  \bigcup_{\ell \in \mc L} \widehat{\Theta}_\ell$ \;
    $\widetilde{\Theta}_{1} \leftarrow   \widetilde{\Theta} $, $j \leftarrow 1$ \;
    \BlankLine
        \While{$ \widetilde{\Theta}_{j} \neq \emptyset$}{
        \BlankLine
       $\wh{\theta} \leftarrow \min \wt \Theta_{j} $  \;
       $\mc C_j \leftarrow \{ \tilde{\theta} \in \widetilde{\Theta}_{j} : \tilde{\theta} - \wh{\theta}  < cG \} $ \;
       $\wh{\theta}_j \leftarrow \argmax_{\tilde{\theta} \in \mc C_j}  s(\tilde{\theta})$ \;
       $\wh{\Theta} \leftarrow \wh{\Theta} \cup \{\wh{\theta}_j\}$ \;
       $\widetilde{\Theta}_{j+1} \leftarrow \widetilde{\Theta}_{j} \setminus \mc C_j$ \;
       $j \leftarrow j +1$
    }
 \KwOut{Estimated change locations $\wh{\Theta}$, estimated number of changes $\wh{q} = \vert \wh \Theta \vert$}
 \caption{Multi-lag NP-MOJO algorithm}
 \label{algo-multi-lag}
\end{algorithm}

\begin{algorithm}[]
\DontPrintSemicolon
\SetKwData{npMOJOmulti}{MULTI-LAG-NP-MOJO}
\SetKwData{sort}{SORT}
 \KwIn{Multivariate time series $\{X_{t} \}_{t=1}^n$, set of bandwidths $\mc G$, set of lags $\mc L$, kernel $h$, selection parameter $\eta$, threshold level $\alpha$, bootstrap parameters $b_n$ and $R$, lag merge parameter $c$, bandwidth merge parameter $C$ \;}
Initialise $\widehat{\Theta} \leftarrow \mc P \leftarrow \emptyset$ 
\BlankLine
\For{$G \in \mc G$}{
    $\widehat{\Theta} (G) \leftarrow \npMOJOmulti(\{X_{t} \}_{t=1}^n, G, \mc L, h , \eta , \alpha, b_n, R, c)$
\BlankLine

\lFor{$\wh{\theta} \in \wh{\Theta}(G)$}{ Add $(\wh{\theta},G)$ to $\mc P$}
}
\BlankLine

\For{$\wh{k} \in \mc P$ in increasing order with respect to $G$}{
    \lIf{\textup{$\min_{\wh{\theta} \in \widehat{\Theta}}| \wh{k}- \wh{\theta}  | \geq C G$}}{Add $\wh{k}$ to $\wh{\Theta}$}
}
\BlankLine
 \KwOut{Estimated change locations $\wh{\Theta}$, estimated number of changes $\wh{q} = \vert \wh \Theta \vert$}
 \caption{\tcb{Multiscale multi-lag NP-MOJO algorithm}}
 \label{algo-multiscale}
\end{algorithm}

\begin{algorithm}[]\label{algo-lag-select}
\DontPrintSemicolon
\SetKwData{npMOJOmulti}{MULTI-LAG-NP-MOJO}
\SetKwData{npMOJO}{NP-MOJO}
\SetKwData{sort}{SORT}
 \KwIn{Multivariate time series $\{X_{t} \}_{t=1}^n$, set of bandwidths $\mc G$, initial set of lags $\tilde{ \mc L}$, kernel $h$, selection parameter $\eta$, threshold level $\alpha$, bootstrap parameters $b_n$ and $R$, lag merge parameter $c$, bandwidth merge parameter $C$ \;}
\BlankLine
    $\wh{\Theta} \leftarrow \npMOJOmulti(\{X_{t} \}_{t=1}^n, G, \tilde{ \mc L}, h , \eta , \alpha, b_n, R, c)$
\BlankLine
$\ell \leftarrow \max \{ \tilde{\mc L} \} + 1 $ \\
$\text{FLAG} \leftarrow \text{TRUE}$ \\
\While{$\text{FLAG} = \text{TRUE}$ }{
   $\tilde{q} \leftarrow \vert \wh \Theta \vert$ 
   \BlankLine
    $ \{ \wh{\Theta}_\ell , \{ s ( \widehat{\theta}) : \ \widehat{\theta} \in \wh{\Theta}_\ell \}  \}  \leftarrow \npMOJO(\{X_{t} \}_{t=1}^n, G, \ell, h , \eta , \alpha, b_n, R  ) $ 
    \BlankLine
   \For{$\wh{k} \in \wh{\Theta}_\ell$}{
    \lIf{\textup{$\min_{\wh{\theta} \in \widehat{\Theta}}| \wh{k}- \wh{\theta}  | \geq c G$}}{Add $\wh{k}$ to $\wh{\Theta}$}
    }
    $\ell \leftarrow \ell + 1$ \\
    \lIf{$\vert \wh \Theta \vert = \tilde{q}$}{$\text{FLAG} \leftarrow \text{FALSE}$}
}

\BlankLine
 \KwOut{Estimated change locations $\wh{\Theta}$, estimated number of changes $\wh{q} = \vert \wh \Theta \vert$}
 \caption{\tcb{NP-MOJO with adaptive lag selection algorithm}}
 \label{algo-lag-select}
\end{algorithm}

\clearpage

\section{Complete simulation study}\label{sec:sim-study}

We examine the performance of NP-MOJO via a wide-ranging simulation study. For all experiments we simulate $1000$ replications. All tuning parameters are set as described in Section~\ref{sec:tuning}. 
We report the results from both single-lag and multi-lag NP-MOJO with the set of lags $\mc L = \{0, 1, 2\}$, which are denoted by NP-MOJO-$\ell$ and NP-MOJO-$\mc L$, respectively.

Where appropriate, we compare with competing methods for which R implementations are readily available.
In particular, we consider parametric methods which are designed specifically for detecting the particular types of changes we introduce in data generation, and their performance serve as a benchmark.
Information about their implementation is given in the relevant sections.

For nonparametric methods, we consider the E-Divisive approach of \cite{matteson2014} (R package \verb!ecp!, \cite{JSSv062i07}), the Kolmogorov-Smirnov-based CUSUM procedure (NWBS) of \cite{padilla2021optimal} implemented in the \verb!changepoints! R package \citep{changepoints2022}, the kernel-based method (KCPA) of \cite{celisse2018new} and \cite{arlot2019kernel} (\verb!KernSeg!, \cite{kernseg2018}), and the computationally efficient extension of \cite{zou2014nonparametric} proposed by \cite{haynes2017computationally} (\verb!changepoint.np!, \cite{haynes2021}), referred to as cpt.np. 
In their implementation, we mostly follow the settings recommended by the authors. 
For E-Divisive and cpt.np, we set the minimum segment length to be $30$ and for the former, we use the same settings as NP-MOJO for the number of bootstrap replications $R$ and level $\alpha$. 
For cpt.np, we use the MBIC penalty for declaring change points and $10$ quantiles at which to estimate the cdf.
For KCPA, we use the Gaussian kernel with bandwidth given by the standard deviation, and calculate the penalty using the slope heuristic as recommended in \cite{arlot2019kernel}.
We note that all four methods are developed for detecting changes in the marginal distribution from independent data.
NWBS and cpt.np are univariate methods so their performance is not considered in the multivariate scenarios.
Throughout, $\mbf 0$ denotes a vector of zeros (and analogously for other non-zero values) and $\mbf I$ an identity matrix, whose dimensions are determined by the context. 

\subsection{Size comparison}\label{sec:size}

We assess the performance of NP-MOJO and nonparametric change point methods when there does not exist any change point in the time series. 
Unless stated otherwise, the time series is univariate ($p = 1$) and $\vep_t \sim_{\text{i.i.d.}} \mc N(0, \sigma_{\vep}^2)$ with $\sigma_{\vep} = 1$.
In all scenarios, we set $n = 1000$.

\begin{enumerate}[label=(N\arabic*)] 
\item \label{model-n1} $X_t = \vep_t$.
\item \label{model-n2} $X_t = \vep_t$ where $\vep_t$ are i.i.d. $t_{5}$-distributed random variables.
\item \label{model-n3}  % AR($1$) model: 
$X_t = 0.7 X_{t-1} + \vep_t$.
\item \label{model-n4}  % MA($4$) model: 
$X_t =  \vep_t + 0.9\vep_{t-1} + 0.8\vep_{t-2} + 0.7 \vep_{t-3} + 0.6 \vep_{t-4}$.
\item \label{model-n5}  % ARCH($1$) model: 
$X_t = \sigma_t \vep_t$ where $\sigma_t^2 =0.5 + 0.4 X_{t-1}^2$.
\item \label{model-n6}  % VAR($1$) model with $p = 2$: 
$X_t = A X_{t-1} + \vep_t$ with $p = 2$ where $\vep_t \sim_{\text{i.i.d.}}\mc N_2(\mbf 0, \mbf I)$  and $A = [A_{ii'}] \in \mathbb{R}^{2 \times 2}$ has $A_{11} = A_{22} = 0.4, A_{12} = A_{21} = -0.2$.
\item \label{model-n7}  % VAR($1$) model with $p = 5$: 
$X_t = A X_{t-1} + \vep_t$ with $p = 5$ where $\vep_t \sim_{\text{i.i.d.}}\mc N_5(\mbf 0,\mbf I)$ and $A = [A_{ii'}] \in \mathbb{R}^{5 \times 5}$ has $A_{ii'} = 0.3^{|i-i'|}$.
\end{enumerate}

Table~\ref{null-table} reports the proportion of realisations where change points are falsely detected. 
The single-lag NP-MOJO controls the size well across all scenarios.
As expected, the multi-lag extension tends to return more spurious estimators but it shows reasonably good size performance.
KCPA does not tend to return spurious estimators even when $\{X_t\}_{t = 1}^n$ is serially correlated.
On the other hand, E-Divisive, NWBS and cpt.np suffer from the presence of temporal dependence as they are calibrated for independent data. 
In the case of cpt.np, it tends to return spurious estimators even when the data is independently generated.

\begingroup
{
\setlength{\tabcolsep}{3pt}
\setlength{\LTcapwidth}{\textwidth}
{\small
\begin{longtable}{c ccccccc}
\caption{Size comparison: we report the size, the proportion of realisations where change points are falsely detected when $q = 0$ out of $1000$ realisations.}
\label{null-table}
\endfirsthead
\endhead
\toprule	
 Size & \multicolumn{7}{c}{Model}  \\ 
 \cmidrule(lr){1-1} 
  \cmidrule(lr){2-8} 
  Method              & \ref{model-n1}     & \ref{model-n2}    &\ref{model-n3}     & \ref{model-n4}    & \ref{model-n5} & \ref{model-n6} & \ref{model-n7}  \\  \cmidrule(lr){1-1} 
  \cmidrule(lr){2-8} 
NP-MOJO-$0$            & 0.043  & 0.050  & 0.123  & 0.104 & 0.064 & 0.045 & 0.021    \\ 
NP-MOJO-$1$         & 0.061  & 0.058 & 0.116  & 0.100 & 0.043 & 0.053 & 0.016    \\ 
NP-MOJO-$2$         & 0.059  & 0.065  & 0.138  & 0.116 & 0.082 & 0.064 & 0.026    \\ 
NP-MOJO-$\mc L$         & 0.114  & 0.114  & 0.172  & 0.140 & 0.125 & 0.089 & 0.033    \\ 
\cmidrule(lr){2-8} 
E-Divisive             & 0.109  & 0.112  & 1.000  & 1.000 & 0.167 & 0.631 & 0.999   \\ 
KCPA              & 0.005  & 0.005   &  0.055 &  0.011 &  0.005 & 0.003 & 0.000  \\
NWBS           & 0.049  & 0.037  & 0.841  & 0.791 & 0.103 & -- & -- 
\\
cpt.np            & 0.286  & 0.313  & 1.000  & 1.000 & 0.695 & -- & -- \\
      \bottomrule
\end{longtable}}}
\endgroup

\subsection{Detection comparison}\label{sec:sim-detect}
\label{sec:power}

We investigate NP-MOJO in its change point detection performance in a variety of change point scenarios. 
Where relevant, we compare NP-MOJO with the relevant parametric change point detection methods, in addition to the nonparametric ones considered in Section~\ref{sec:size}, and their performance serves as a benchmark.

For each scenario, we report the distribution of the error in estimating the number of change points.
For single lag NP-MOJO, this refers to the distribution of $\wh q_\ell - q_\ell$ (recall the definition of $q_\ell$ given in Section~\ref{sec:theory}) over the $1000$ realisations, while for the multi-lag NP-MOJO and other methods, the distribution of $\wh q - q$ is reported.
We also report the covering metric (CM, \citeauthor{arbelaez2010contour}, \citeyear{arbelaez2010contour}) and V-measure (VM, \citeauthor{rosenberg2007v}, \citeyear{rosenberg2007v}) of the segmentation defined by the set of estimated change points.
Let $\mc P = \{ A_j \}_{j = 1}^{q + 1}$ denote the partition of $\{1, \ldots, n\}$ defined by the true change locations $\{\theta_j \}_{j=1}^{q}$, i.e.\ $A_j = \{ \theta_{j-1}+1, \ldots , \theta_j \}$.
Similarly we denote by $\wh{\mc P} = \{\wh{A}_j\}_{j = 1}^{\wh q + 1}$ the partition defined by a set of estimated change points.
Then, CM is defined as
\begin{align*}
\text{CM}(\wh{\mc P}, \mc P) = \frac{1}{n} \sum_{A \in {\mc P}} |A| \max_{\wh{A} \in {\wh{\mc P}}} \left\{ \frac{|A \cap \wh{A}\vert}{\vert A \cup \wh{A}\vert}  \right\},
\end{align*}
and advocated as an evaluation metric for comparing change point detection algorithms \citep{van2020evaluation}.
VM is similarly calculated using the conditional entropy of the resulting segmentation. 
Both the CM and VM take values between $0$ and $1$, with a value of $1$ indicating a perfect segmentation. For each measure, we report its average over the $1000$ realisations.

\subsubsection{Changes in mean}
\label{sec:sim:mean}

We generate time series under the model
\begin{align}
\label{eq:mean:model}
X_t = \sum_{j = 0}^q \mu_j \mathbb{I}{\{\theta_j + 1 \le t \le \theta_j\}} + \vep_t, \, 1 \le t \le n, 
\end{align}
with $n = 1000$, $q = 3$, $(\theta_1, \theta_2, \theta_3) = (250, 500, 750)$ and $(\mu_0, \mu_1, \mu_2, \mu_3) = (0, {1}, 0, { 1})$.
The error sequence $\{\vep_t\}_{t = 1}^n$ is simulated according to models \ref{model-n1}--\ref{model-n4} from Section~\ref{sec:size}, and then is standardised such that $\Var(\vep_t) = 1$; we refer to the corresponding scenarios as (A1)--(A4).
To these scenarios, in addition to the nonparametric methods considered in Section~\ref{sec:size}, we apply the pruned exact linear time (PELT) method \citep{killick2012} implemented in the \verb!changepoint! R package \citep{killick2014} and WCM.gSa \citep{cho2021multiple} implemented in \cite{breakfast}.
While both detect multiple mean shifts in univariate time series, PELT is proposed for independent data while WCM.gSa handles autocorrelations under an AR model.

In addition, we consider a multivariate scenario:
\begin{enumerate}[label=(A5)] 
\item \label{model-a5} Setting $p = 10$, $X_t$ follows~\eqref{eq:mean:model} with $\vep_t \sim_{\text{i.i.d.}} \mc N_{10}(\mbf 0, \mbf I)$ and $(\mu_0, \mu_1, \mu_2, \mu_3) = (\mbf 0, \bm\Delta, \mbf 0, \bm\Delta)$, where $\bm\Delta$ has its first $5$ coordinates set to $0.5$ and the rest to $0$. 
\end{enumerate}
 
The results are reported in Table~\ref{multcpt-a}.
In general, NP-MOJO accurately detects the number and locations of change points across all scenarios regardless of the choice of the lag, as the changes in the mean are detectable at all lags.
In the independent settings (A1) and (A2), its performance is comparable to PELT
while in the presence of serial dependence under (A3) and (A4), it performs as well as WCM.gSa.
Among the nonparametric methods, NP-MOJO and KCPA outperform E-Divisive, NWBS and cpt.np
and NP-MOJO tends to perform better than KCPA, either marginally or significantly, particularly in the multivariate setting in~\ref{model-a5}.
As noted in Section~\ref{sec:size}, E-Dvisive, NWBS and cpt.np suffer from the departure from the independence assumption.

\begingroup
{\small
\setlength{\tabcolsep}{3pt}
\setlength{\LTcapwidth}{\textwidth}
\begin{longtable}{c c ccccc cc}
\caption{(A1)--\ref{model-a5}: we report the distribution of the estimated number of change points and the average CM and VM over 1000 realisations.
The modal value of $\wh q - q$ in each row is given in bold.
Also, the best performance for each metric is underlined for each scenario.}
\label{multcpt-a}
\endfirsthead
\endhead
\toprule	
&& \multicolumn{5}{c}{$\wh{q} - q$} &   &   \\ 
 Model       & Method      &  $\leq-2$     & $-1$        & $\mathbf{0}$  & $1$    & $\geq 2$    & CM & VM       \\ 
 \cmidrule(lr){1-2} \cmidrule(lr){3-7} \cmidrule(lr){8-9}
(A1)    &  NP-MOJO-$0$   &  0.000      & 0.019   &  \textbf{0.976}    & 0.005       &    0.000          & 0.958  & 0.942    \\
&  NP-MOJO-$1$     &  0.000    & 0.003     & \textbf{0.997}   & 0.000          &  0.000          & 0.971   & 0.955      \\
      &  NP-MOJO-$2$   &  0.000      &    0.002   & \textbf{0.997}  & 0.001     &   0.000          &  0.971 & 0.955    \\ 
            &  NP-MOJO-$\mc L$   &  0.000      &    0.001   & \textbf{0.999}  & 0.000     &   0.000          &  0.970 & 0.953    \\ \cmidrule(lr){3-7} \cmidrule(lr){8-9}
      &  E-Divisive      &  0.000      &   0.000    & \textbf{0.912}  & 0.070        &    0.018         & 0.975 & 0.965   \\
      &  KCPA      &  0.000      & 0.000      & \textbf{0.971}  &  0.028      &     0.001        &  0.977  & 0.963 \\      
      &  NWBS      &  0.000      &   0.000    &  \textbf{0.955} &   0.028     &    0.017        &  0.971  & 0.956 \\
      &  cpt.np      & 0.000       &  0.000     &  \textbf{0.788}  & 0.184       &   0.028          & 0.964   & 0.955 \\ 
      &  PELT      &  0.000      & 0.000      & \underline{\textbf{1.000}}  &   0.000     &  0.000           & \underline{0.983}   & \underline{0.970} \\
      &  WCM.gSa      &  0.000      &  0.000     &  \textbf{0.972}  & 0.021        &    0.007         & 0.980   & 0.969 \\ \cmidrule(lr){1-2}    \cmidrule(lr){3-7} \cmidrule(lr){8-9}
(A2)    &  NP-MOJO-$0$ &  0.000      & 0.002    &  \textbf{0.998}    &  0.000       & 0.000             &  0.974 &  0.958      \\
&  NP-MOJO-$1$  & 0.000     & 0.000     & \underline{\textbf{1.000}}   & 0.000          &   0.000         & 0.977   & 0.962        \\
      &  NP-MOJO-$2$ &   0.000     &  0.000     & \textbf{0.999}  & 0.001        &   0.000          & 0.976  & 0.961      \\
       &  NP-MOJO-$\mc L$   &  0.000      &    0.000   & \underline{\textbf{1.000}}  & 0.000     &   0.000          &  0.976 & 0.961    \\  \cmidrule(lr){3-7} \cmidrule(lr){8-9}
      &  E-Divisive      &  0.000      &   0.000    &  \textbf{0.913}  &    0.058    &   0.029          & 0.977 &  0.969 \\
      &  KCPA      &  0.000      &   0.000    & \textbf{0.978} &  0.021      & 0.001            & \underline{0.983}   &  \underline{0.972} \\      
      &  NWBS      &  0.000      & 0.000      & \textbf{0.970}  &   0.015     &   0.015          &  0.979  & 0.967 \\
      &  cpt.np      &  0.000      &  0.000     & \textbf{0.739}  & 0.206        &  0.055           & 0.960   & 0.954 \\ 
      &  PELT      &  0.000      &  0.000     & \underline{\textbf{1.000}}  &    0.000    &    0.000         & \underline{0.983}   & 0.970 \\
      &  WCM.gSa      &  0.000      &  0.000     & \textbf{0.973}  & 0.016        &     0.011        &  0.980  & 0.969\\     \cmidrule(lr){1-2} \cmidrule(lr){3-7} \cmidrule(lr){8-9}
(A3)     &  NP-MOJO-$0$    &    0.000    &  0.000  &  {\textbf{0.999}}    &  0.001      &    0.000          &   \underline{0.986} & 0.978  \\
&  NP-MOJO-$1$  &  0.000    &  0.000    & \textbf{0.997}   &    0.003     &  0.000          & 0.984   & 0.974        \\
      &  NP-MOJO-$2$  & 0.000       &  0.000     & {\textbf{0.997}}  &  0.003      &  0.000           & 0.984  & 0.973     \\
       &  NP-MOJO-$\mc L$   &  0.000      &    0.000   & \underline{\textbf{1.000}}  & 0.000     &   0.000          &  0.984 & 0.975    \\ \cmidrule(lr){3-7} \cmidrule(lr){8-9}
      &  E-Divisive      &  0.000      &   0.000    & 0.001  & 0.000       &  \textbf{0.999}           &  0.413 & 0.675  \\
      &  KCPA      &  0.000      &  0.000     &  \textbf{0.724} &    0.151            &  0.125  & 0.959 & 0.962 \\      
      &  NWBS      &  0.000      &   0.000    & 0.000   & 0.000        &  \textbf{1.000}           & 0.438   & 0.662 \\
      &  cpt.np      &   0.000     &  0.000     & 0.002  &  0.009      &      \textbf{0.989}       & 0.655   & 0.779 \\ 
      &  PELT      & 0.000       &   0.000    & 0.233  &  0.244      &      \textbf{0.523}       &  0.885  & 0.914 \\
      &  WCM.gSa      &    0.000    &   0.000    & \textbf{0.949}  &  0.027     &  0.024           &  0.985  & \underline{0.981} \\    \cmidrule(lr){1-2} \cmidrule(lr){3-7} \cmidrule(lr){8-9}
(A4)    &  NP-MOJO-$0$  &  0.000      &  0.000  &   {\textbf{0.996}}   &    0.004    &  0.000            & 0.980  &  0.969     \\
&  NP-MOJO-$1$  &  0.000    & 0.000     & {\textbf{0.997}}   & 0.003          &  0.000          &  0.978  & 0.966         \\
      &  NP-MOJO-$2$ &  0.000    & 0.000     & \textbf{0.997}   & 0.003          &  0.000          &  0.977  & 0.964      \\ 
       &  NP-MOJO-$\mc L$   &  0.000      &    0.000   & \underline{\textbf{1.000}}  & 0.000     &   0.000          &  0.979 & 0.967    \\ \cmidrule(lr){3-7} \cmidrule(lr){8-9}
      &  E-Divisive      &   0.000     &   0.000    &  0.000 &  0.000      &  \textbf{1.000}           &  0.416 & 0.674  \\
      &  KCPA      &  0.000      &  0.000     &  \textbf{0.910}  &    0.062    &    0.028         & 0.978   & 0.972 \\      
      &  NWBS      &  0.000      &  0.000     & 0.001  &  0.000      &  \textbf{0.999}           & 0.437   &  0.658\\
      &  cpt.np      &   0.000     & 0.000     & 0.000  & 0.006       &     \textbf{0.994}        & 0.642   & 0.769 \\ 
      &  PELT      &   0.000     &  0.000     & 0.309  &   0.272     &  \textbf{0.419}           & 0.905   & 0.923 \\
      &  WCM.gSa      &  0.000      &  0.000     &  \textbf{0.987}  &  0.010      &  0.003           & \underline{0.985}   & \underline{0.977} \\ \cmidrule(lr){1-2} \cmidrule(lr){3-7} \cmidrule(lr){8-9} 
          {\ref{model-a5}}     &  NP-MOJO-$0$  &  0.001      &  0.013  &   {\textbf{0.986}}   &    0.006    &  0.000            & 0.971  &  0.957     \\
&  NP-MOJO-$1$  &  0.000    & 0.005     & \textbf{0.995}   & 0.000         &  0.000          &  {0.976}  & {0.962}         \\
      &  NP-MOJO-$2$ &  0.000    & 0.004     & {\textbf{0.996}}   & 0.000          &  0.000          &  {0.976} & {0.961}      \\ 
       &  NP-MOJO-$\mc L$   &  0.000      &    0.003   & \underline{\textbf{0.997}}  & 0.000     &   0.000          &  0.975 & 0.961    \\ \cmidrule(lr){3-7} \cmidrule(lr){8-9}
      &  E-Divisive      &   0.000     &   0.000    &  \bf{0.913} &  0.072      &  0.015          &  \underline{0.978} & \underline{0.969} \\
      &  KCPA      &  \bf{1.000}      &  0.000     &  0.000  &    0.000    &    0.000         & 0.250   & 0.000 \\
      \bottomrule
\end{longtable}}
\endgroup

\subsubsection{Changes in second-order moments}
\label{sec:second:mean}

We first consider the scenarios where $X_t$ undergoes changes in variance or covariance which are detectable at all lags, with $n = 1000$, $q = 3$ and $(\theta_1, \theta_2, \theta_3) = (250, 500, 750)$.
\begin{enumerate}[label=(B\arabic*)] 
\item \label{model-b1} $X_t = \sum_{j = 0}^q \sigma_j \mathbb{I}{\{\theta_j + 1 \le t \le \theta_{j + 1}\}} \cdot \vep_t$ where $\vep_t \sim_{\text{i.i.d.}} \mc N(0, 1)$ and
$(\sigma_0, \sigma_1, \sigma_2, \sigma_3) = (0.5, 1, 0.5, 1)$.

\item \label{model-b2}  $X_t = \sum_{j = 0}^q \sigma_j \mathbb{I}{\{\theta_j + 1 \le t \le \theta_{j + 1}\}} \cdot \vep_t$ with $\sigma_j$ chosen as in~\ref{model-b1} where $\vep_{t} \sim_{\text{i.i.d.}} t_5/\sqrt{5/3}$.
\item \label{model-b3}  $X_t = 0.4X_{t-1} + \sum_{j = 0}^q \sigma_j \mathbb{I}{\{\theta_j + 1 \le t \le \theta_{j + 1}\}} \cdot \vep_t$ where $\vep_t \sim_{\text{i.i.d.}} \mc N(0, 1)$ and $(\sigma_0, \sigma_1, \sigma_2, \sigma_3) = (1, 2, 1, 2)$.
\item \label{model-b4} $X_t = \sum_{j = 0}^q \Sigma_j^{1/2} \mathbb{I}{\{\theta_j + 1 \le t \le \theta_{j + 1}\}} \cdot \vep_t$ with $p = 2$, where $\vep_t \sim_{\text{i.i.d.}} \mc N_2(\mbf 0, \mbf I)$, $\Sigma_0 = \Sigma_2 = \mbf I$ and $\Sigma_1 = \Sigma_3 = \begin{psmallmatrix}1 & 0.9\\ 0.9 & 1\end{psmallmatrix}$.
\item \label{model-b5} As in~\ref{model-b4} where $\vep_t = (\vep_{1t}, \vep_{2t})^\top$ generated with $\vep_{it} \sim_{\text{i.i.d.}} t_5$. 
\item \label{model-b6} \tcb{$X_t = \sum_{j = 0}^q \Sigma_j^{1/2} \mathbb{I}{\{\theta_j + 1 \le t \le \theta_{j + 1}\}} \cdot \vep_t$ with $p = 5$, where $\vep_t \sim_{\text{i.i.d.}} \mc N_{10}(\mbf 0, \mbf I)$, $\Sigma_0 = \Sigma_2 = \mbf I$ and $\Sigma_1 = \Sigma_3$ has $i, j$-th entry given by $0.7^{|i-j|}$.}
\end{enumerate}

In addition to the nonparametric competitors, we consider the wavelet-based WBS approach (WBSTS) of \cite{korkas2017multiple}, implemented in the R package \verb!wbsts! \citep{wbsts2020}, when $p = 1$, and the sparsified binary segmentation (SBS) \citep{cho2015}, implemetend in the R package \verb!hdbinseg! \citep{hdbinseg2018} when $p > 1$, both of which are developed for detecting changes in the second-order structure of time series. 
The results are reported in Table~\ref{multcpt-b}. 
NP-MOJO consistently outperforms the competing nonparametric methods in all metrics. 
It is competitive with WBSTS and SBS which specifically seek changes in the second-order structure and in fact, NP-MOJO performs better in estimating $q$ when the data is non-Gaussian in model~\ref{model-b2}.

\begingroup
{\small
\setlength{\tabcolsep}{3pt}
\setlength{\LTcapwidth}{\textwidth}
\begin{longtable}{c c ccccc cc}
\caption{\ref{model-b1}--\ref{model-b5}: we report the distribution of the estimated number of change points and the average CM and VM over 1000 realisations.
The modal value of $\wh q - q$ in each row is given in bold.
Also, the best performance for each metric is underlined for each scenario.}
\label{multcpt-b}
\endfirsthead
\endhead
\toprule	
&& \multicolumn{5}{c}{$\wh{q} - q$} &   &   \\ 
 Model       & Method      &  $\leq-2$     & $-1$        & $\mathbf{0}$  & $1$    & $\geq 2$    & CM & VM       \\ 
 \cmidrule(lr){1-2} \cmidrule(lr){3-7} \cmidrule(lr){8-9}
 \ref{model-b1}    &  NP-MOJO-$0$      & 0.000       &  0.052  & \bf{0.930}    & 0.017    &     0.001          & 0.942  & 0.928  \\
&  NP-MOJO-$1$        &  0.000   &   0.008  & \bf{0.986}  &        0.006 & 0.000      & 0.965 & \underline{0.949}   \\
      &  NP-MOJO-$2$      & 0.000 &  0.008  & {\bf{0.988}}  & 0.004        &  0.000              &  \underline{0.966} & \underline{0.949} \\ 
       &  NP-MOJO-$\mc L$   &  0.000      &    0.006   & \underline{\textbf{0.994}}  & 0.000     &   0.000          &  0.965 & 0.948    \\ \cmidrule(lr){3-7} \cmidrule(lr){8-9}
      &  E-Divisive      &  0.003     &  0.008  & \bf{0.896} &    0.069   & 0.024 & 0.946 & 0.934 \\
      &  KCPA      &  0.007    &  0.000  & \bf{0.955} & 0.033       &  0.005      & {0.965}  & \underline{0.949} \\      
      &  NWBS      & \bf{0.429}     & 0.093   & 0.364 &   0.089   &   0.025    & 0.616   & 0.558 \\
      &  cpt.np      & 0.000     & 0.000   &  \bf{0.676} &     0.214   &  0.110   & 0.943 & 0.936 \\ 
      &  WBSTS      &   0.000 &  0.000    & {\bf{0.978}} &     0.021  &  0.001        &  0.960 & 0.941 \\  \cmidrule(lr){1-2} \cmidrule(lr){3-7} \cmidrule(lr){8-9} 
                \ref{model-b2}  &  NP-MOJO-$0$      &  0.005      & 0.133  & \bf{0.839}     &  0.023    & 0.000           & 0.912 & 0.905 \\
&  NP-MOJO-$1$        &  0.000    &  0.044   & {\bf{0.945}}    & 0.011 & 0.000 & {0.944} & {0.929} \\
      &  NP-MOJO-$2$      & 0.000     & 0.033    & {\bf{0.956}}    & 0.011     & 0.000   & {0.945}  & {0.929} \\  
      &  NP-MOJO-$\mc L$   &  0.000      &    0.012   & \underline{\textbf{0.988}}  & 0.000     &   0.000          &  \underline{0.950} & \underline{0.932}    \\ \cmidrule(lr){3-7} \cmidrule(lr){8-9}
      &  E-Divisive             &  0.035  & 0.039 & \bf{0.814} &  0.096  &0.016      & 0.910 & 0.902 \\
      &  KCPA      &  0.100     & 0.003   & \bf{0.863} & 0.032     &   0.002      & 0.904  & 0.882  \\      
      &  NWBS      & \bf{0.559}    &  0.136  &0.212   & 0.064    & 0.029    & 0.510 & 0.423 \\
      &  cpt.np      & 0.001  &  0.000 & \bf{0.615}  & 0.269   & 0.115   & 0.924  &0.915  \\ 
      &  WBSTS     &  0.000 & 0.002  & \bf{0.693}  & 0.230    & 0.075  & 0.905  & 0.894 \\  \cmidrule(lr){1-2} \cmidrule(lr){3-7} \cmidrule(lr){8-9}  
              \ref{model-b3}    &  NP-MOJO-$0$      &  0.025      & 0.121  & \bf{0.840}     &  0.014    & 0.000           & 0.905 & 0.899 \\
&  NP-MOJO-$1$        &  0.000    &  0.024   & \bf{0.962}    & 0.014   & 0.000 & 0.953 & \underline{0.937} \\
      &  NP-MOJO-$2$      & 0.000     & 0.035    & \bf{0.953}    & 0.012     & 0.000   &0.949  & 0.934 \\ 
      &  NP-MOJO-$\mc L$   &  0.000      &    0.013   & \underline{\textbf{0.987}}  & 0.000     &   0.000          &  0.953 & 0.936    \\ \cmidrule(lr){3-7} \cmidrule(lr){8-9}
      &  E-Divisive             &  0.000  & 0.000 & 0.148 &  0.178  & \bf{0.674}      & 0.774 & 0.813 \\
      &  KCPA      &  0.163     & 0.004   & \bf{0.739} & 0.071     &   0.023      & 0.858  & 0.833  \\      
      &  NWBS      & 0.085    &  0.036  & 0.110   & 0.118    & \bf{0.651}    & 0.657 & 0.700 \\
      &  cpt.np      & 0.000  &  0.000 & 0.046  & 0.105   & \bf{0.849}   & 0.789  & 0.831 \\ 
      &  WBSTS     &  0.000 & 0.000  & {\bf{0.979}}  & 0.021    & 0.000  & \underline{0.954}  & {0.934} \\ \cmidrule(lr){1-2} \cmidrule(lr){3-7} \cmidrule(lr){8-9}  
        \ref{model-b4}    &  NP-MOJO-$0$      &  0.000      & 0.000  & \underline{\bf{1.000}}     &  0.000    & 0.000           & \underline{0.981} & \underline{0.967} \\
&  NP-MOJO-$1$        &  0.000    &  0.031   & \bf{0.963}    & 0.006   & 0.000 & 0.965 & 0.953 \\
      &  NP-MOJO-$2$      & 0.000     & 0.015    & \bf{0.976}    & 0.009     & 0.000   & 0.969  & 0.955 \\ 
       &  NP-MOJO-$\mc L$   &  0.000      &    0.000   & \underline{\textbf{1.000}}  & 0.000     &   0.000          &  0.979 & 0.965    \\ \cmidrule(lr){3-7} \cmidrule(lr){8-9}
      &  E-Divisive       & \bf{0.529}       &  0.168   & 0.256   & 0.032   &  0.015         & 0.557 & 0.506  \\
      &  KCPA      &  0.077      & 0.000   & \bf{0.909} & 0.014     &   0.000       & 0.935  & 0.915 \\  
      &  SBS    & 0.044 & 0.000 &  {\bf{0.942}} & 0.014    & 0.000  & {0.949} & {0.939} \\ 
      \cmidrule(lr){1-2} \cmidrule(lr){3-7} \cmidrule(lr){8-9} 
      \ref{model-b5}    &  NP-MOJO-$0$      & 0.000       & 0.001  &   {\bf{0.997}}   &   0.002   & 0.000            & \underline{0.974} & \underline{0.959}  \\
&  NP-MOJO-$1$        & 0.005    & 0.121    & \bf{0.867}     & 0.007   &  0.000 & 0.931 & 0.927 \\
      &  NP-MOJO-$2$      &   0.006  & 0.103   & \bf{0.884}   &   0.007  &  0.000  & 0.935 & 0.929 \\  &  NP-MOJO-$\mc L$   &  0.000      &    0.001   & \underline{\textbf{0.999}}  & 0.000     &   0.000          &  0.973 & 0.958    \\ \cmidrule(lr){3-7} \cmidrule(lr){8-9}
      &  E-Divisive      &  \bf{0.670}      &  0.189   &  0.101  & 0.032   &  0.008         & 0.431 & 0.335 \\
      &  KCPA      &  0.322      &  0.000  &  {\bf{0.662}}     &   0.015        &  0.001 & 0.775 & 0.725 \\      
      &  SBS     & \bf{0.614} &  0.003 &  0.377 & 0.006    & 0.000  & 0.653  & 0.660 \\
            \cmidrule(lr){1-2} \cmidrule(lr){3-7} \cmidrule(lr){8-9} 
      \tcb{\ref{model-b6}}    &  NP-MOJO-$0$      & 0.000       & 0.003  &   \underline{\bf{0.997}}   &   0.000   & 0.000            & \underline{0.982} & \underline{0.970}  \\
&  NP-MOJO-$1$        & 0.169    & 0.347    & \bf{0.482}     & 0.002   &  0.000 & 0.801 & 0.829 \\
      &  NP-MOJO-$2$      &   0.136  & 0.343   & \bf{0.517}   &   0.004  &  0.000  & 0.817 & 0.844 \\  &  NP-MOJO-$\mc L$   &  0.000      &    0.003   & \underline{\textbf{0.997}}  & 0.000     &   0.000          &  0.981 & 0.969    \\ \cmidrule(lr){3-7} \cmidrule(lr){8-9}
      &  E-Divisive      &  \textbf{0.552}      &  0.237   &  0.154  & 0.040   &  0.017         & 0.493 & 0.445 \\
      &  KCPA      &  0.262      &  0.000  &  {\bf{0.736}}     &   0.002  &  0.000 & 0.820 & 0.779 \\      
      &  SBS     & 0.321 &  0.001 &  \textbf{0.660} & 0.018    & 0.000  & 0.806  & 0.812 \\
      \bottomrule
\end{longtable}}
\endgroup

\subsubsection{Changes in temporal dependence}
\label{sec:temporal}

We consider the scenarios where the autocorrelations or the (conditional) variance 
of the data change. Unless stated otherwise $q = 2$, $(\theta_1, \theta_2) = (333, 667)$ and $\vep_t \sim_{\text{i.i.d.}} \mc N(0, 1)$.
\begin{enumerate}[(C1)] 
\item \label{model-c1} $X_t = X^{(j)}_t = a_j  X^{(j)}_{t-1} + \vep_t$ for $\theta_j + 1 \le t \le \theta_{j + 1}$, where $(a_0, a_1, a_2) = (-0.8, 0.8, -0.8)$.
\item \label{model-c2} $X_t = \vep_t + \sum_{j = 0}^q b_j \mathbb{I}{\{\theta_j + 1 \le t \le \theta_{j + 1}\}} \cdot \vep_{t-2}$, where $(b_0, b_1, b_2) = (-0.7, 0.7, -0.7)$.
\item \label{model-c3} $X_t = X^{(j)}_t = \sigma^{(j)}_t \vep_t$ with $(\sigma^{(j)}_t)^2 = \omega_j + \alpha_j (X^{(j)}_{t - 1})^2 + \beta_j (\sigma^{(j)}_{t - 1})^2 )$ for $\theta_j + 1 \le t \le \theta_{j + 1}$, $q = 1$, $\theta_1 = 500$, $(\omega_0, \alpha_0, \beta_0) = (0.01, 0.7, 0.2)$ and $(\omega_1, \alpha_1, \beta_1) = (0.01, 0.2, 0.7)$.
\item \label{model-c4} $X_t = X^{(j)}_t = A_j X^{(j)}_{t-1} + \vep_t$ for $\theta_j + 1 \le t \le \theta_{j + 1}$,  where $A_0 = A_2 = \begin{psmallmatrix}0.5 & 0.1\\ 0.1 & 0.5\end{psmallmatrix}$ and $A_1 = \begin{psmallmatrix}-0.5 & 0.1\\ 0.1 & -0.5\end{psmallmatrix}$, $\vep_t \sim_{\text{i.i.d.}} \mc N_2(\mbf 0, \mbf I)$.
\item \label{model-c5} $X_t = \vep_t + \sum_{j = 0}^q B_j \mathbb{I}{\{\theta_j + 1 \le t \le \theta_{j + 1}\}}  \vep_{t-1} $, where $B_0 = B_2 = \begin{psmallmatrix}1 & 0.1 \\ 0.1 & 1\end{psmallmatrix}$ and $B_1 = \begin{psmallmatrix}-1 & 0.1\\ 0.1 & -1\end{psmallmatrix}$, $\vep_t \sim_{\text{i.i.d.}} \mc N_2(\mbf 0, \mbf I)$.
\item \tcb{\label{model-c6}As in~\ref{model-c5}, except $p=5$ (diagonal entries of $B_0$ and $B_2$ given by $1$, diagonal entries of $B_1$ equal to $-1$, all off-diagonal entries given by $0.1$). }
\end{enumerate}
Model \ref{model-c1} was studied in \cite{korkas2017multiple}, while models similar to \ref{model-c4} and \ref{model-c5} were considered in \cite{preuss2015detection}. 
In all models, except \ref{model-c3}, changes are present only in the joint distribution of $X_t$ and its lagged values. Therefore, we exclude the nonparametric methods considered in Section~\ref{sec:size} which have detection power against changes in marginal distribution only. Specifically, $(q_0, q_1, q_2) = (0, 0, 2)$ in~\ref{model-c2} and $(q_0, q_1, q_2) = (0, 2, 0)$ in~\ref{model-c4}, ~\ref{model-c5}, and~\ref{model-c6}. Accordingly, in reporting the results returned by NP-MOJO-$\ell$ for $\ell = 0, 1, 2$, we report the distribution of $\wh q_\ell - q_\ell$ and report CM and VM for NP-MOJO-$\ell$ with $q_\ell = 2$ only, see Table~\ref{multcpt-c}. 
In scenarios~\ref{model-c1}--~\ref{model-c5}, we observe that NP-MOJO performs similarly or superior to the competing method in both detection and estimation accuracy. In scenarios~\ref{model-c1}, SBS outperforms NP-MOJO, likely as it is designed for detection of change points in high dimensions. As expected, we do not detect all $q$ change points from NP-MOJO-$\ell$ for which $q_\ell < q$, but the multi-lag extension successfully aggregates the estimators from multiple lags.

\begingroup
{\small
\setlength{\tabcolsep}{3pt}
\setlength{\LTcapwidth}{\textwidth}
\begin{longtable}{c c ccccc cc}
\caption{\ref{model-c1}--\ref{model-c5}: we report the distribution of the estimated number of change points and the average CM and VM over 1000 realisations.
The modal value of $\wh q - q$ in each row is given in bold.
Also, the best performance for each metric is underlined for each scenario.}
\label{multcpt-c}
\endfirsthead
\endhead
\toprule	
&& \multicolumn{5}{c}{$\wh{q} - q$ / $\wh{q}_\ell - q_\ell$} &   &   \\ 
 Model       & Method      &  $-2$     & $-1$        & $\mathbf{0}$  & $1$    & $\geq 2$    & CM & VM       \\ 
 \cmidrule(lr){1-2} \cmidrule(lr){3-7} \cmidrule(lr){8-9}
      \ref{model-c1}    &  NP-MOJO-$0$     &  --     &  -- & \bf{0.851}   & 0.140     &     0.009          &  -- & --   \\
&  NP-MOJO-$1$        &  0.000   &   0.002  & \bf{0.956}  &        0.042 & 0.000      & {0.978} & {0.961}   \\
     &  NP-MOJO-$2$      & -- &   -- & \bf{0.836}  & 0.149        &  0.015              & --   & -- \\ 
      &  NP-MOJO-$\mc L$   &  0.000      &    0.002   & \underline{\textbf{0.986}}  & 0.012     &   0.000          &  \underline{0.980} & \underline{0.963}    \\ \cmidrule(lr){3-7} \cmidrule(lr){8-9}
     &  WBSTS      &   0.000 &  0.000    & {\bf{0.414}} &     0.299  &  0.287        &  0.904 & 0.900 \\  \cmidrule(lr){1-2} \cmidrule(lr){3-7} \cmidrule(lr){8-9} 
       \ref{model-c2}    &  NP-MOJO-$0$      &   --     & --   & \textbf{0.952}    & 0.047 &  0.001          & -- & --  \\
&  NP-MOJO-$1$        & --    & --    & \textbf{0.930}  &    0.068 & 0.002     & -- &  --  \\
      &  NP-MOJO-$2$      & 0.001  &  0.054  &  \bf{0.908}  & 0.036        &  0.001              &   {0.949} & \underline{0.926} \\ 
      &  NP-MOJO-$\mc L$   &  0.001      &    0.051   & \underline{\textbf{0.942}}  & 0.006     &   0.000          &  \underline{0.950} & \underline{0.926}    \\
      \cmidrule(lr){3-7} \cmidrule(lr){8-9}
      &  WBSTS      &   0.007 &  0.021    & {\bf{0.899}} &     0.062  &  0.011        &  0.896 & 0.852 \\  \cmidrule(lr){1-2} \cmidrule(lr){3-7} \cmidrule(lr){8-9} 
      \ref{model-c3}    &  NP-MOJO-$0$  &  --      &  0.409  & \textbf{0.533}    & 0.056     &     0.002          & 0.744  & 0.484       \\
&  NP-MOJO-$1$  & --    &   0.236  &  {\textbf{0.682}}  &  0.081  &    0.001 & 0.819 &  0.633         \\
     &  NP-MOJO-$2$  & -- &  0.299  & \textbf{{0.626}}  & 0.073       &  0.002    &  {0.787} &{0.571}     \\
     &  NP-MOJO-$\mc L$ &  --      &    0.210   & \underline{\textbf{0.727}}  & 0.062     &   0.001          &  \underline{0.823} & \underline{0.645}      \\ \cmidrule(lr){3-7} \cmidrule(lr){8-9}
     &  WBSTS   & -- & 0.003 & 0.025 & 0.054 & \textbf{0.918} & 0.662 & 0.487    \\  \cmidrule(lr){1-2} \cmidrule(lr){3-7} \cmidrule(lr){8-9} 
       \ref{model-c4}    &  NP-MOJO-$0$      & --  &  --  & \textbf{0.904}    & 0.090    &   0.006          &  -- &  -- \\
&  NP-MOJO-$1$        &  0.004   &   0.159  & \bf{0.783}  &   0.051 & 0.003      & \underline{0.907} & \underline{0.893}   \\
      &  NP-MOJO-$2$      & -- &  --  & \textbf{{0.888}}  & 0.107   &  0.005 & --  & --\\
      &  NP-MOJO-$\mc L$   &  0.004      &    0.165   &  {\textbf{0.818}}  & 0.013     &   0.000          &  \underline{0.907} &  {0.891}    \\ \cmidrule(lr){3-7} \cmidrule(lr){8-9}
      &  SBS      &   0.070 &  0.000    & \underline{\bf{0.911}} & 0.019  &  0.000        &  0.903 & 0.875 \\  \cmidrule(lr){1-2} \cmidrule(lr){3-7} \cmidrule(lr){8-9} 
       \ref{model-c5}    &  NP-MOJO-$0$      & --      & --   & \textbf{0.939}    & 0.058   & 0.003    & --  & --  \\
&  NP-MOJO-$1$        &  0.000   &   0.011  & \bf{0.952}  &  0.035 & 0.002      &  {0.974} & \underline{0.957}   \\
      &  NP-MOJO-$2$      & -- & --   & \textbf{0.926}  & 0.073    &  0.001             & --  & -- \\ 
      &  NP-MOJO-$\mc L$   &  0.000      &    0.012   & \underline{\textbf{0.979}}  & 0.009     &   0.000          &  \underline{0.976} & \underline{0.957}    \\ \cmidrule(lr){3-7} \cmidrule(lr){8-9}
      &  SBS      &   0.006 &  0.000    &  {\bf{0.961}} &     0.033  &  0.000        &  0.967 & 0.942 \\  \cmidrule(lr){1-2} \cmidrule(lr){3-7} \cmidrule(lr){8-9} 
       \ref{model-c6}    &  NP-MOJO-$0$      & --      & --   & \textbf{0.969}    & 0.031   & 0.000    & --  & --  \\
&  NP-MOJO-$1$        &  0.048   &   0.299  & \bf{0.636}  &  0.017 & 0.000      &  {0.851} & {0.838}   \\
      &  NP-MOJO-$2$      & -- & --   & \textbf{0.960}  & 0.040    &  0.000             & --  & -- \\ 
      &  NP-MOJO-$\mc L$   &  0.047      &    0.299   & {\textbf{0.649}}  & 0.005     &   0.000          &  {0.850} & {0.836}    \\ \cmidrule(lr){3-7} \cmidrule(lr){8-9}
      &  SBS      &   0.000 &  0.000    &  \underline{\bf{0.889}} &     0.105  &  0.006        &  \underline{0.977} & \underline{0.964} \\ 

      \bottomrule
\end{longtable}}
\endgroup

\subsubsection{Changes in higher-order moments}\label{sec:sims-high}

We simulate scenarios where there are changes in stochastic properties beyond the first two moments. 
In what follows, we have $q = 2$ and $(\theta_1, \theta_2) = (333, 667)$.
\begin{enumerate}[(D1)] 
\item \label{model-d1} $X_t \sim_{\text{i.i.d.}} \mc N(0, 1)$ for $t \leq \theta_1$ and $t \geq \theta_2 + 1$, and $X_t \sim_{\text{ i.i.d.}} t_{2.5}/ \sqrt{5}$ for $\theta_1 + 1 \leq t \leq \theta_2$.
\item \label{model-d2} $X_t \sim_{\text{ i.i.d.}} 0.5 + (\chi^2_1 - 1)/ 2\sqrt{2}$ for $t \leq \theta_1$ and $t \geq \theta_2 + 1$, and $X_t \sim_{\text{ i.i.d.}} \mc N(0.5, 0.5^2)$ for $\theta_1 + 1 \leq t \leq \theta_2$.
\item \label{model-d3} $X_t = 0.4X_{t-1} + \vep_t$ where $\vep_t \sim_{\text{i.i.d.}} \mc N(0, 0.5^2)$ for $t \leq \theta_1$ and $t \geq \theta_2 + 1$, and $ \vep_t \sim_{\text{i.i.d.}} \text{Exponential}(0.5) - 0.5$ for $\theta_1 + 1 \leq t \leq \theta_2$.
\item \label{model-d4} \tcb{$X_t \sim_{\text{i.i.d.}} \mc N_{10}(\mbf{0}, \mbf{I})$ for $t \leq \theta_1$ and $t \geq \theta_2 + 1$, and $X_t = (Y_t, Z_t)$, $Y_t \sim_{\text{i.i.d.}} \mc N_{3}(\mbf{0}, \mbf{I})$, $Z_t = (Z_{1t}, \ldots, Z_{7t})^\mathsf{T}$ where  $Z_{it} \sim_{\text{i.i.d.}}  t_{2.5}/\sqrt{5}$ for $\theta_1 + 1 \leq t \leq \theta_2$.}
\end{enumerate}

Model~\ref{model-d1} is taken from \cite{padilla2021optimal}, where $\mathbb{E}(X_t) = 0$ and $\Var(X_t) = 1$ for all $t$ and changes occur in the tail of the distribution. 
Model~\ref{model-d2} is a variation of a scenario studied in \cite{arlot2019kernel}, where $\mathbb{E}(X_t) = 0.5$ and $\Var(X_t) = 0.25$ for all $t$ with changes in the tail behaviour. 
Model~\ref{model-d3} considers changes in higher order moments but allows the data to be serially correlated. \tcb{Model~\ref{model-d4} is a higher-dimensional version of Model~\ref{model-d2} with changes in a subset of the variables.}
The results are reported in Table~\ref{multcpt-high}, from which we see that the multi-lag NP-MOJO procedure gives the strongest overall performance, particularly in the serially correlated model~\ref{model-d3}. 
KCPA performs the best from the competing methods \tcb{(and is the best overall in~\ref{model-d4})}, and NWBS tends to under-detect the change points while cpt.np over-detects them.

\begingroup
{\small
\setlength{\tabcolsep}{3pt}
\setlength{\LTcapwidth}{\textwidth}
\begin{longtable}{c c ccccc cc}
\caption{\ref{model-d1}--\ref{model-d3}: we report the distribution of the estimated number of change points and the average CM and VM over 1000 realisations.
The modal value of $\wh q - q$ in each row is given in bold.
Also, the best performance for each metric is underlined for each scenario.}
\label{multcpt-high}
\endfirsthead
\endhead
\toprule	
&& \multicolumn{5}{c}{$\wh{q} - q$} &   &   \\ 
 Model       & Method      &  $-2$     & $-1$        & $\mathbf{0}$  & $1$    & $\geq 2$    & CM & VM       \\ 
 \cmidrule(lr){1-2} \cmidrule(lr){3-7} \cmidrule(lr){8-9}
\ref{model-d1}    &  NP-MOJO-$0$      & 0.000       &  0.069  & {\bf{0.892}}    & 0.037     &     0.002          & {0.933}  & {0.904}  \\
 &  NP-MOJO-$1$        &  0.003   &   0.134  & \bf{0.810}  &        0.053 & 0.000      & 0.902 & 0.874   \\
       &  NP-MOJO-$2$      & 0.000 &  0.128  & \bf{0.823}  & 0.049        &  0.000              &  0.905 & 0.878 \\ 
       &  NP-MOJO-$\mc L$   &  0.000      &    0.034   & \underline{\textbf{0.960}}  & 0.006     &   0.000          &  \underline{0.942} & \underline{0.909}    \\ \cmidrule(lr){3-7} \cmidrule(lr){8-9}
       &  E-Divisive      &  0.113     &  0.086  & \bf{0.699} &    0.079   & 0.023 & 0.832 & 0.770 \\
       &  KCPA      &  0.086    &  0.002  &  {\bf{0.890}} & 0.019       &  0.003      & {0.909}  & {0.853} \\      
       &  NWBS      & \bf{0.496}     & 0.070   & 0.339 &   0.076   &   0.019    & 0.582   & 0.394 \\
       &  cpt.np      & 0.006     & 0.004   &  \bf{0.592} &     0.276   &  0.122   & 0.896 & 0.864 \\ 
        \cmidrule(lr){1-2} \cmidrule(lr){3-7} \cmidrule(lr){8-9}
               \ref{model-d2}      &  NP-MOJO-$0$      &  0.000      & 0.005  & {\bf{0.981}}     &  0.014    & 0.000           & \underline{0.970} & {0.944} \\
&  NP-MOJO-$1$        &  0.000    &  0.126 & \bf{0.824}   & 0.049   & 0.001 & 0.904 & 0.874 \\
      &  NP-MOJO-$2$      & 0.001     & 0.105    & \bf{0.831}    & 0.060 & 0.003   & 0.909  & 0.879 \\ &  NP-MOJO-$\mc L$   &  0.000      &    0.003   & \underline{\textbf{0.993}}  & 0.004     &   0.000          &   {0.964} &  {0.934}    \\ \cmidrule(lr){3-7} \cmidrule(lr){8-9}
      &  E-Divisive &  0.000  & 0.000 & \bf{0.894} &  0.058  & 0.048     & 0.956 & {0.931} \\
      &  KCPA      &  0.104     & 0.001   & \bf{0.880} & 0.014    &   0.001      & 0.897  & 0.835  \\      
      &  NWBS      & 0.350    &  0.000  & \bf{0.508}   & 0.101    & 0.041    & 0.731 & 0.596 \\
      &  cpt.np      & 0.000  &  0.000 & \bf{0.741}  & 0.184  & 0.075   & 0.962  & \underline{0.950}  \\  
     \cmidrule(lr){1-2} \cmidrule(lr){3-7} \cmidrule(lr){8-9}
          \ref{model-d3}   &  NP-MOJO-$0$      &  0.003      & 0.139  & {\bf{0.809}}     & 0.049      &   0.000         & {0.899}  &  {0.872} \\
&  NP-MOJO-$1$        &0.006     &  0.155   &  \bf{0.792}   & 0.047    & 0.000 & {0.892} &  {0.864}\\
     &  NP-MOJO-$2$      & 0.021     & 0.248   &  \bf{0.685}  &   0.045  & 0.001   &  0.848 & 0.819 \\ 
     &  NP-MOJO-$\mc L$   &  0.002      &    0.082   & \underline{\textbf{0.914}}  & 0.002     &   0.000          &  \underline{0.917} & \underline{0.885}    \\ \cmidrule(lr){3-7} \cmidrule(lr){8-9}
     &  E-Divisive       & 0.005       &  0.002   & 0.072   & 0.118   &  \textbf{0.803}         & 0.681 & 0.707  \\
     &  KCPA      &  0.441      & 0.012   & \bf{0.481} & 0.052     &   0.014       & 0.667  & 0.500 \\      
     &  NWBS      & 0.047     & 0.015   & 0.139 &   0.124   &   \bf{0.675}    & 0.680   & 0.676 \\
     &  cpt.np      & 0.000     & 0.000   & 0.045 &     0.055   &  \textbf{0.900}   & 0.726 & 0.756 \\    \cmidrule(lr){1-2} \cmidrule(lr){3-7} \cmidrule(lr){8-9}
          \ref{model-d4}   &  NP-MOJO-$0$      &  0.000      & 0.016  & {\bf{0.979}}     & 0.005      &   0.000         & {0.972}  &  {0.950} \\
&  NP-MOJO-$1$        & 0.009     &  0.211   &  \bf{0.774}   & 0.006    & 0.000 & {0.890} &  {0.871}\\
     &  NP-MOJO-$2$      & 0.008     & 0.207   &  \bf{0.775}  &   0.010  & 0.000   &  0.893 & 0.875 \\ 
     &  NP-MOJO-$\mc L$   &  0.000      &    0.015   & {\textbf{0.985}}  & 0.000     &   0.000          &  {0.971} & {0.948}    \\ \cmidrule(lr){3-7} \cmidrule(lr){8-9}
     &  E-Divisive       & 0.000       &  0.000   &\textbf{0.908}   & 0.069   &  {0.023}         & 0.970 & 0.952  \\
     &  KCPA      &  0.000      & 0.000   & \underline{\bf{1.000}} & 0.000     &   0.000       & \underline{0.991}  & \underline{0.979} \\      
      \bottomrule
\end{longtable}}
\endgroup

\subsection{\tcb{Varying the tuning parameters}}

In this section, we perform simulations with different values of tuning parameters to those considered in the main simulations, to assess the method's sensitivity to some of the choices of tuning parameters. In all simulations, we consider the following subset of the scenarios used in Section \ref{sec:sim}: \ref{model-n3}, (A1), \ref{model-b5}, \ref{model-c1}, and \ref{model-d3}.

\subsubsection{\tcb{Kernel choice}}
\label{sec:sim-study:kernel}

As in \cite{matteson2014} and \cite{arlot2019kernel}, we must choose a kernel for computing the test statistic, which will affect the performance of the change point detection method. Here, we analyse NP-MOJO's performance when using kernel $h_1$ (the recommended kernel in \cite{arlot2019kernel}, c.f. Lemma~\ref{lemma:weight-int-identities} in the main text) and kernel $h_3$ (the kernel used in \cite{matteson2014}, c.f. Lemma~\ref{lemma:weight-int3}). All other tuning parameters are identical to those described in Section 4 of the main text, with the value $\gamma = 1$ used for the kernel $h_3$ (as in \cite{matteson2014}). The results for $h_1$ and $h_3$ mirror the results of KCPA and E-Divisive from the main simulation study respectively, and demonstrate the superiority in performance of kernel $h_2$. After $h_2$, kernel $h_1$ is the best performing, with $h_3$ generally the worst.

\begingroup
{\small
\setlength{\tabcolsep}{3pt}
\setlength{\LTcapwidth}{\textwidth}
\begin{longtable}{c c ccccc cc}
\caption{We report the distribution of the estimated number of change points and the average CM and VM over 1000 realisations when kernel $h_1$ is used. The modal value of $\wh q - q$ in each row is given in bold.}
\label{table-h1}
\endfirsthead
\endhead
\toprule	
&& \multicolumn{5}{c}{$\wh{q} - q$ / $\wh{q}_\ell - q_\ell$} &   &   \\ 
 Model       & Method      &  $\leq-2$     & $-1$        & $\mathbf{0}$  & $1$    & $\geq 2$    & CM & VM       \\ 
 \cmidrule(lr){1-2} \cmidrule(lr){3-7} \cmidrule(lr){8-9}
\ref{model-n3}   &  NP-MOJO-$0$   &  --      & --   &  \textbf{0.827}    & 0.153     &    0.020       & --  & --   \\
&  NP-MOJO-$1$     &  --   & --     & \textbf{0.809}   & 0.167         &  0.024          & --   & --      \\
      &  NP-MOJO-$2$   &  --      &    --   & \textbf{0.792}  & 0.183     &   0.025          &  -- & --    \\ 
            &  NP-MOJO-$\mc L$   &  --      &    --  & \textbf{0.784}  & 0.188     &   0.028          &  --& --    \\  \cmidrule(lr){1-2} \cmidrule(lr){3-7} \cmidrule(lr){8-9}
(A1) &  NP-MOJO-$0$ &  0.000      & 0.002    &  \textbf{0.998}    &  0.000       & 0.000             &  0.972 &  0.956     \\
&  NP-MOJO-$1$  & 0.000     & 0.000     & {\textbf{1.000}}   & 0.000          &   0.000         & 0.975   & 0.960        \\
      &  NP-MOJO-$2$ &   0.000     &  0.000     & \textbf{1.000}  & 0.000        &   0.000          & 0.975  & 0.960     \\
       &  NP-MOJO-$\mc L$   &  0.000      &    0.000   & {\textbf{1.000}}  & 0.000     &   0.000          &  0.975 & 0.959    \\     \cmidrule(lr){1-2} \cmidrule(lr){3-7} \cmidrule(lr){8-9}
\ref{model-b5}     &  NP-MOJO-$0$    &    0.169    &  0.363  &  {\textbf{0.452}}    &  0.016      &    0.000         &  0.779 & 0.802  \\
&  NP-MOJO-$1$  &  \textbf{0.888}    &  0.104    & 0.008   &    0.000     &  0.000          & 0.382   & 0.279        \\
      &  NP-MOJO-$2$  & \textbf{0.868}       &  0.112     & 0.019  &  0.001     &  0.000      & 0.407  & 0.327    \\
       &  NP-MOJO-$\mc L$   &  0.168      &    0.365   & {\textbf{0.467}}  & 0.000     &   0.000          &  0.780 & 0.803    \\ \cmidrule(lr){1-2} \cmidrule(lr){3-7} \cmidrule(lr){8-9}
\ref{model-c1}    &  NP-MOJO-$0$  &  --      &  --  &   {\textbf{0.797}}   &    0.187    &  0.016            & --  &  --     \\
&  NP-MOJO-$1$  &  0.117    & 0.375     & {\textbf{0.450}}   & 0.055          &  0.003          &  0.734  & 0.692         \\
      &  NP-MOJO-$2$ &  --    & --     & \textbf{0.758}   & 0.218          &  0.024          &  --  & --      \\ 
       &  NP-MOJO-$\mc L$   &  0.114     &    0.398   & {\textbf{0.481}}  & 0.007     &   0.000          &  0.731 & 0.687    \\ \cmidrule(lr){1-2} \cmidrule(lr){3-7} \cmidrule(lr){8-9} 
    \ref{model-d3}     &  NP-MOJO-$0$  &  0.074      &  0.386  &   \textbf{0.506}  &    0.033    &  0.001            & 0.771  &  0.735     \\
&  NP-MOJO-$1$  &  0.191   & \textbf{0.513}   & \textbf{0.274}   & 0.021        &  0.001          &  {0.657}  & {0.591}         \\
      &  NP-MOJO-$2$ &  0.235   & \textbf{0.502}   & 0.237  & 0.025         &  0.001        &  {0.629} & {0.549}      \\ 
       &  NP-MOJO-$\mc L$   &  0.068     &    0.396  & \textbf{0.530} & 0.006     &   0.000      &  0.775 & 0.739  \\
      \bottomrule
\end{longtable}}
\endgroup

\begingroup
{\small
\setlength{\tabcolsep}{3pt}
\setlength{\LTcapwidth}{\textwidth}
\begin{longtable}{c c ccccc cc}
\caption{We report the distribution of the estimated number of change points and the average CM and VM over 1000 realisations when kernel $h_3$ is used. The modal value of $\wh q - q$ in each row is given in bold.}
\label{table-h3}
\endfirsthead
\endhead
\toprule	
&& \multicolumn{5}{c}{$\wh{q} - q$ / $\wh{q}_\ell - q_\ell$} &   &   \\ 
 Model       & Method      &  $\leq-2$     & $-1$        & $\mathbf{0}$  & $1$    & $\geq 2$    & CM & VM       \\ 
 \cmidrule(lr){1-2} \cmidrule(lr){3-7} \cmidrule(lr){8-9}
\ref{model-n3}   &  NP-MOJO-$0$   &  --      & --   &  \textbf{0.801}    & 0.175      &    0.024          & --  & --   \\
&  NP-MOJO-$1$     &  --   & --     & \textbf{0.796}   & 0.175         &  0.029         & --   & --      \\
      &  NP-MOJO-$2$   &  --      &    --   & \textbf{0.773}  & 0.198     &   0.029          &  -- & --    \\ 
            &  NP-MOJO-$\mc L$   &  --      &    --  & \textbf{0.761}  & 0.205     &   0.034          &  --& --    \\  \cmidrule(lr){1-2} \cmidrule(lr){3-7} \cmidrule(lr){8-9}
(A1) &  NP-MOJO-$0$ &  0.000      & 0.000    &  \underline{\textbf{1.000}}    &  0.000       & 0.000             &  0.975 &  0.959      \\
&  NP-MOJO-$1$  & 0.000     & 0.000     & {\textbf{1.000}}   & 0.000          &   0.000         & 0.976   & 0.960        \\
      &  NP-MOJO-$2$ &   0.000     &  0.000     & \textbf{1.000}  & 0.000        &   0.000          & 0.976  & 0.960     \\
       &  NP-MOJO-$\mc L$   &  0.000      &    0.000   & \underline{\textbf{1.000}}  & 0.000     &   0.000          &  0.975 & 0.960    \\     \cmidrule(lr){1-2} \cmidrule(lr){3-7} \cmidrule(lr){8-9}
\ref{model-b5}     &  NP-MOJO-$0$    &    \textbf{0.677}    &  0.232  &  0.091   &  0.000     &    0.000          &   0.512 & 0.487  \\
&  NP-MOJO-$1$  &  \textbf{0.970}    &  0.029    & {0.001}   &    0.000     &  0.000          & 0.324   & 0.166        \\
      &  NP-MOJO-$2$  & \textbf{0.951}       &  0.045     & 0.003 &  0.001     &  0.000      & 0.334  & 0.185     \\
       &  NP-MOJO-$\mc L$   &  \textbf{0.677}      &    0.235  & 0.088  & 0.000     &   0.000          &  0.514 & 0.490    \\ \cmidrule(lr){1-2} \cmidrule(lr){3-7} \cmidrule(lr){8-9}
\ref{model-c1}    &  NP-MOJO-$0$  &  --      &  --  &   {\textbf{0.771}}   &    0.209    &  0.020           & --  &  --     \\
&  NP-MOJO-$1$  & \textbf{0.486}   & 0.340     & 0.164   & 0.010         &  0.000          &  {0.513}  & {0.348}           \\
      &  NP-MOJO-$2$ &  --    & --     & \textbf{0.753}   & 0.220      &  0.027          &  --  & --      \\ 
       &  NP-MOJO-$\mc L$   &  \textbf{0.476}      &    0.375   & 0.148  & 0.001     &   0.000          &  0.511 & 0.349     \\ \cmidrule(lr){1-2} \cmidrule(lr){3-7} \cmidrule(lr){8-9} 
   \ref{model-d3}     &  NP-MOJO-$0$  &  \textbf{0.537}      &  0.379  &   0.076  &    0.008    &  0.000            & 0.486  &  0.306     \\
&  NP-MOJO-$1$  &  \textbf{0.518}   & 0.390    & {0.085}   & 0.007         &  0.000          &  {0.491}  & {0.317}         \\
      &  NP-MOJO-$2$ &  \textbf{0.559}    & 0.354   & 0.080  & 0.007         &  0.000        &  {0.477} & {0.287}      \\ 
       &  NP-MOJO-$\mc L$   &  \textbf{0.470}      &    0.433  & 0.093  & 0.004     &   0.000      &  0.511 & 0.351  \\
      \bottomrule
\end{longtable}}
\endgroup

\subsubsection{\tcb{Bootstrap parameter $b_n$}}
\label{sec:sim-study:bn}

The bootstrap parameter $b_n$ affects the level of dependence of the bootstrapped multiplier sequence, with larger values of $b_n$ giving higher levels of dependence. In previous simulations, we set $b_n = 1.5 n^{1/3}$; here, we assess the performance of NP-MOJO using $b_n = n^{1/3}$ and $b_n = 2 n^{1/3}$. Table~\ref{table-b10} shows the results when $b_n = n^{1/3}$, and Table~\ref{table-b20} shows the results with $b_n = 2n^{1/3}$. Firstly, as we would expect, the larger the value of $b_n$, the more conservative NP-MOJO is, due to the increased dependence in the bootstrap sequence. The results, to be compared with those in the main text where $b_n = 1.5 n^{1/3}$, show that NP-MOJO is largely insensitive to the choice of $b_n$, with the method performing similarly over the range of bandwidths $b_n \in \{ n^{1/3}, 1.5n^{1/3}, 2n^{1/3} \}$. 

\begingroup
{\small
\setlength{\tabcolsep}{3pt}
\setlength{\LTcapwidth}{\textwidth}
\begin{longtable}{c c ccccc cc}
\caption{We report the distribution of the estimated number of change points and the average CM and VM over 1000 realisations when $b_n = n^{1/3}$. The modal value of $\wh q - q$ in each row is given in bold.}
\label{table-b10}
\endfirsthead
\endhead
\toprule	
&& \multicolumn{5}{c}{$\wh{q} - q$ / $\wh{q}_\ell - q_\ell$} &   &   \\ 
 Model       & Method      &  $\leq-2$     & $-1$        & $\mathbf{0}$  & $1$    & $\geq 2$    & CM & VM       \\ 
 \cmidrule(lr){1-2} \cmidrule(lr){3-7} \cmidrule(lr){8-9}
\ref{model-n3}   &  NP-MOJO-$0$   &  --      & --   &  \textbf{0.840}    & 0.140     &    0.020       & --  & --   \\
&  NP-MOJO-$1$     &  --   & --     & \textbf{0.852}   & 0.133         &  0.015          & --   & --      \\
      &  NP-MOJO-$2$   &  --      &    --   & \textbf{0.825}  & 0.156     &   0.019        &  -- & --    \\ 
            &  NP-MOJO-$\mc L$   &  --      &    --  & \textbf{0.791}  & 0.187     &   0.022          &  --& --    \\  \cmidrule(lr){1-2} \cmidrule(lr){3-7} \cmidrule(lr){8-9}
(A1) &  NP-MOJO-$0$ &  0.000      & 0.004    &  \textbf{0.988}    &  0.008       & 0.000             &  0.961 &  0.943     \\
&  NP-MOJO-$1$  & 0.000     & 0.000     & {\textbf{1.000}}   & 0.000          &   0.000         & 0.972   & 0.955        \\
      &  NP-MOJO-$2$ &   0.000     &  0.000     & \textbf{0.999}  & 0.001        &   0.000          & 0.972  & 0.955     \\
       &  NP-MOJO-$\mc L$   &  0.000      &    0.000   & {\textbf{1.000}}  & 0.000     &   0.000          &  0.969 & 0.952    \\     \cmidrule(lr){1-2} \cmidrule(lr){3-7} \cmidrule(lr){8-9}
\ref{model-b5}     &  NP-MOJO-$0$    &    0.000    &  0.000  &  {\textbf{0.998}}    &  0.002      &    0.000          &  0.975 & 0.960  \\
&  NP-MOJO-$1$  &  0.001    &  0.056    & \textbf{0.926}   &    0.017     &  0.000          & 0.947   & 0.936        \\
      &  NP-MOJO-$2$  & 0.001       &  0.044     & {\textbf{0.935}}  &  0.020     &  0.000      & 0.951  & 0.938     \\
       &  NP-MOJO-$\mc L$   &  0.000      &    0.000   & {\textbf{1.000}}  & 0.000     &   0.000          &  0.973 & 0.957    \\ \cmidrule(lr){1-2} \cmidrule(lr){3-7} \cmidrule(lr){8-9}
\ref{model-c1}    &  NP-MOJO-$0$  &  --      &  --  &   {\textbf{0.806}}   &    0.169    &  0.025            & --  &  --     \\
&  NP-MOJO-$1$  &  0.000    & 0.000     & {\textbf{0.934}}   & 0.066          &  0.000          &  0.975  & 0.959         \\
      &  NP-MOJO-$2$ &  --    & --     & \textbf{0.795}   & 0.177          &  0.028          &  --  & --      \\ 
       &  NP-MOJO-$\mc L$   &  0.000      &    0.000   & {\textbf{0.983}}  & 0.017     &   0.000          &  0.980 & 0.962    \\ \cmidrule(lr){1-2} \cmidrule(lr){3-7} \cmidrule(lr){8-9} 
    \ref{model-d3}     &  NP-MOJO-$0$  &  0.002      &  0.106  &   {\textbf{0.826}}   &    0.066    &  0.000            & 0.908  &  0.879     \\
&  NP-MOJO-$1$  &  0.003   & 0.107     & \textbf{0.830}   & 0.060         &  0.000          &  {0.905}  & {0.876}         \\
      &  NP-MOJO-$2$ &  0.014    & 0.198    & {\textbf{0.725}}   & 0.062         &  0.001        &  {0.863} & {0.834}      \\ 
       &  NP-MOJO-$\mc L$   &  0.002      &    0.058   & {\textbf{0.932}}  & 0.008     &   0.000      &  0.923 & 0.888  \\
      \bottomrule
\end{longtable}}
\endgroup

\begingroup
{\small
\setlength{\tabcolsep}{3pt}
\setlength{\LTcapwidth}{\textwidth}
\begin{longtable}{c c ccccc cc}
\caption{We report the distribution of the estimated number of change points and the average CM and VM over 1000 realisations when $b_n = 2n^{1/3}$. The modal value of $\wh q - q$ in each row is given in bold.}
\label{table-b20}
\endfirsthead
\endhead
\toprule	
&& \multicolumn{5}{c}{$\wh{q} - q$ / $\wh{q}_\ell - q_\ell$} &   &   \\ 
 Model       & Method      &  $\leq-2$     & $-1$        & $\mathbf{0}$  & $1$    & $\geq 2$    & CM & VM       \\ 
 \cmidrule(lr){1-2} \cmidrule(lr){3-7} \cmidrule(lr){8-9}
\ref{model-n3}   &  NP-MOJO-$0$   &  --      & --   &  \textbf{0.888}    & 0.103      &    0.009          & --  & --   \\
&  NP-MOJO-$1$     &  --   & --     & \textbf{0.901}   & 0.092         &  0.007          & --   & --      \\
      &  NP-MOJO-$2$   &  --      &    --   & \textbf{0.880}  & 0.112     &   0.008          &  -- & --    \\ 
            &  NP-MOJO-$\mc L$   &  --      &    --  & \textbf{0.854}  & 0.136     &   0.010          &  --& --    \\  \cmidrule(lr){1-2} \cmidrule(lr){3-7} \cmidrule(lr){8-9}
(A1) &  NP-MOJO-$0$ &  0.001      & 0.040    &  \textbf{0.954}    &  0.005       & 0.000             &  0.953 &  0.939      \\
&  NP-MOJO-$1$  & 0.000     & 0.011     & {\textbf{0.989}}   & 0.000          &   0.000         & 0.969   & 0.953        \\
      &  NP-MOJO-$2$ &   0.000     &  0.006     & \textbf{0.993}  & 0.001        &   0.000          & 0.979  & 0.954     \\
       &  NP-MOJO-$\mc L$   &  0.000      &    0.004   & \textbf{0.996}  & 0.000     &   0.000          &  0.970 & 0.953    \\     \cmidrule(lr){1-2} \cmidrule(lr){3-7} \cmidrule(lr){8-9}
\ref{model-b5}     &  NP-MOJO-$0$    &    0.000    &  0.003  &  {\textbf{0.995}}    &  0.002      &    0.000          &   {0.974} & 0.959  \\
&  NP-MOJO-$1$  &  0.020    &  0.169    & \textbf{0.806}   &    0.005     &  0.000          & 0.913   & 0.916        \\
      &  NP-MOJO-$2$  & 0.016       &  0.167     & {\textbf{0.811}}  &  0.006     &  0.000      & 0.916  & 0.918     \\
       &  NP-MOJO-$\mc L$   &  0.000      &    0.002   & {\textbf{0.998}}  & 0.000     &   0.000          &  0.973 & 0.958    \\ \cmidrule(lr){1-2} \cmidrule(lr){3-7} \cmidrule(lr){8-9}
\ref{model-c1}    &  NP-MOJO-$0$  &  --      &  --  &   {\textbf{0.857}}   &    0.134    &  0.009            & --  &  --     \\
&  NP-MOJO-$1$  & 0.000   & 0.002     & \textbf{0.965}   & 0.033         &  0.000          &  {0.979}  & {0.962}           \\
      &  NP-MOJO-$2$ &  --    & --     & \textbf{0.859}   & 0.133      &  0.008          &  --  & --      \\ 
       &  NP-MOJO-$\mc L$   &  0.000      &    0.002   & {\textbf{0.988}}  & 0.010     &   0.000          &  0.980 & 0.963     \\ \cmidrule(lr){1-2} \cmidrule(lr){3-7} \cmidrule(lr){8-9} 
    \ref{model-d3}     &  NP-MOJO-$0$  &  0.003      &  0.173  &   {\textbf{0.782}}   &    0.042    &  0.000            & 0.891  &  0.866     \\
&  NP-MOJO-$1$  &  0.007   & 0.185     & \textbf{0.765}   & 0.043         &  0.000          &  {0.883}  & {0.857}         \\
      &  NP-MOJO-$2$ &  0.025    & 0.270    & {\textbf{0.670}}   & 0.035         &  0.000        &  {0.840} & {0.811}      \\ 
       &  NP-MOJO-$\mc L$   &  0.001     &    0.101   & {\textbf{0.897}}  & 0.001     &   0.000          &  0.911 & 0.880  \\
      \bottomrule
\end{longtable}}
\endgroup

\subsubsection{\tcb{Varying $\eta$}}
\label{sec:sim-study:eta}

Next, we check the sensitivity of NP-MOJO to the choice of $\eta$, the parameter that governs how large the local environment is when deciding if a local maximum should be declared a true change point. We run the method with $\eta = 0 .2$ and $\eta = 0.6$ with all other tuning parameters identical to those  described in the main simulation study (where $\eta = 0.4$). The results for $\eta = 0.2$ are given in Table~\ref{table-eta2}, and the results for $\eta = 0.6$ are given in Table~\ref{table-eta6}, demonstrating that that multi-lag NP-MOJO is generally insensitive to the choice of $\eta$. When $\eta = 0.2$, corresponding to a less strict change point acceptance rule, the single-lag NP-MOJO procedure can overestimate the true number of changes. However, the aggregation of multi-lag NP-MOJO is able to correct for the overestimation and correctly estimate the true number of change points: see for example the result for Model (A1).

\begingroup
{\small
\setlength{\tabcolsep}{3pt}
\setlength{\LTcapwidth}{\textwidth}
\begin{longtable}{c c ccccc cc}
\caption{We report the distribution of the estimated number of change points and the average CM and VM over 1000 realisations when $\eta = 0.2$. The modal value of $\wh q - q$ in each row is given in bold.}
\label{table-eta2}
\endfirsthead
\endhead
\toprule	
&& \multicolumn{5}{c}{$\wh{q} - q$ / $\wh{q}_\ell - q_\ell$} &   &   \\ 
 Model       & Method      &  $\leq-2$     & $-1$        & $\mathbf{0}$  & $1$    & $\geq 2$    & CM & VM       \\ 
 \cmidrule(lr){1-2} \cmidrule(lr){3-7} \cmidrule(lr){8-9}
\ref{model-n3}   &  NP-MOJO-$0$   &  --      & --   &  \textbf{0.867}    & 0.122     &    0.011       & --  & --   \\
&  NP-MOJO-$1$     &  --   & --     & \textbf{0.891}   & 0.099        &  0.010         & --   & --      \\
      &  NP-MOJO-$2$   &  --      &    --   & \textbf{0.863}  & 0.126     &   0.011          &  -- & --    \\ 
            &  NP-MOJO-$\mc L$   &  --      &    --  & \textbf{0.825}  & 0.161     &   0.014          &  --& --    \\  \cmidrule(lr){1-2} \cmidrule(lr){3-7} \cmidrule(lr){8-9}
(A1) &  NP-MOJO-$0$ &  0.000      & 0.015    &  \textbf{0.756}    &  0.202       & 0.027             &  0.947 &  0.934     \\
&  NP-MOJO-$1$  & 0.000     & 0.003     & {\textbf{0.851}}   & 0.138          &   0.008         & 0.963   & 0.949        \\
      &  NP-MOJO-$2$ &   0.000     &  0.003     & \textbf{0.868}  & 0.120        &   0.009          & 0.964  & 0.949     \\
       &  NP-MOJO-$\mc L$   &  0.000      &    0.002   & {\textbf{0.998}}  & 0.000     &   0.000          &  0.969 & 0.952    \\     \cmidrule(lr){1-2} \cmidrule(lr){3-7} \cmidrule(lr){8-9}
\ref{model-b5}     &  NP-MOJO-$0$    &    0.000    &  0.000  &  {\textbf{0.883}}    &  0.113      &    0.004          &  0.969 & 0.955  \\
&  NP-MOJO-$1$  &  0.005    &  0.108    & \textbf{0.729}   &    0.147     &  0.011         & 0.923   & 0.922        \\
      &  NP-MOJO-$2$  & 0.005       &  0.092     & {\textbf{0.726}}  &  0.170     &  0.007      & 0.928  & 0.925     \\
       &  NP-MOJO-$\mc L$   &  0.000      &    0.000   & {\textbf{1.000}}  & 0.000     &   0.000          &  0.973 & 0.958    \\ \cmidrule(lr){1-2} \cmidrule(lr){3-7} \cmidrule(lr){8-9}
\ref{model-c1}    &  NP-MOJO-$0$  &  --      &  --  &   {\textbf{0.845}}   &    0.135    &  0.020            & --  &  --     \\
&  NP-MOJO-$1$  &  --  & 0.002     & {\textbf{0.820}}   & 0.157          &  0.021          &  0.968  & 0.952         \\
      &  NP-MOJO-$2$ &  --    & --     & \textbf{0.839}   & 0.139         &  0.022          &  --  & --      \\ 
       &  NP-MOJO-$\mc L$   &  --      &    0.002   & {\textbf{0.984}}  & 0.014     &   0.000          &  0.980 & 0.962    \\ \cmidrule(lr){1-2} \cmidrule(lr){3-7} \cmidrule(lr){8-9} 
    \ref{model-d3}     &  NP-MOJO-$0$  &  0.003      &  0.125  &   {\textbf{0.516}}   &    0.294    &  0.062            & 0.889  &  0.863     \\
&  NP-MOJO-$1$  &  0.004   & 0.138     & \textbf{0.505}   & 0.285         &  0.068          &  {0.880}  & {0.854}         \\
      &  NP-MOJO-$2$ &  0.022    & 0.209    & {\textbf{0.471}}   & 0.237         &  0.061        &  {0.837} & {0.809}      \\ 
       &  NP-MOJO-$\mc L$   &  0.002      &    0.081   & {\textbf{0.906}}  & 0.011     &   0.000      &  0.916 & 0.883  \\
      \bottomrule
\end{longtable}}
\endgroup

\begingroup
{\small
\setlength{\tabcolsep}{3pt}
\setlength{\LTcapwidth}{\textwidth}
\begin{longtable}{c c ccccc cc}
\caption{We report the distribution of the estimated number of change points and the average CM and VM over 1000 realisations when $\eta = 0.6$. The modal value of $\wh q - q$ in each row is given in bold.}
\label{table-eta6}
\endfirsthead
\endhead
\toprule	
&& \multicolumn{5}{c}{$\wh{q} - q$ / $\wh{q}_\ell - q_\ell$} &   &   \\ 
 Model       & Method      &  $\leq-2$     & $-1$        & $\mathbf{0}$  & $1$    & $\geq 2$    & CM & VM       \\ 
 \cmidrule(lr){1-2} \cmidrule(lr){3-7} \cmidrule(lr){8-9}
\ref{model-n3}   &  NP-MOJO-$0$   &  --      & --   &  \textbf{0.867}    & 0.122     &    0.011       & --  & --   \\
&  NP-MOJO-$1$     &  --   & --     & \textbf{0.891}   & 0.100         &  0.009          & --   & --      \\
      &  NP-MOJO-$2$   &  --      &    --   & \textbf{0.863}  & 0.128     &   0.009          &  -- & --    \\ 
            &  NP-MOJO-$\mc L$   &  --      &    --  & \textbf{0.825}  & 0.161     &   0.014          &  --& --    \\  \cmidrule(lr){1-2} \cmidrule(lr){3-7} \cmidrule(lr){8-9}
(A1) &  NP-MOJO-$0$ &  0.000      & 0.018   &  \textbf{0.982}    &  0.000       & 0.000             &  0.959 &  0.942     \\
&  NP-MOJO-$1$  & 0.000     & 0.003     & {\textbf{0.997}}   & 0.000          &   0.000         & 0.971   & 0.955        \\
      &  NP-MOJO-$2$ &   0.000     &  0.003     & \textbf{0.997}  & 0.000        &   0.000          & 0.971  & 0.954     \\
       &  NP-MOJO-$\mc L$   &  0.000      &    0.002   & {\textbf{0.998}}  & 0.000     &   0.000          &  0.970 & 0.953    \\     \cmidrule(lr){1-2} \cmidrule(lr){3-7} \cmidrule(lr){8-9}
\ref{model-b5}     &  NP-MOJO-$0$    &    0.000    &  0.000  &  {\textbf{1.000}}    &  0.000      &    0.000          &  0.975 & 0.960  \\
&  NP-MOJO-$1$  &  0.005    &  0.131    & \textbf{0.864}   &    0.000     &  0.000          & 0.930   & 0.926        \\
      &  NP-MOJO-$2$  & 0.005       &  0.111     & {\textbf{0.883}}  &  0.001     &  0.000      & 0.935  & 0.929     \\
       &  NP-MOJO-$\mc L$   &  0.000      &    0.000   & {\textbf{1.000}}  & 0.000     &   0.000          &  0.974 & 0.958    \\ \cmidrule(lr){1-2} \cmidrule(lr){3-7} \cmidrule(lr){8-9}
\ref{model-c1}    &  NP-MOJO-$0$  &  --      &  --  &   {\textbf{0.845}}   &    0.139    &  0.016            & --  &  --     \\
&  NP-MOJO-$1$  &  0.000    & 0.002     & {\textbf{0.985}}   & 0.013          &  0.000          &  0.981  & 0.963         \\
      &  NP-MOJO-$2$ &  --    & --     & \textbf{0.839}   & 0.146          &  0.015          &  --  & --      \\ 
       &  NP-MOJO-$\mc L$   &  0.000      &    0.002   & {\textbf{0.985}}  & 0.013     &   0.000          &  0.980 & 0.963    \\ \cmidrule(lr){1-2} \cmidrule(lr){3-7} \cmidrule(lr){8-9} 
    \ref{model-d3}     &  NP-MOJO-$0$  &  0.003      &  0.139  &   {\textbf{0.848}}   &    0.010    &  0.000            & 0.902  &  0.874     \\
&  NP-MOJO-$1$  &  0.004   & 0.155     & \textbf{0.834}   & 0.007         &  0.000          &  {0.894}  & {0.866}         \\
      &  NP-MOJO-$2$ &  0.022    & 0.252    & {\textbf{0.716}}   & 0.010         &  0.000        &  {0.848} & {0.818}      \\ 
       &  NP-MOJO-$\mc L$   &  0.002      &    0.081   & {\textbf{0.914}}  & 0.003     &   0.000      &  0.917 & 0.884  \\
      \bottomrule
\end{longtable}}
\endgroup

\subsubsection{\tcb{Varying $c$}}
\label{sec:sim-study:c}

Lastly, we investigate the sensitivity of NP-MOJO to the choice of $c$, the parameter that dictates how closely clustered change points across different lags should be. We run the method with $c = 0.8$ and $c = 1.2$ with all other tuning parameters identical to those  described in the main simulation study (where $c=1$). As with the parameter $\eta$, smaller values of $c$ give rise to more detected change points, and larger values give fewere changes. The results for $c = 0.8$ are given in Table~\ref{table-c8}, and the results for $c = 1.2$ are given in Table~\ref{table-c12}. As in the case with varying $\eta$, increasing $c$ results in fewer number of change points detected. The results show multi-lag NP-MOJO is robust to the choice of $c$ and that the recommended choice of $c=1$ comprises a good balance between over and under-estimation of the true number of change points. 

\begingroup
{\small
\setlength{\tabcolsep}{3pt}
\setlength{\LTcapwidth}{\textwidth}
\begin{longtable}{c c ccccc cc}
\caption{We report the distribution of the estimated number of change points and the average CM and VM over 1000 realisations when $c = 0.8$. The modal value of $\wh q - q$ in each row is given in bold.}
\label{table-c8}
\endfirsthead
\endhead
\toprule	
&& \multicolumn{5}{c}{$\wh{q} - q$ / $\wh{q}_\ell - q_\ell$} &   &   \\ 
 Model       & Method      &  $\leq-2$     & $-1$        & $\mathbf{0}$  & $1$    & $\geq 2$    & CM & VM       \\ 
 \cmidrule(lr){1-2} \cmidrule(lr){3-7} \cmidrule(lr){8-9}
\ref{model-n3}  &  NP-MOJO-$\mc L$   &  --      &    --  & \textbf{0.828}  & 0.156     &   0.016          &  --& --    \\  \cmidrule(lr){1-2} \cmidrule(lr){3-7} \cmidrule(lr){8-9}
(A1) &  NP-MOJO-$\mc L$   &  0.000      &    0.001   & {\textbf{0.999}}  & 0.000     &   0.000          &  0.970 & 0.953    \\     \cmidrule(lr){1-2} \cmidrule(lr){3-7} \cmidrule(lr){8-9}
\ref{model-b5}     &   NP-MOJO-$\mc L$   &  0.000      &    0.001   & {\textbf{0.998}}  & 0.001     &   0.000          &  0.973 & 0.958    \\ \cmidrule(lr){1-2} \cmidrule(lr){3-7} \cmidrule(lr){8-9}
\ref{model-c1}    &  NP-MOJO-$\mc L$   &  0.000      &    0.001   & {\textbf{0.928}}  & 0.071    &   0.000          &  0.972 & 0.958    \\ \cmidrule(lr){1-2} \cmidrule(lr){3-7} \cmidrule(lr){8-9} 
    \ref{model-d3}     &  NP-MOJO-$\mc L$   &  0.002      &    0.081   & {\textbf{0.903}}  & 0.014     &   0.000      &  0.917 & 0.885  \\
      \bottomrule
\end{longtable}}
\endgroup

\begingroup
{\small
\setlength{\tabcolsep}{3pt}
\setlength{\LTcapwidth}{\textwidth}
\begin{longtable}{c c ccccc cc}
\caption{We report the distribution of the estimated number of change points and the average CM and VM over 1000 realisations when $c = 1.2$. The modal value of $\wh q - q$ in each row is given in bold.}
\label{table-c12}
\endfirsthead
\endhead
\toprule	
&& \multicolumn{5}{c}{$\wh{q} - q$ / $\wh{q}_\ell - q_\ell$} &   &   \\ 
 Model       & Method      &  $\leq-2$     & $-1$        & $\mathbf{0}$  & $1$    & $\geq 2$    & CM & VM       \\ 
 \cmidrule(lr){1-2} \cmidrule(lr){3-7} \cmidrule(lr){8-9}
\ref{model-n3}  &  NP-MOJO-$\mc L$   &  --      &    --  & \textbf{0.828}  & 0.159     &   0.013          &  --& --    \\  \cmidrule(lr){1-2} \cmidrule(lr){3-7} \cmidrule(lr){8-9}
(A1) &  NP-MOJO-$\mc L$   &  0.000      &    0.003   & {\textbf{0.997}}  & 0.000     &   0.000          &  0.970 & 0.953    \\     \cmidrule(lr){1-2} \cmidrule(lr){3-7} \cmidrule(lr){8-9}
\ref{model-b5}     &   NP-MOJO-$\mc L$   &  0.000      &    0.002   & {\textbf{0.998}}  & 0.000     &   0.000          &  0.973 & 0.958    \\ \cmidrule(lr){1-2} \cmidrule(lr){3-7} \cmidrule(lr){8-9}
\ref{model-c1}    &  NP-MOJO-$\mc L$   &  0.000      &    0.002   & {\textbf{0.998}}  & 0.000     &   0.000          &  0.982 & 0.964    \\ \cmidrule(lr){1-2} \cmidrule(lr){3-7} \cmidrule(lr){8-9} 
    \ref{model-d3}     &  NP-MOJO-$\mc L$   &  0.002      &    0.087   & {\textbf{0.911}}  & 0.000     &   0.000      &  0.917 & 0.885  \\
      \bottomrule
\end{longtable}}
\endgroup

\subsection{\tcb{Scalability of NP-MOJO}}
\label{sec:sim-study:n}

In this section we run simulations with sample \tcr{sizes $n=500$, $n=2000$, and $n = 10000$} to assess how the performance of NP-MOJO scales with $n$. As in Section~\ref{sec:sim-detect}, changes are equispaced (unevenly spaced settings are studied in Section~\ref{sec:sim-multiscale}). \tcr{For each scenario (except the null model~\ref{model-n3}), when $n=500$, there is 1 less change than when $n=1000$, when $n=2000$, there are 2 more changes, and when $n=10000$, there are 12 more changes. All tuning parameters are identical to those described in Section 4 of the main text, except when $n=2000$, we set $G = n/8$, and when $n=10000$, we set $G = n/25$, due to the smaller distance between change points.}

\tcr{The results for $n=500$, $n=2000$, and $n=10000$ are reported in Tables~\ref{n500},~\ref{n2000}, and~\ref{n10000} respectively. Where appropriate, when $n=10000$, the competing nonparametric methods E-Divisive and NWBS, and reference method SBS were not included due to their prohibitively slow running times (see Figure~\ref{fig:speed} of the main text for run time comparisons).} We see that NP-MOJO generally has improved performance with increasing $n$. For example, for single-lag NP-MOJO in the null model~\ref{model-n3} with heavily correlated errors, the proportion of times where $\hat{q} \neq 0$ decreases as $n$ increases. In terms of each method's relative performances, the results here closely align with those in the main simulation study.

\begingroup
{\small
\setlength{\tabcolsep}{3pt}
\setlength{\LTcapwidth}{\textwidth}
\begin{longtable}{c c ccccc cc}
\caption{We report the distribution of the estimated number of change points and the average CM and VM over 1000 realisations when $n=500$. The modal value of $\wh q - q$ in each row is given in bold.
 Also, the best performance for each metric is underlined for each scenario.}
\label{n500}
\endfirsthead
\endhead
\toprule	
&& \multicolumn{5}{c}{$\wh{q} - q$} &   &   \\ 
 Model       & Method      &  $\leq-2$     & $-1$        & $\mathbf{0}$  & $1$    & $\geq 2$    & CM & VM       \\ 
 \cmidrule(lr){1-2} \cmidrule(lr){3-7} \cmidrule(lr){8-9}
\ref{model-n3}   &  NP-MOJO-$0$   &  --      & --   &  \textbf{0.842}    & 0.143      &    0.015          & --  & --   \\
&  NP-MOJO-$1$     &  --   & --     & \textbf{0.856}   & 0.130         &  0.014          & --   & --      \\
      &  NP-MOJO-$2$   &  --      &    --   & \textbf{0.843}  & 0.144     &   0.013          &  -- & --    \\ 
            &  NP-MOJO-$\mc L$   &  --      &    --  & \textbf{0.794}  & 0.187     &   0.019          &  --& --   \\ \cmidrule(lr){3-7} \cmidrule(lr){8-9}
      &  E-Divisive      & --     &  --  & 0.000  & 0.001        &    \textbf{0.999}         & -- & --   \\
      &  KCPA      &  --    & --     & \textbf{0.896}  &  0.048      &     0.056        &  --  & -- \\      
      &  NWBS      &  --      &  --    &  0.003 &   0.003     &    \textbf{0.994}        &  --  & --\\
      &  cpt.np      & --     &  --     &  0.000  & 0.000       &   \bf{1.000}          & --   & -- \\  \cmidrule(lr){1-2} \cmidrule(lr){3-7} \cmidrule(lr){8-9}
(A1) &  NP-MOJO-$0$ &  0.005      & 0.190    &  \textbf{0.773}    &  0.031       & 0.001             &  0.889 &  0.866      \\
&  NP-MOJO-$1$  & 0.000     & 0.062     & {\textbf{0.896}}   & 0.042          &   0.000         & 0.938   & 0.912        \\
      &  NP-MOJO-$2$ &   0.000     &  0.061     & \textbf{0.896}  & 0.043        &   0.000          & 0.939  & 0.912     \\
       &  NP-MOJO-$\mc L$   &  0.000      &    0.037   & {\textbf{0.960}}  & 0.003     &   0.000          &  0.947 & 0.917    \\  \cmidrule(lr){3-7} \cmidrule(lr){8-9}
      &  E-Divisive      &  0.000      &   0.000    &  \textbf{0.886}  &    0.091    &   0.023         & 0.961 &  0.941 \\
      &  KCPA      &  0.004     &   0.000   & \textbf{0.974} &  0.019     & 0.003           & {0.966}   &  {0.939} \\      
      &  NWBS      &  0.005    & 0.000     & \textbf{0.923}  &   0.030    &   0.042          &  0.954  & 0.925 \\
      &  cpt.np      &  0.000      &  0.000    & \textbf{0.862}  & 0.121        &  0.017           & 0.958   & 0.937 \\
      &  PELT      &  0.000     &  0.000     & \underline{\textbf{1.000}}  &    0.000    &    0.000         & \underline{0.976}   & \underline{0.952} \\
      &  WCM.gSa      &  0.000      &  0.000     & \textbf{0.958}  & 0.037        &     0.005        &  0.971  & 0.948 \\     \cmidrule(lr){1-2} \cmidrule(lr){3-7} \cmidrule(lr){8-9}
\ref{model-b5}     &  NP-MOJO-$0$    &    0.008    &  0.179  &  {\textbf{0.791}}    &  0.022      &    0.000          &   {0.900} & 0.880  \\
&  NP-MOJO-$1$  &  0.282    &  \textbf{0.506}    & {0.206}   &    0.005     &  0.001  & 0.620   & 0.534        \\
      &  NP-MOJO-$2$  & 0.266       &  \textbf{0.487}     & {{0.242}}  &  0.004     &  0.001      & 0.635  & 0.551     \\
       &  NP-MOJO-$\mc L$   &  0.006      &    0.169   & \underline{\textbf{0.820}}  & 0.005     &   0.000          &  \underline{0.905} & \underline{0.884}    \\ \cmidrule(lr){3-7} \cmidrule(lr){8-9}
      &  E-Divisive      &  \textbf{0.620}      &   0.131   & 0.196  & 0.041       &  0.012       &  0.461 & 0.237  \\
      &  KCPA      &  \textbf{0.747}      &  0.003    &  {0.232} &    0.017           &  0.001  & 0.491 & 0.235 \\      
      &  SBS      &  \textbf{0.802}      &   0.036   & 0.162   & 0.000       &  0.000         & 0.437   & 0.162 \\  \cmidrule(lr){1-2} \cmidrule(lr){3-7} \cmidrule(lr){8-9}
\ref{model-c1}    &  NP-MOJO-$0$  &  --      &  --  &   {\textbf{0.779}}   &    0.203    &  0.018            & --  &  --     \\
&  NP-MOJO-$1$  &  --    & 0.021     & \underline{\textbf{0.823}}   & 0.146          &  0.010          &  \underline{0.939}  & \underline{0.888}         \\
      &  NP-MOJO-$2$ &  --   & --     & \textbf{0.768}   & 0.213          &  0.019          &  --  & --      \\ 
       &  NP-MOJO-$\mc L$   &  --     &    0.018   & {\textbf{0.809}}  & 0.170     &   0.003          &  {0.929} & {0.880}    \\ \cmidrule(lr){3-7} \cmidrule(lr){8-9}
      &  WBSTS      &   --  &   0.000    &  \textbf{0.627} &  0.262      &  \textbf{0.111}           &  0.918 & 0.864  \\ \cmidrule(lr){1-2} \cmidrule(lr){3-7} \cmidrule(lr){8-9} 
    \ref{model-d3}     &  NP-MOJO-$0$  &  --      &  0.323  &   {\textbf{0.642}}   &    0.034    &  0.001            & 0.797  &  0.584     \\
&  NP-MOJO-$1$  &  --   & 0.286     & \textbf{0.672}   & 0.039         &  0.003          &  {0.813}  & {0.617}         \\
      &  NP-MOJO-$2$ & --    & 0.375    & {\textbf{0.583}}   & 0.038         &  0.004        &  {0.766} & {0.527}      \\ 
       &  NP-MOJO-$\mc L$   &  --      &    0.191   & \underline{\textbf{0.768}}  & 0.039     &   0.002          &  \underline{0.847} & 0.688    \\ \cmidrule(lr){3-7} \cmidrule(lr){8-9}
      &  E-Divisive      &   --    &   0.007    &  0.210 &  0.195     &  \bf{0.588}          &  0.692 &  0.626 \\
      &  KCPA      &  --      &  0.215     &  \textbf{0.729}  &    0.040    &    0.016         & 0.842   & 0.674 \\
            &  NWBS      &  --      & 0.044     &  0.197 &   0.146     &    \textbf{0.613}      & 0.666   & 0.568  \\
      &  cpt.np      & --    &  0.000     & 0.145  & 0.204    &   \bf{0.651}          &  0.747  &  \underline{0.690} \\ 
      \bottomrule
\end{longtable}}
\endgroup

\begingroup
{\small
\setlength{\tabcolsep}{3pt}
\setlength{\LTcapwidth}{\textwidth}
\begin{longtable}{c c ccccc cc}
\caption{We report the distribution of the estimated number of change points and the average CM and VM over 1000 realisations when $n=2000$. The modal value of $\wh q - q$ in each row is given in bold.
Also, the best performance for each metric is underlined for each scenario.}
\label{n2000}
\endfirsthead
\endhead
\toprule	
&& \multicolumn{5}{c}{$\wh{q} - q$} &   &   \\ 
 Model       & Method      &  $\leq-2$     & $-1$        & $\mathbf{0}$  & $1$    & $\geq 2$    & CM & VM       \\ 
 \cmidrule(lr){1-2} \cmidrule(lr){3-7} \cmidrule(lr){8-9}
\ref{model-n3}   &  NP-MOJO-$0$   &  --      & --   &  \textbf{0.903}    & 0.090      &    0.007          & --  & --   \\
&  NP-MOJO-$1$     &  --   & --     & \textbf{0.900}   & 0.090         &  0.010          & --   & --      \\
      &  NP-MOJO-$2$   &  --      &    --   & \textbf{0.887}  & 0.103     &   0.010          &  -- & --    \\ 
            &  NP-MOJO-$\mc L$   &  --      &    --  & \textbf{0.860}  & 0.128     &   0.012          &  --& --   \\ \cmidrule(lr){3-7} \cmidrule(lr){8-9}
      &  E-Divisive      & --     &  --  & 0.000  & 0.000        &    \textbf{1.000}         & -- & --   \\
      &  KCPA      &  --    & --     & \textbf{0.975}  &  0.016      &     0.009        &  --  & -- \\      
      &  NWBS      &  --      &  --    &  0.001 &   0.001     &    \textbf{0.998}        &  --  & --\\
      &  cpt.np      & --     &  --     &  0.000 & 0.000       &   \bf{1.000}         & --   & -- \\  \cmidrule(lr){1-2} \cmidrule(lr){3-7} \cmidrule(lr){8-9}
(A1) &  NP-MOJO-$0$ &  0.001      & 0.058    &  \textbf{0.940}    &  0.001       & 0.000             &  0.957 &  0.956      \\
&  NP-MOJO-$1$  & 0.001     & 0.017     & {\textbf{0.982}}   & 0.000          &   0.000         & 0.972   & 0.966        \\
      &  NP-MOJO-$2$ &   0.000     &  0.014     & \textbf{0.986}  & 0.000        &   0.000          & 0.972  & 0.966     \\
       &  NP-MOJO-$\mc L$   &  0.000      &    0.008   & {\textbf{0.992}}  & 0.000     &   0.000          &  0.972 & 0.965    \\  \cmidrule(lr){3-7} \cmidrule(lr){8-9}
      &  E-Divisive      &  0.000      &   0.000    &  \textbf{0.905}  &    0.067    &   0.028         & 0.980 &  0.976 \\
      &  KCPA      &  0.000     &   0.000   & \textbf{0.977} &  0.020     & 0.003           & {0.980}   &  {0.974} \\      
      &  NWBS      &  0.000    & 0.000     & \textbf{0.966}  &   0.021    &   0.013          &  0.976  & 0.970 \\
      &  cpt.np      &  0.000      &  0.000    & \textbf{0.775}  & 0.186        &  0.039           & 0.972   & 0.970 \\
      &  PELT      &  0.000     &  0.000     & \underline{\textbf{1.0000}}  &    0.000    &    0.000         & \underline{0.985}   & \underline{0.980} \\
      &  WCM.gSa      &  0.000      &  0.000     & \textbf{0.962}  & 0.011        &     0.027        &  0.982  & 0.977 \\     \cmidrule(lr){1-2} \cmidrule(lr){3-7} \cmidrule(lr){8-9}
\ref{model-b5}     &  NP-MOJO-$0$    &    0.000    &  0.010  &  {\textbf{0.990}}    &  0.000      &    0.000          &   \underline{0.979} & \underline{0.973}  \\
&  NP-MOJO-$1$  &  0.004    &  0.066    & \textbf{0.930}   &    0.000     &  0.000          & 0.962   & 0.962        \\
      &  NP-MOJO-$2$  & 0.002       &  0.053     & {\textbf{0.945}}  &  0.000     &  0.000      & 0.965  & 0.963     \\
       &  NP-MOJO-$\mc L$   &  0.000      &    0.005   & \underline{\textbf{0.995}}  & 0.000     &   0.000          &  0.978 & 0.972    \\ \cmidrule(lr){3-7} \cmidrule(lr){8-9}
      &  E-Divisive      &  \textbf{0.664}      &   0.116   & 0.172  & 0.044       &  0.004         &  0.540 & 0.580  \\
      &  KCPA      &  0.068      &  0.000    &  \textbf{0.921} &    0.009           &  0.002  & 0.936 & 0.926 \\      
      &  SBS      &  0.431      &   0.001   & \textbf{0.568}   & 0.000       &  0.000           & 0.751   & 0.781 \\  \cmidrule(lr){1-2} \cmidrule(lr){3-7} \cmidrule(lr){8-9}
\ref{model-c1}    &  NP-MOJO-$0$  &  --      &  --  &   {\textbf{0.874}}   &    0.117    &  0.009            & --  &  --     \\
&  NP-MOJO-$1$  &  0.000    & 0.000     & {\textbf{0.996}}   & 0.004          &  0.000          &  0.984  & 0.976         \\
      &  NP-MOJO-$2$ &  --    & --     & \textbf{0.868}   & 0.123          &  0.009          &  --  & --      \\ 
       &  NP-MOJO-$\mc L$   &  0.000      &    0.000   & \underline{\textbf{0.999}}  & 0.001     &   0.000          &  \underline{0.984} & \underline{0.976}    \\ \cmidrule(lr){3-7} \cmidrule(lr){8-9}
      &  WBSTS      &   0.000    &   0.000    &  0.281 &  0.312      & \textbf{0.407}        &  0.914 & 0.933  \\ \cmidrule(lr){1-2} \cmidrule(lr){3-7} \cmidrule(lr){8-9} 
    \ref{model-d3}     &  NP-MOJO-$0$  &  0.012      &  0.151  &   {\textbf{0.815}}   &    0.022    &  0.000            & 0.914  &  0.919     \\
&  NP-MOJO-$1$  &  0.016   & 0.156     & \textbf{0.798}   & 0.030         &  0.000          &  {0.909}  & {0.916}         \\
      &  NP-MOJO-$2$ &  0.042    & 0.216    & {\textbf{0.708}}   & 0.034         &  0.000        &  {0.875} & {0.893}      \\ 
       &  NP-MOJO-$\mc L$   &  0.007      &    0.098   & \underline{\textbf{0.895}}  & 0.000     &   0.000         &  \underline{0.923} & \underline{0.923}    \\ \cmidrule(lr){3-7} \cmidrule(lr){8-9}
      &  E-Divisive      &   0.002    &   0.001    &  0.037 &  0.046     &  \textbf{0.914}          &  0.709 & 0.793 \\
      &  KCPA      &  0.302      &  0.004     &  \textbf{0.610}  &    0.069    &    0.015         & 0.751   & 0.715 \\
            &  NWBS      &  0.047      &  0.028    &  0.107 &   0.132     &    \textbf{0.686}      & 0.730   & 0.790 \\
      &  cpt.np      & 0.000     & 0.000      &  0.011  & 0.022       &   \bf{0.967}          &   0.755 & 0.829 \\ 
      \bottomrule
\end{longtable}}
\endgroup

\begingroup
{\small
\setlength{\tabcolsep}{3pt}
\setlength{\LTcapwidth}{\textwidth}
\begin{longtable}{c c ccccc cc}
\caption{We report the distribution of the estimated number of change points and the average CM and VM over 1000 realisations when $n=10000$. The modal value of $\wh q - q$ in each row is given in bold.
 Also, the best performance for each metric is underlined for each scenario.}
\label{n10000}
\endfirsthead
\endhead
\toprule	
&& \multicolumn{5}{c}{$\wh{q} - q$} &   &   \\ 
 Model       & Method      &  $\leq-2$     & $-1$        & $\mathbf{0}$  & $1$    & $\geq 2$    & CM & VM       \\ 
 \cmidrule(lr){1-2} \cmidrule(lr){3-7} \cmidrule(lr){8-9}
\ref{model-n3}   &  NP-MOJO-$0$   &  --      & --   &  \textbf{0.961}    & 0.037      &    0.002          & --  & --   \\
&  NP-MOJO-$1$     &  --   & --     & \textbf{0.956}   & 0.043         &  0.001          & --   & --      \\
      &  NP-MOJO-$2$   &  --      &    --   & \textbf{0.962}  & 0.037     &   0.001          &  -- & --    \\ 
            &  NP-MOJO-$\mc L$   &  --      &    --  & \textbf{0.943}  & 0.054    &   0.003          &  --& --   \\ \cmidrule(lr){3-7} \cmidrule(lr){8-9}
      &  KCPA      &  --    & --     & 0.000  &  \textbf{0.891}      &     0.109        &  --  & -- \\      
      &  cpt.np      & --     &  --     &  0.000  & 0.000       &   \bf{1.000}          & --   & -- \\  \cmidrule(lr){1-2} \cmidrule(lr){3-7} \cmidrule(lr){8-9}
(A1) &  NP-MOJO-$0$ &  0.001      & 0.015    &  \textbf{0.984}    &  0.000       & 0.000             &  0.979 &  0.982      \\
&  NP-MOJO-$1$  & 0.000     & 0.001     & {\textbf{0.999}}   & 0.000          &   0.000         & 0.985   & 0.986        \\
      &  NP-MOJO-$2$ &   0.000     &  0.000     & \textbf{1.000}  & 0.000        &   0.000          & 0.985  & 0.986     \\
       &  NP-MOJO-$\mc L$   &  0.000      &    0.000  & {\textbf{1.000}}  & 0.000    &   0.000          &  0.984 & 0.985    \\  \cmidrule(lr){3-7} \cmidrule(lr){8-9}
      &  KCPA      &  0.000     &   0.000   & \textbf{0.972} &  0.026     & 0.002           & {0.988}   &  {0.989} \\      
      &  cpt.np      &  0.000      &  0.000    & \textbf{0.669}  & 0.257        &  0.074           & 0.983   & 0.987 \\
      &  PELT      &  0.000   &  0.000     & \underline{\textbf{1.000}}  &    0.000    &    0.000         & \underline{0.991}   & \underline{0.991} \\
      &  WCM.gSa      &  0.000      &  0.000    & \textbf{0.951}  & 0.013        &     0.036       &  0.990  & 0.990 \\     \cmidrule(lr){1-2} \cmidrule(lr){3-7} \cmidrule(lr){8-9}
\ref{model-b5}     &  NP-MOJO-$0$    &    0.000    &  0.001  &  \textbf{0.999}    &  0.000     &    0.000          &   {0.989} & 0.989  \\
&  NP-MOJO-$1$  &  0.001    &  0.008    & \bf{0.991}   &    0.000     &  0.000  & 0.985  & 0.987        \\
      &  NP-MOJO-$2$  & 0.001       & 0.006    & \bf{0.993}  &  0.000     &  0.000      & 0.985  & 0.987     \\
       &  NP-MOJO-$\mc L$   &  0.000      &    0.000  & \underline{\textbf{1.000}}  & 0.000     &   0.000          &  {0.988} & {0.988}    \\ \cmidrule(lr){3-7} \cmidrule(lr){8-9}
      &  KCPA      & 0.000      &  0.000    &  \textbf{0.967} &    0.032    &  0.001  & \underline{0.992} & \underline{0.992}\\  \cmidrule(lr){1-2} \cmidrule(lr){3-7} \cmidrule(lr){8-9}
\ref{model-c1}    &  NP-MOJO-$0$  &  --      &  --  &   {\textbf{0.925}}   &    0.072    &  0.003           & --  &  --     \\
&  NP-MOJO-$1$  &  0.001    & 0.000    & {\textbf{0.998}}   & 0.002          &  0.000          &  \underline{0.990}  & \underline{0.990}         \\
      &  NP-MOJO-$2$ &  --   & --     & \textbf{0.927}   & 0.071         &  0.002          &  --  & --      \\ 
       &  NP-MOJO-$\mc L$   &  0.000    &    0.000   & \underline{\textbf{1.000}}  & 0.000    &   0.000          &  \underline{0.990} & \underline{0.990}    \\ \cmidrule(lr){3-7} \cmidrule(lr){8-9}
      &  WBSTS      &   0.000 &   0.000   &  0.059 &  0.152      &  \textbf{0.789}           &  0.940 & 0.970  \\ \cmidrule(lr){1-2} \cmidrule(lr){3-7} \cmidrule(lr){8-9} 
    \ref{model-d3}     &  NP-MOJO-$0$  &  0.134      &  0.284  &   {\textbf{0.569}}   &    0.013    &  0.000            & 0.924  &  0.960     \\
&  NP-MOJO-$1$  &  0.176   & 0.294     & \textbf{0.525}   & 0.005         &  0.          &  {0.917}  & {0.957}         \\
      &  NP-MOJO-$2$ & \textbf{0.413}    & 0.312    & 0.272   & 0.003         &  0.000        &  {0.864} & {0.939}      \\ 
       &  NP-MOJO-$\mc L$   &  0.061     &    0.235   & {\textbf{0.704}}  & 0.000     &   0.000          &  0.938 & 0.963    \\ \cmidrule(lr){3-7} \cmidrule(lr){8-9}
      &  KCPA      &  0.000      &  0.000     &  \underline{\textbf{0.872}}  &    0.112   &    0.016         & \underline{0.965}   & \underline{0.972} \\
      &  cpt.np      & 0.000    &  0.000    & 0.000  & 0.000    &   \bf{1.000}          &  0.803  &  0.907 \\ 
      \bottomrule
\end{longtable}}
\endgroup

\subsection{\tcb{Multiscale NP-MOJO}}\label{sec:sim-multiscale}

In this section we investigate the performance of the multiscale, multi-lag NP-MOJO procedure discussed in Section~\ref{sec:multiscale}. We use identical tuning parameters as in the main simulation study, and set the bottom-up merging parameter $C = 0.8$. Following \cite{mcgonigle2023robust}, we generate $\mc G$ as a sequence of Fibonacci numbers, with $\mc G = \{G_m, \, 1 \le m \le 4: \, G_1 < \ldots < G_4\}$ where $G_m = G_{m - 1} + G_{m - 2}$ for $m \ge 2$ with $G_0 = G_1 = 60$. We use a relatively large finest bandwidth $G_1$ since, as observed in \cite{mcgonigle2023robust}, bottom-up merging has a tendency to return false positives as it accepts all estimators from the finest bandwidth. We \tcr{first consider the following scenarios, all with $n=1000$}:

\begin{enumerate}[label=(M\arabic*)] 
\item[(B2)] Model \ref{model-b2} from the main simulation study.
\item[(D3)] Model \ref{model-d3} from the main simulation study.
\item \label{model-m1} $X_t$ follows the change in mean Model~\ref{eq:mean:model}, with $q = 3$, $(\theta_1, \theta_2, \theta_3) = (80, 250, 600)$, $(\mu_0, \mu_1, \mu_2, \mu_3) = (0, 1.6, 0.6, 1.2)$, and $\vep_t = 0.3 \vep_{t-1} + W_t$, $W_t \sim_{\text{i.i.d.}} \mc N(0, 1-0.3^2)$. 
\item \label{model-m2} $X_t = X^{(j)}_t = a_j  X^{(j)}_{t-1} + \vep_t$ for $\theta_j + 1 \le t \le \theta_{j + 1}$, where $q = 2$, $(\theta_1, \theta_2) = (500,900)$, $(a_0, a_1, a_2) = (0.3, 0.8, -0.8)$.
\item \label{model-m3} $X_t = \sum_{j = 0}^q \mu_j \mathbb{I}{\{\theta_j + 1 \le t \le \theta_j\}} + \vep_t + \sum_{j = 0}^q B_j \mathbb{I}{\{\theta_j + 1 \le t \le \theta_{j + 1}\}}  \vep_{t-1} $, where $q = 2$, $(\theta_1, \theta_2) = (150,500)$, $(\mu_0, \mu_1, \mu_2) = (\mbf{0}, \mbf{0.7}, \mbf{0.7})$, $B_0 = B_1 = \begin{psmallmatrix}1 & 0.1 \\ 0.1 & 1\end{psmallmatrix}$ and $B_2 = \begin{psmallmatrix}-1 & 0.1\\ 0.1 & -1\end{psmallmatrix}$, $\vep_t \sim_{\text{i.i.d.}} \mc N_2(\mbf 0, \mbf I)$.
\end{enumerate}

The first two scenarios \tcr{contain evenly-spaced change points and are taken from the main simulation study, in order to investigate the performance of the multiscale NP-MOJO in non-multiscale scenarios}. The last three scenarios incorporate multiscale change point scenarios \tcr{with uneven spacing between neighbouring segments}. Model~\ref{model-m1} considers mean change points with autocorrelated noise, Model~\ref{model-m2} has two changes in the parameter of an AR(1) process, and Model~\ref{model-m3} has one change in a vector moving average process and one change in mean vector. Results are given in Table~\ref{table-multiscale}. In the equispaced change point scenarios~\ref{model-b2} and~\ref{model-d3}, the performance of multiscale NP-MOJO is similar to single scale NP-MOJO. In the multiscale scenarios, the multiscale extension of NP-MOJO shows good adaptivity, enabling it to detect both larger changes over shorter time scales, and smaller changes over longer time scales. For example, in scenario~\ref{model-m2}, the first change point is harder to detect, and is best suited to be detected at bandwidth $G = 300$, whilst the second change point is easier to detect, and is best-suited to be detected at bandwidth $G=60$ due to it occurring near the end of the time series.
\begingroup
{\small
\setlength{\tabcolsep}{3pt}
\setlength{\LTcapwidth}{\textwidth}
\begin{longtable}{c ccccc cc}
\caption{We report the distribution of the estimated number of change points and the average CM and VM over 1000 realisations for the multiscale NP-MOJO procedure when $n=1000$. The modal value of $\wh q - q$ in each row is given in bold.}
\label{table-multiscale}
\endfirsthead
\endhead
\toprule	
& \multicolumn{5}{c}{$\wh{q} - q$} &   &   \\ 
 Model          &  $\leq-2$     & $-1$        & $\mathbf{0}$  & $1$    & $\geq 2$    & CM & VM       \\ 
 \cmidrule(lr){1-1} \cmidrule(lr){2-6} \cmidrule(lr){7-8}
\ref{model-b2}     &  0.000      & 0.006   &  \textbf{0.978}    & 0.016     &    0.000       & 0952 & 0.934
      \\
       \cmidrule(lr){1-1} \cmidrule(lr){2-6} \cmidrule(lr){7-8}
\ref{model-d3}     &  0.000      & 0.002   &  \textbf{0.922}    & 0.076     &    0.000       & 0.929 & 0.894
      \\
       \cmidrule(lr){1-1} \cmidrule(lr){2-6} \cmidrule(lr){7-8}
\ref{model-m1}     &  0.001      & 0.126   &  \textbf{0.856}    & 0.017     &    0.000       & 0.894 & 0.878
      \\
       \cmidrule(lr){1-1} \cmidrule(lr){2-6} \cmidrule(lr){7-8}
\ref{model-m2}     &  0.000      & 0.054   &  \textbf{0.802}    & 0.139    &    0.005       & 0.874 & 0.815
      \\
       \cmidrule(lr){1-1} \cmidrule(lr){2-6} \cmidrule(lr){7-8}
\ref{model-m3}     &  0.000      & 0.107   &  \textbf{0.854}    & 0.039     &    0.000       & 0.929 & 0.901
      \\\bottomrule
\end{longtable}}
\endgroup

\tcr{We next consider the following three multiscale change point scenarios with $n=2000$:}

\begin{enumerate}[label=(M\arabic*),resume]
\item \label{model-m4} $X_t$ follows the change in mean model as in~\ref{model-m1}, with $q = 5$, \\$(\theta_1, \ldots, \theta_5) = (500, 1000, 1150, 1550, 1900)$, $(\mu_0, \mu_1, \ldots ,  \mu_6) = (0, 0.9, 2.2, 1.1, 0, 1.5)$.
\item \label{model-m5} $X_t$ has changes in both mean and dependence with $q = 6$: mean changes $(\theta_1, \theta_2, \theta_3 , \theta_4) = (100, 200, 600, 1400)$, $(\mu_0, \mu_1, \ldots ,  \mu_5) = (0, 1.5, 0, 0.9, -0.3)$, and $\vep_t = \vep^{(j)}_t = a_j  X^{(j)}_{t-1} + \vep_t$ for $\theta'_j + 1 \le t \le \theta'_{j + 1}$ $(\theta'_1, \theta'_2) = (1000, 1800)$, $(a_0, a_1, a_2) = (-0.7, 0.7, -0.8)$.
\item \label{model-m6} $X_t$ follows the change in covariance model as in~\ref{model-b4}, with $q = 5$, $(\theta_1, \theta_2, \theta_3, \theta_4, \theta_5) = (150, 300, 800, 1300, 1600)$, and $\Sigma_1 = \Sigma_3 = \Sigma_5 =  \begin{psmallmatrix}1 & 0.6\\ 0.6 & 1\end{psmallmatrix}$, $\Sigma_2 = \begin{psmallmatrix}1 & -0.6\\ -0.6 & 1\end{psmallmatrix}$, and $\Sigma_4 = \Sigma_6 = \begin{psmallmatrix}1 & -0.2\\ -0.2 & 1 \end{psmallmatrix}$.
\end{enumerate}

\tcr{Model~\ref{model-m4} follows the same data generating process as Model~\ref{model-m1}, whilst Model~\ref{model-m5} has two changes in the parameter of an AR(1) process and four changes in the mean. Model~\ref{model-m6} has five changes in the covariance matrix of a Gaussian bivariate time series. Results are given in Table~\ref{table-multiscale2}.}

\begingroup
{\small
\setlength{\tabcolsep}{3pt}
\setlength{\LTcapwidth}{\textwidth}
\begin{longtable}{c ccccc cc}
\caption{We report the distribution of the estimated number of change points and the average CM and VM over 1000 realisations for the multiscale NP-MOJO procedure when $n=2000$. The modal value of $\wh q - q$ in each row is given in bold.}
\label{table-multiscale2}
\endfirsthead
\endhead
\toprule	
& \multicolumn{5}{c}{$\wh{q} - q$} &   &   \\ 
 Model          &  $\leq-2$     & $-1$        & $\mathbf{0}$  & $1$    & $\geq 2$    & CM & VM       \\ 
 \cmidrule(lr){1-1} \cmidrule(lr){2-6} \cmidrule(lr){7-8}
\ref{model-m4}     &  0.008      & 0.154   &  \textbf{0.821}    & 0.015    &    0.000      & 0.942 & 0.945
      \\
       \cmidrule(lr){1-1} \cmidrule(lr){2-6} \cmidrule(lr){7-8}
\ref{model-m5}     &  0.000     & 0.061   &  \textbf{0.856}    & 0.080     &    0.003       & 0.936 & 0.946
      \\
       \cmidrule(lr){1-1} \cmidrule(lr){2-6} \cmidrule(lr){7-8}
\ref{model-m6}     &  0.001     & 0.080   &  \textbf{0.913}    & 0.006     &    0.000       & 0.956 & 0.955
      \\\bottomrule
\end{longtable}}
\endgroup

\tcr{Lastly, we consider the following three scenarios with $n=10000$ to investigate the ability of multiscale NP-MOJO's ability to detect multiscale change points in longer time series.  Each scenario contains change points occurring over both short and long time periods.}

\begin{enumerate}[label=(M\arabic*),resume]
\item \label{model-m7} $X_t$ follows the change in mean Model~\ref{eq:mean:model}, with $q = 8$, \\$(\theta_1, \ldots, \theta_8) = (1000, 2000, 2150, 2800, 3650, 4650, 5150, 5550)$, \\$(\mu_0, \mu_1, \ldots ,  \mu_9) = (0, 1, 2.6, 1.1, 0, 1, -0.2, 1, 0)$, and $\vep_t  \sim_{\text{i.i.d.}} \mc N(0, 1)$. 
\item \label{model-m8} $X_t = X^{(j)}_t = a_j  X^{(j)}_{t-1} +  b_j X^{(j)}_{t-2} + \vep_t$ for $\theta_j + 1 \le t \le \theta_{j + 1}$, where $q = 5$, $(\theta_1, \ldots, \theta_5) = (1000, 1400, 5000, 9000, 9400)$, $(a_0, a_1, a_2, a_3, a_4, a_5, a_6) = (0.8, -0.8, 0.8, -0.2, -0.2, -0.2)$, $(b_0, b_1, b_2, b_3, b_4, b_5, b_6) = (-0.2, -0.2, -0.2, 0.6, -0.6, 0.6)$.
\item \label{model-m9}  $X_t = \vep_t + \sum_{j = 0}^q B_j \mathbb{I}{\{\theta_j + 1 \le t \le \theta_{j + 1}\}}  \vep_{t-1} $, where $q = 5$, $(\theta_1, \ldots, \theta_5) = \\ (600, 2000, 4000, 5300, 5600, 8000)$, $B_0 = B_2 = B_6 = \begin{psmallmatrix}-1 & 0.4 \\ 0.5 & -1\end{psmallmatrix}$ and $B_1 = B_3 = B_5 =  \begin{psmallmatrix}1 & 0.4\\ 0.4 & 1\end{psmallmatrix}$, $B_4 = \begin{psmallmatrix}1.8 & 0.1 \\ 0.1 & 1.8 \end{psmallmatrix}$, $\vep_t \sim_{\text{i.i.d.}} \mc N_2(\mbf 0, \mbf I)$.
\end{enumerate}

\tcr{Model~\ref{model-m7} follows a change in mean model with independent noise with highly uneven distribution of change points. Model~\ref{model-m8} follows an AR(2) process that exhibits changes at both even and odd lags, whilst Model~\ref{model-m9} follows a vector moving average process with change at lag $\ell = 1$. Results are given in Table~\ref{table-multiscale3}, where multiscale NP-MOJO demonstrates good adaptivity in the multiscale change point setting in longer times series. Overall, the multiscale extension to multi-lag NP-MOJO shows promising performance and provides a natural avenue for future research.}

\begingroup
{\small
\setlength{\tabcolsep}{3pt}
\setlength{\LTcapwidth}{\textwidth}
\begin{longtable}{c ccccc cc}
\caption{We report the distribution of the estimated number of change points and the average CM and VM over 1000 realisations for the multiscale NP-MOJO procedure when $n=10000$. The modal value of $\wh q - q$ in each row is given in bold.}
\label{table-multiscale3}
\endfirsthead
\endhead
\toprule	
& \multicolumn{5}{c}{$\wh{q} - q$} &   &   \\ 
 Model          &  $\leq-2$     & $-1$        & $\mathbf{0}$  & $1$    & $\geq 2$    & CM & VM       \\ 
 \cmidrule(lr){1-1} \cmidrule(lr){2-6} \cmidrule(lr){7-8}
\ref{model-m7}     &  0.001      & 0.032   &  \textbf{0.950}    & 0.017    &    0.000      & 0.989 & 0.988
      \\
       \cmidrule(lr){1-1} \cmidrule(lr){2-6} \cmidrule(lr){7-8}
\ref{model-m8}     &  0.001     & 0.102   &  \textbf{0.676}    & 0.184     &    0.037       & 0.919 & 0.938
      \\
       \cmidrule(lr){1-1} \cmidrule(lr){2-6} \cmidrule(lr){7-8}
\ref{model-m9}     &  0.036     & 0.250   &  \textbf{0.661}    & 0.052     &    0.001       & 0.955 & 0.972
      \\\bottomrule
\end{longtable}}
\endgroup

\subsection{\tcr{NP-MOJO with adaptive lag selection}}\label{sec:sim-lag}

In this section we perform a small simulation study to illustrate the effectiveness of the NP-MOJO procedure with adaptive lag selection, as discussed in Section~\ref{sec:lag-select}.  We use the adaptive lag selection method with initial set of lags $\tilde{\mc L} = \{ 0, 1, 2\}$, and consider scenarios~\ref{model-n3}, \ref{model-c1}, and  \ref{model-c2}, from the main simulation study. Recall that scenario~\ref{model-n3} follows an AR(1) process with no change points, \ref{model-c1} follows an AR(1) process with two change points at all odd lags, and \ref{model-c2} follows an MA(2) process with two change points at lag $2$ only.

The results are given in Table~\ref{table-adaptive-lag}. In general, the adaptive lag version of multi-lag NP-MOJO performs similarly to the version using the fixed set of lags $\mc L = \{ 0, 1, 2 \}$ (c.f. the result for scenario~\ref{model-n3} in Table~\ref{null-table}, and results for scenarios \ref{model-c1} and \ref{model-c2} in Table~\ref{multcpt-c}). Under the null model~\ref{model-n3} there is a slight increase in the number of false positives, whilst in scenarios~\ref{model-c1} and~\ref{model-c2} there is a very slight degradation in performance in terms of the segmentation scores. Overall the adaptive lag selection methods provide users of the method a semi-automatic way of selecting the set of lags, whilst offering similar performance to the fixed lag method.

\begingroup
{\small
\setlength{\tabcolsep}{3pt}
\setlength{\LTcapwidth}{\textwidth}
\begin{longtable}{c ccccc cc}
\caption{We report the distribution of the estimated number of change points and the average CM and VM over 1000 realisations for the multi-lag NP-MOJO procedure with adaptive lag selection. The modal value of $\wh q - q$ in each row is given in bold.}
\label{table-adaptive-lag}
\endfirsthead
\endhead
\toprule	
& \multicolumn{5}{c}{$\wh{q} - q$} &   &   \\ 
 Model          &  $\leq-2$     & $-1$        & $\mathbf{0}$  & $1$    & $\geq 2$    & CM & VM       \\ 
 \cmidrule(lr){1-1} \cmidrule(lr){2-6} \cmidrule(lr){7-8}
\ref{model-n3}     &  --      &  -- &  \textbf{0.802}    & 0.177    &    0.021     & -- & --
      \\
       \cmidrule(lr){1-1} \cmidrule(lr){2-6} \cmidrule(lr){7-8}
       \ref{model-c1}     &  0.000     & 0.002   &  \textbf{0.985}    & 0.013     &    0.000       & 0.980  & 0.962

      \\
       \cmidrule(lr){1-1} \cmidrule(lr){2-6} \cmidrule(lr){7-8}
\ref{model-c2}     &  0.001     & 0.051   &  \textbf{0.943}    & 0.005     &    0.000       & 0.948 & 0.924
      \\\bottomrule
\end{longtable}}
\endgroup

\clearpage

\section{Proofs of main results}\label{sec:proofs}

\subsection{\tcb{Proof of Lemma~\ref{lemma:discrepancy}}}

For any $\ell \geq 0$, if $(X_1^{(j)}, X^{(j)}_{1+\ell}) \stackrel{d}{=} (X_1^{(j-1)} , X_{1+\ell}^{(j-1)})$, then $\phi^{(j)}_\ell(u, v) = \phi^{(j-1)}_\ell(u, v)$ for all $(u, v)$, which implies that $\vert \phi^{(j)}_\ell (u,v) - \phi^{(j-1)}_\ell (u,v) \vert^2 \equiv 0$, and hence $d^{(j)}_\ell = 0$ in~\eqref{eq:test-stat-char}. Conversely, suppose that $d_{\ell}^{(j)} = 0$. Then, $\phi^{(j)}_\ell (u,v) - \phi^{(j-1)}_\ell (u,v) = 0$ a.e.\ since $w(u,v) > 0$ for $u, v \neq 0$, and hence $(X_1^{(j)}, X^{(j)}_{1+\ell}) \stackrel{d}{=} (X_1^{(j-1)} , X_{1+\ell}^{(j-1)})$.

\subsection{Proof of Lemma~\ref{lemma:weight-int-identities}}

We first consider the integrand term in~\eqref{eq:test-stat-char} involving the characteristic functions. We have that
\begin{align*}
\l\vert \phixj - \phiyj \r\vert^2 &= \phixj \overline{\phixj} + \phiyj \overline{\phiyj} \\
& \quad - \phixj \overline{\phiyj} - \overline{\phixj} \phiyj \\
& =: A + B - C - D.
\end{align*}
Then,
\begin{align*}
 A &= \mathbb{E} \left \{ \exp \l( \imath \langle u, X_1^{(j)} \rangle + \imath \langle v, X^{(j)}_{1+\ell} \rangle \r) \right\}  \mathbb{E} \left\{ \exp (-\imath \langle u,  \wt X^{(j)}_1 \rangle -\imath \langle v, \wt X^{(j)}_{1+\ell} \rangle ) \right\}  \\
 &= \mathbb{E} \left \{ \exp \l( \imath \langle u, X^{(j)}_1 - \wt X^{(j)}_{1} \rangle  + \imath \langle v,  X^{(j)}_{1+\ell} - \wt X^{(j)}_{1+\ell} \rangle   \r) \right\} .
\end{align*}
In a similar fashion,
\begin{align*}
 &B = \mathbb{E} \left \{ \exp \l( \imath \langle u, X_1^{(j-1)} - \wt X^{(j-1)}_{1} \rangle + \imath \langle v,  X_{1+\ell}^{(j-1)} - \wt X^{(j-1)}_{1+\ell} \rangle  \r) \right\}  , \\
 &C = \mathbb{E} \left \{ \exp \l( \imath \langle u, \wt X^{(j)}_1 - X^{(j-1)}_{1} \rangle + \imath \langle v,  \wt X^{(j)}_{1+\ell} - X^{(j-1)}_{1+\ell} \rangle   \r) \right\}  , \\
 &D =  \mathbb{E} \left \{ \exp \l( -\imath \langle u, \wt X^{(j)}_1 - X^{(j-1)}_{1} \rangle - \imath \langle v,  \wt X^{(j)}_{1+\ell} - X^{(j-1)}_{1+\ell} \rangle   \r) \right \} .
 \end{align*}
 Note that since $d^{(j)}_\ell$ is real, any term of the form $\exp(\imath z)$ with $z \in \mathbb{R}$ can be replaced by $\cos z$. Therefore, we have that $C=D$, and we can re-write the integral~\eqref{eq:test-stat-char} in terms of cosines as
\begin{align*}
d_{\ell}^{(j)} =& \, \int_{\mathbb{R}^{p}} \int_{\mathbb{R}^{p}}  \mathbb{E} ( \text{COS} (u,v) ) w(u,v) du dv,    \text{ \ where}
\\
\text{COS}(u,v) =& \, \cosux \cosvxj   \\
& + \cosuy \cosvyj \\
& - 2 \cosuxy \cosvxyj .
\end{align*}
Under the assumptions of Lemma~\ref{lemma:weight-int-identities}~\ref{lemma:weight-int1}, for weight $w_1$ we obtain 
\begin{align*}
d_{\ell}^{(j)} &=\int_{\mathbb{R}^{p}} \int_{\mathbb{R}^{p}} \mathbb{E} \{ \text{COS} (u,v) \}  w_1(u,v) du dv \\[1ex]
&= \int_{\mathbb{R}^{p}} \int_{\mathbb{R}^{p}} \mathbb{E} \left\{  \cosux \cosvxj   \right\} w_1(u,v) du dv \\[1ex]
&+\int_{\mathbb{R}^{p}} \int_{\mathbb{R}^{p}}  \mathbb{E} \left\{  \cosuy \cosvyj    \right\} w_1(u,v) du dv \\[1ex]
&-  \int_{\mathbb{R}^{p}} \int_{\mathbb{R}^{p}} 2 \mathbb{E} \left\{  \cosuxy \cosvxyj    \right\} w_1(u,v) du dv \\[1ex]
&=  \mathbb{E} \left\{ h_1 \l( Y_1^{(j)},\wt Y_1^{(j)} \r) \right\} +  \mathbb{E} \left\{ h_1 \l(Y_1^{(j-1)}, \wt Y_1^{(j-1)} \r) \right\} - 2 \mathbb{E} \left\{ h_1 \l( \wt Y_1^{(j)}, Y_1^{(j-1)} \r) \right\}.
\end{align*}
The integral and expectation can be swapped by applying Fubini's theorem, due to finiteness of the expectation. The final line follows from an application of Lemma~\ref{integral-lemma}. An analogous argument for Lemma~\ref{lemma:weight-int-identities}~\ref{lemma:weight-int2}, using Lemma~\ref{integral-lemma2}, yields
\begin{align*}
d_{\ell}^{(j)} = \mathbb{E} \left\{ h_2 \l( Y_1^{(j)}, \wt Y_1^{(j)} \r) \right\} +  \mathbb{E} \left\{ h_2 \l(Y_1^{(j-1)}, \wt Y_1^{(j-1)} \r) \right\} - 2 \mathbb{E} \left\{ h_2 \l(\wt Y_1^{(j)}, Y_1^{(j-1)} \r) \right\}
\end{align*}
for weight $w_2$. To prove Lemma~\ref{lemma:weight-int-identities}~\ref{lemma:weight-int3} for weight $w_3$, we re-write the integral~\eqref{eq:test-stat-char} to obtain
\begin{align*}
d_{\ell}^{(j)} &= \int_{\mathbb{R}^{p}} \int_{\mathbb{R}^{p}}\mathbb{E} \{ \text{COS} (u,v) \}  w_3(u,v) du dv \\[1ex]
&=  \int_{\mathbb{R}^{p}} \int_{\mathbb{R}^{p}} 2 \mathbb{E} \left\{ 1 - \cosuxy \cosvxyj     \right\} w_3(u,v) du dv \\[1ex]
&- \int_{\mathbb{R}^{p}} \int_{\mathbb{R}^{p}}  \mathbb{E} \left\{ 1 - \cosux \cosvxj    \right\} w_3(u,v) du dv \\[1ex]
&- \int_{\mathbb{R}^{p}} \int_{\mathbb{R}^{p}} \mathbb{E} \left\{ 1 - \cosuy \cosvyj    \right\} w_3(u,v) du dv \\[1ex]
&= 2  \mathbb{E} \left\{ h_3 \l(\wt Y_1^{(j)}, Y_1^{(j-1)} \r) \right\} -  \mathbb{E} \left\{ h_3 \l( Y_1^{(j)},\wt Y_1^{(j)} \r) \right\} - \mathbb{E} \left\{ h_3 \l(Y_1^{(j-1)}, \wt Y_1^{(j-1)} \r) \right\}.
\end{align*}
The expectation can be swapped with the integral using Fubini's theorem, since $\mathbb{E} ( \Vert X_1^{(j)} \Vert^\gamma ) < \infty$. The final line follows from Lemma~\ref{integral-lemma3}. 

\subsection{Proof of Theorem~\ref{thm:consistency}}

The proof proceeds in three steps. 
Step~1 derives a bound on $\max_{G \leq k \leq n-G}  \vert \tstatkl -  \mc D_\ell (G,k) \vert$, with which Step~2 shows that exactly one change point is detected within $(G - \ell)$ time points from each $\theta_j, \, j \in \mc I_\ell$, and no other estimator is detected.
Then Step~3 derives the rate of estimation.

\paragraph{Step 1. } For any $G \le k \le n - G$, we have
\begin{align*}
T_\ell (G,k) - \mc D_\ell (G, k) = \left\{ T_\ell (G,k) - \mathbb{E} (\tstatkl) \right\} + \left\{ \mathbb{E} (\tstatkl) - \mc D_\ell (G,k) \right\}.  
\end{align*}
Lemma~\ref{uniform-expectation} shows that $\vert \mathbb{E} \{ \tstatkl \} - \mc D_\ell (G,k)  \vert = O (G^{-1/2})$, whilst Lemma~\ref{lemma:unif-prob} shows that
 for any $z \ge 1/\sqrt{G - \ell}$,

\begin{align*}
\p \l( \max_{G \leq k \leq n-G} \l\vert \tstatkl    -  \mathbb{E} \{ \tstatkl \} \r\vert > z \r)
\\ 
\leq  6 n G^2 \exp \l( - c_1  z^\gamma G^{\gamma} \r)    +  12nG \exp \l( - c_2  z^2 G \r).
\end{align*}
Therefore, combining the results of Lemma~\ref{uniform-expectation} and Lemma~\ref{lemma:unif-prob}, we obtain $\p(\mc E_{\ell, n}) \to 1$ as $n \to \infty$, where
\begin{align}
\label{eq:Op-bound}
\mc E_{\ell, n} &= \l\{ \max_{G \leq k \leq n-G} \l\vert \tstatkl - \mc D_\ell(G, k) \r\vert \leq \frac{c_0}{2} \sqrt{ \frac{\log (n)}{G}} \r\}
\end{align}
for large enough constant $c_0 > \sqrt{2 c_2^\prime}$. 

In the following steps, all the arguments are conditional on $\mc E_{\ell, n}$.

\paragraph{Step 2.} Consider $k$ satisfying $\min_{j \in \mc I_\ell} \vert k - \theta_{j} \vert \geq G - \ell$, for which $\mc D_\ell(G, k) = 0$.
Then provided that $c_\zeta > c_0/2$, we have
\begin{align*}
 \max_{k : \, \min_{j \in \mc I_\ell} |k - \theta_{j} \vert \geq G - \ell} \tstatkl  \leq  \max_{G \leq k \leq n-G} \l\vert \tstatkl - \mc D_\ell (G,k) \r\vert \leq \frac{c_0}{2}\sqrt{\frac{\log (n)}{G}}  <  \zeta_\ell (n,G).
\end{align*}
Therefore, no change point is detected more than $(G-\ell)$ time points away from any $\theta_j, \, j \in \mc I_\ell$, i.e.\ $\min_{\wh\theta \in \wh{\Theta}_\ell} \vert \wh{\theta} - \theta_{j} \vert < G-\ell$. 
We now consider some $\theta_j, \, j \in \mc I_\ell$.
By Lemma~\ref{lemma:eta-crit}~\ref{lemma:eta-crit-a}, we detect at least one estimator within $\lceil (1 - \eta)G \rceil$ points from $\theta_j$ by having
$\max_{k: \, \vert k - \theta_j \vert < (1 - \eta)(G - \ell)} T_\ell(G, k) > \zeta_\ell(n, G)$, and none is detected outside this interval.
Then Lemma~\ref{lemma:eta-crit}~\ref{lemma:eta-crit-b} shows that there exists a unique local maximiser of $T_\ell(G, k)$ within $\lfloor \eta G \rfloor$ time points from $\theta_j$ that meets the criterion in~\eqref{eq:mosum:est}. 
Since the lemma shows $\p(\mc S_{\ell, n}) \to 1$ and $\p(\wt{\mc S}_{\ell, n}) \to 1$ (see the lemma for their definitions), the above arguments hold for all $j \in \mc I_\ell$, such that we have $\wh{q}_\ell = q_\ell$.

\paragraph{Step 3.}  For each $j \in \mc I_\ell$, let $\wh\theta_j = \argmin_{\wh\theta \in \wh{\Theta}_\ell} \vert \wh\theta - \theta_j \vert$. Then from Step~2, $\vert \wh{\theta}_{j} -\theta_{j} \vert \le G - \ell$ such that
\begin{align*}
d^{(j)}_\ell - \frac{c_0}{2} \sqrt{\frac{\log (n)}{G}} \le T_\ell (G, \theta_{j} ) \le
T_\ell (G, \wh{\theta}_{j} ) \le
\l(\frac{G - \ell - \vert\widehat{\theta}_j - \theta_j \vert}{G-\ell} \r)^2  d^{(j)}_\ell +  \frac{c_0}{2} \sqrt{\frac{\log (n)}{G}}.
\end{align*}
From this, it follows that
\begin{align*}
& d^{(j)}_\ell   \frac{\vert \wh{\theta}_{j} -\theta_{j} \vert }{G-\ell}  < d^{(j)}_\ell  \frac{\vert \wh{\theta}_{j} -\theta_{j} \vert (  2G - 2 \ell -  \vert \wh{\theta}_{j} -\theta_{j} \vert )}{(G-\ell)^2} \le 
c_0 \sqrt{\frac{\log (n)}{G}},
\\
& \text{such that \ } d^{(j)}_\ell \vert \wh{\theta}_{j} -\theta_{j} \vert < c_0 \sqrt{G \log (n)}.
\end{align*}

\subsection{\tcr{Proof of Theorem~\ref{thm:consistency-marginal}}}

Firstly, note that by the same arguments as in the proof of Theorem~~\ref{thm:consistency}, the lag-$0$ NP-MOJO procedure estimates exactly one change point inside a $G$-distance of each true change point $\theta_j$, and no change points are detected outside a $G$-distance from any $\theta_j$, with probability tending to one.

{For some $1 \le j \le q$,} let us consider all $k$ such that $| k - \theta_j | < G$. First, consider the case that $k < \theta_j$; the case where $k > \theta_j$ follows analogously. Then, since the $X_t$ are independent and the kernel bounded, we have 
\begin{align*}
\mathbb{E}\{ T_0 (G, k) \} = \frac{\{ G - (\theta_j - k) \}^2}{G^2} d^{(j)}_0 + O(G^{-1}) % \text{ \ and \ } \mathbb{E} (T_0(G, \theta_j )) = d^{(j)}_0 + O (G^{-1}),
\end{align*}
so that 
\begin{align}
\mathbb{E} \left\{ T_0(G, k) - T_0(G, \theta_j ) \right\} &= - \frac{\{  \theta_j - k \} \{2G - (\theta_j - k) \}}{G^2} d^{(j)}_0 + O (G^{-1}) 
\nonumber \\
& =: f(\theta_j, G , k) d^{(j)}_0 + O (G^{-1})
\label{eq:t0:bias}
\end{align}
uniformly in $k$ satisfying $| k - \theta_j | < G$. With $\ell = 0$, and denoting $h(X_r, X_s)$ by $h_{rs}$, we have 
\begin{align}
G^{2} [ T_0(G, k) - T_0(G, \theta_j ) ] &= \sum_{r,s = k - G + 1}^{k} h_{rs} + \sum_{r,s = k + 1}^{k + G} h_{rs} - 2 \sum_{r = k - G + 1}^k \sum_{s = k+1}^{k+G} h_{rs} 
\nonumber \\
& \  - \sum_{r,s = \theta_j - G + 1}^{\theta_j} h_{rs}- \sum_{r,s = \theta_j + 1}^{\theta_j + G} h_{rs} + 2 \sum_{r = \theta_j - G + 1}^{\theta_j} \sum_{s = \theta_j + 1}^{\theta_j + G}  h_{rs} 
\nonumber \\
&= \sum_{r,s = k - G + 1}^{\theta_j - G} h_{rs} + 2 \sum_{r = \theta_j - G + 1}^{k} \sum_{s = k - G + 1}^{\theta_j - G}  h_{rs} - 2 \sum_{r = k + 1}^{\theta_j} \sum_{s = k - G + 1}^{\theta_j - G}  h_{rs} 
\nonumber \\ 
& \ - 2 \sum_{r = \theta_j + 1}^{k + G} \sum_{s = k - G + 1}^{\theta_j - G}  h_{rs}  - \sum_{r, s = k + G + 1}^{\theta_j + G} h_{rs} - 2 \sum_{r = \theta_j  + 1}^{k + G} \sum_{s = k + G + 1}^{\theta_j + G }  h_{rs} 
\nonumber \\
& \ + 2 \sum_{r = \theta_j - G + 1}^{\theta_j} \sum_{s = k + G + 1}^{\theta_j + G}  h_{rs}  + 4 \sum_{r = \theta_j + 1}^{k + G} \sum_{s = k + 1}^{\theta_j} h_{rs} - 4 \sum_{r = k+ 1}^{\theta_j} \sum_{s = \theta_j - G + 1}^{k} h_{rs} 
\nonumber \\
& =: V_1 + 2 U_1 - 2 U_2  - 2 U _3 - V_2 - 2 U_4 + 2 U_5 + 4 U_6 - 4 U_7 ,
\nonumber
\end{align}
by symmetry of the kernel $h$. Therefore, we have that 
\begin{align}
T_0(G, k) - T_0(G, \theta_j )  &= G^{-2} \left(  V_1 + 2 U_1 - 2 U_2  - 2 U _3 - V_2 - 2 U_4 + 2 U_5 + 4 U_6 - 4 U_7 \right) 
\nonumber \\
& =: \frac{(\theta_j - k)^2}{G^2} \bar{V}_1 + 2 \frac{(G -  [\theta_j - k]) (\theta_j - k)}{G^2} \bar{U}_1 - 2 \frac{(\theta_j - k)^2}{G^2} \bar{U}_2  \nonumber \\
& \ - 2 \frac{ ( G - [\theta_j - k] ) \{ \theta_j - k  \} }{G^2} \bar{U}_3   - \frac{(\theta_j - k)^2}{G^2} \bar{V}_2 
\nonumber \\
& \ - 2\frac{ \{ G - (\theta_j - k) \}  \{ \theta_j - k \}}{G^2} \bar{U}_4 
+ 2 \frac{(\theta_j - k)}{G} \bar{U}_5 
\nonumber \\
& \
+ 4 \frac{ \{ G - ( \theta_j - k )  \} \{ \theta_j - k \}}{G^2} \bar{U}_6 - 4 \frac{\{ G - ( \theta_j - k ) \}  \{\theta_j - k \}}{G^2} \bar{U}_7,    
\label{eq:t0:decomp}
\end{align}
where the bar notation represents division by the number of terms in the double summation. Terms with the notation $V$ refer to one-sample V-statistics, whilst $U$ refers to two-sample U-statistics, so that the $\bar{V}_i$ and $\bar{U}_i$ are proper $V$- and $U$-statistics.
In summary, $T_0(G, k) - T_0(G, \theta_j )$ can be represented as a sum of (scaled) one-sample V-statistics and two-sample U-statistics of independent random variables, where each sample is identically distributed. 
For each of these terms, we can apply Hoeffding's inequality for V/U-statistics \citep{hoeffding1994probability}, stated below for completeness:

For a one-sample V-statistic $\bar{V}$ of order 2 with sample size $n$, 
\begin{equation*}
\p \left(\l|  \bar{V} - \mathbb{E}(\bar{V})  \r| > z \right) \leq \exp ( - C_h n z^2 ) ,
\end{equation*}
where $C_h = (b - a)^{-2}$, and $a$ and $b$ are constants such that $a \leq h( \cdot , \cdot) \leq b$, so that for $h_1$, $C_h=1$, and for $h_2$, $C_h = \{1 + 2\exp(-2/3)\}^{-2}$.
For a two-sample U-statistic $\bar{U}$ of order 2 with sample sizes $n_1$ and $n_2$,
\begin{equation*}
\p \left(\l|  \bar{U} - \mathbb{E}(\bar{U})  \r| > z \right) \leq \exp \{ - C_h \min(n_1, n_2) z^2 \} .
\end{equation*}
The tail probabilities derived using the above are determined by the smallest sample size in the U/V-statistics which, in the case of  the $\bar{U}_i$ and $\bar{V}_i$, are greater than $|\theta_j - k|$. As an illustration, for the term $\bar{V}_1$, we have 
\begin{equation*}
\p \left( \l| \bar{V}_1 - \mathbb{E}( \bar{V}_1) \r| > z \right) \leq \exp ( - C_h | \theta_j - k| z^2 ) ,
\end{equation*}
and hence
\begin{equation*}
\p \left( \l| \frac{|\theta_j - k|^2}{G^2} \left\{ \bar{V}_1 - \mathbb{E}( \bar{V}_1) \right\} \r|  > z \right) \leq  \exp ( - C_h G^4 | \theta_j - k |^{-3} z^2 ),
\end{equation*}
and similar arguments apply to $\bar{V}_2$ and $\bar{U}_2$.
As for $\bar{U}_i$ with $i \in \{ 1, 3, 4, 6, 7 \}$, we have that
\begin{equation*}
\p \left( \l| \bar{U}_i - \mathbb{E}( \bar{U}_i) \r| > z \right) \leq \exp ( - C_h|\theta_j - k| z^2 ) ,
\end{equation*}
and hence 
\begin{align*}
\p \left( \l| \frac{( G - \vert \theta_j - k \vert ) \vert \theta_j - k \vert}{G^2} \left\{ \bar{U}_i - \mathbb{E}( \bar{U}_i) \right\} \r|  > z \right)  & \leq  \exp \{ - C_h G^4 | \theta_j - k|^{-1} {( G - | \theta_j - k | )^{-2}} z^2 \} \\ & \leq  \exp ( - C_h G^2 | \theta_j - k |^{-1} z^2 ) 
\end{align*}
Lastly, for $\bar{U}_5$, we have that
\begin{equation*}
\p \left( \l| \bar{U}_5 - \mathbb{E}( \bar{U}_5) \r| > z \right) \leq \exp ( - C_h | \theta_j - k | z^2 ) ,
\end{equation*}
and hence
\begin{equation*}
\p \left( \l| \frac{| \theta_j - k |}{G} \left( \bar{U}_5 - \mathbb{E} \{ \bar{U}_5 \} \right) \r|  > z \right) \leq  \exp ( - C_h G^2 | \theta_j - k |^{-1} z^2 ).
\end{equation*}
Combining the above with~\eqref{eq:t0:decomp}, 
we have that
\begin{align*}
& \p \l( \l| [\{T_0(G, k) - T_0(G, \theta_j ) \} -  \mathbb{E} \{ T_0(G, k) - T_0(G, \theta_j ) \} \r\vert > z \r) 
\\ 
\leq & \, {
3 \exp ( - C_h |\theta_j - k|^{-3} G^4 z^2 ) + 6 \exp ( - C_h | \theta_j - k |^{-1} G^2 z^2 )
}
\\ 
\leq & \, {
9 \exp ( - C_h | \theta_j - k |^{-1} G^2 z^2 ),
}
\end{align*}
which leads to
\begin{align*}
\p \l( \max_{1 \le j \le q} \max_{k: \, \vert k - \theta_j \vert < G} \frac{ \l\vert \{ T_0(G, k) - T_0(G, \theta_j ) \}-  \mathbb{E}\{ T_0(G, k) - T_0(G, \theta_j )\} \r \vert }{\sqrt{ \vert \theta_j - k \vert }} > z \r) 
\\
\le 9 n G \exp( - C_h G^2 z^2).
\end{align*}
This, together with~\eqref{eq:t0:bias}, implies that $\p (\mc M_{n} ) \ge 1  - n^{-1}$ where
\begin{align*}
\mc M_{n} := \l\{  \max_{1 \leq j \leq q} \max_{k: \, | k - \theta_j | < G}  \frac{ \l| \{ T_0(G, k) - T_0(G, \theta_j ) \} - f(\theta_j, G, k) d^{(j)}_0 \r\vert}{\sqrt{ {\vert \theta_j - k \vert }}}  \leq  \frac{c_0 \sqrt{\log(n)}}{G} \r\},
\end{align*}
for some large enough constant $c_0 > 0$.
Conditional on $\mc M_{n}$, we have that $T_0(G, \wh\theta) \geq T_0(G, \theta_j)$ for some $\wh\theta$ satisfying $\vert \wh\theta - \theta_j \vert < G$, implies that
\begin{align*}
0 \leq \frac{T_0(G, \wh\theta) - T_0(G, \theta_j )}{\sqrt{ \vert \wh\theta - \theta_j \vert}} & \leq \frac{f(\theta_j, G, k )}{\sqrt{ \vert \wh\theta - \theta_j \vert}} d^{(j)}_0 + \frac{c_0  \sqrt{\log(n)}}{G},
\end{align*}
so that 
\begin{align*}
\frac{\vert \wh\theta - \theta_j \vert (2G - \vert \wh\theta - \theta_j \vert )}{G^2 \sqrt{\vert \wh\theta - \theta_j \vert}} d^{(j)}_0  & \leq \frac{c_0  \sqrt{\log(n)}}{G}.
\end{align*}
Hence
\begin{align*}
\vert \wh\theta - \theta_j \vert (2G - \vert \wh\theta - \theta_j \vert ) d^{(j)}_0  & \leq   c_0 G \sqrt{ \vert \wh\theta - \theta_j \vert } \sqrt{\log(n)} \leq c_0 (2G - \vert \wh\theta - \theta_j \vert )  \sqrt{ \vert \wh\theta - \theta_j \vert } \sqrt{\log(n)},
\end{align*}
so that
\begin{align*}
%d^{(j)}_0 \vert \wh\theta - \theta_j \vert  & \leq  c_0  \sqrt{\vert \wh\theta - \theta_j \vert } \sqrt{\log(n)} ,
%\\
%\text{i.e.\ } 
(d^{(j)}_0)^2 \vert \wh\theta - \theta_j \vert  &\leq   c_0^2 \log (n),
\end{align*}
which completes the proof.

\subsection{Proof of Theorem~\ref{thm:multilag-consistent}}

Recall the definition of $\mc E_{\ell, n}$ in~\eqref{eq:Op-bound}.
In what follows, we condition our arguments on the event $\mc E_n = \cap_{\ell \in \mc L} \mc E_{\ell, n}$ which satisfies $\p(\mc E_n) \to 1$ as $n \to \infty$ for any fixed $\mc L$.
That is, in what follows, all big-O and small-o terms can uniformly be replaced by $O_P$ and $o_P$.
Throughout, we assume that there is a unique maximiser of $d^{(j)}_\ell$ with respect to $\ell \in \mc L^{(j)}$ for all $j = 1, \ldots, q$.
In the case of ties, we arbitrarily break them which does not alter the conclusion.

\begin{proof}[Proof of~\ref{thm:multilag1}]
By Step~2 in the proof of Theorem~\ref{thm:consistency}, we have for all $\ell \in \mc L$ and large enough $n$:
\begin{enumerate}[label = (\alph*)]
\item \label{cond:theta:one} For all $\wt\theta \in \wh\Theta_\ell$, there exists a unique index $j \in \mc I_\ell$ such that $\vert \wt\theta - \theta_j \vert \le \eta G$, i.e.\ $\wt\theta$ is an estimator of $\theta_j$ in view of Assumption~\ref{assum:change-point}~\ref{assum:min-space}.
\item \label{cond:theta:two} Conversely, for all $j \in \mc I_\ell$, there exists a unique element $\wt\theta \in \wh\Theta_\ell$ estimating $\theta_j$ such that $\vert \wt\theta - \theta_j \vert \le \eta G$.
\end{enumerate}
Then by Assumption~\ref{assum:lag-assumption} and~\ref{cond:theta:one}, in the first iteration of multi-lag NP-MOJO, we identify $\wt{\theta}_1$ which detects $\theta_1$ and satisfies $\vert \wt{\theta}_1 - \theta_1 \vert \le \eta G$. 
The set $\mc C_1$ contains the estimators of $\theta_1$ only.
To see this, for all $\wt\theta \in \mc C_1$ and $j > 1$,
\begin{align}
% \label{eq:c:one}
\vert \wt{\theta} - \theta_j \vert \ge \vert \theta_2 - \theta_1 \vert - \vert \wt{\theta}_1 - \theta_1 \vert  - \vert \wt{\theta} - \wt\theta_1 \vert > (2 - c - \eta) G \ge \eta G \nonumber
\end{align}
such that by~\ref{cond:theta:one}, $\wt\theta$ cannot be an estimator of $\theta_j, \, j > 1$.
Besides, any estimator of $\theta_1$ is contained in $\mc C_1$.
To see this, if $\wt\theta \notin \mc C_1$,
\begin{align}
% \label{eq:c:two}
\vert \wt{\theta} - \theta_1 \vert \ge \vert \wt{\theta} - \wt{\theta}_1 \vert - \vert \wt{\theta}_1 - \theta_1 \vert > (c - \eta) G \ge \eta G, \nonumber
\end{align}
i.e.\ such $\wt\theta$ is not an estimator of $\theta_1$ by~\ref{cond:theta:one}.
From these and~\ref{cond:theta:two}, for any $\wt\theta \in \mc C_1 \cap \wh{\Theta}_\ell$ for some lag $\ell \in \mc L^{(1)}$, we have $d^{(1)}_{\ell}  \vert \wt\theta - \theta_1 \vert \le c_0 \sqrt{G \log(n)}$ by Theorem~\ref{thm:consistency} conditional on $\mc E_n$. Then,
\begin{align}
\frac{T_{\ell}(G, \wt\theta)}{ d^{(1)}_{\ell} }  
&= \l( \frac{G - \ell - \vert \wt\theta - \theta_1 \vert }{ G - \ell } \r)^2 + O \l( \frac{\sqrt{\log (n)}}{\sqrt{G} d^{(1)}_{\ell}} \r) 
\nonumber \\
&\ge \l( 1 - \frac{c_0 \sqrt{G \log(n)}}{d^{(1)}_{\ell} (G - \ell) } \r)^2 + o(1) = 1 + o(1),
\label{eq:td}
\end{align}
where $o(1)$ terms are due to Assumption~\ref{assum:change-point}~\ref{assum:change-size}.
Therefore, for any distinct $\wt\theta, \wt\theta^\prime \in \mc C_1$ associated with lags $\ell, \ell^\prime \in \mc L^{(1)}$, respectively, we have
\begin{align*}
\frac{T_{\ell}(G, \wt\theta)}{T_{\ell^\prime}(G, \wt\theta^\prime)} = \frac{ d^{(1)}_{\ell} (1 + o(1)) } { d^{(1)}_{\ell^\prime} (1 + o(1)) },
\end{align*}
which implies that for $n$ large enough, Step~3 of multi-lag NP-MOJO identifies $\wh\theta_1 \in \mc C^{(1)} \cap \wh{\Theta}_{\ell^{(1)}}$ with $\ell^{(1)} = \argmax_{\ell \in \mc L^{(1)}} d^{(1)}_\ell$.
This, combined with Theorem~\ref{thm:consistency}, establishes that
\begin{align*}
\max_{\ell \in \mc L^{(1)}} d^{(j)}_\ell \vert \wh\theta_1 - \theta_1 \vert \le c_0 \sqrt{G \log(n)}.
\end{align*}
Step~4 of multi-lag NP-MOJO removes all estimators of $\theta_1$ from further consideration {\cblue and obtains $\wt\Theta_1$, such that $\wt\theta_2 = \min \widetilde \Theta_{2}$ is an estimator of $\theta_2$}.
Then iteratively applying the above arguments, under Assumption~\ref{assum:lag-assumption} and~\ref{cond:theta:one}--\ref{cond:theta:two}, we obtain $\wh\Theta$ satisfying the claim of the theorem.
\end{proof}

\begin{proof}[Proof of~\ref{thm:multilag2}] 
The proof proceeds analogously as the proof of~\ref{thm:multilag1}, with the following modifications of~\ref{cond:theta:one}--\ref{cond:theta:two}:
\begin{enumerate}[label = (\alph*$^\prime$)]
\item \label{cond:theta:one:n} For all $\wt\theta \in \wh\Theta_\ell, \, \ell \in \mc L$, there exists a unique index $j \in \mc I_\ell$ such that $\vert \wt\theta - \theta_j \vert < G$, i.e.\ $\wt\theta$ is an estimator of $\theta_j$ in view of Assumption~\ref{assum:change-point2}~\ref{assum:min-space2}.
\item \label{cond:theta:two:n} For all $j = 1, \ldots, q$, there exists {\cblue at least one element $\wt \theta \in \wt \Theta$ }estimating $\theta_j$ such that $\vert \wt\theta - \theta_j \vert \le \eta G$.
Among such $\wt\theta$, one is detected at lag $\ell^{(j)} = \max_{\ell \in \mc L^{(j)}} d^{(j)}_\ell$.
\end{enumerate}
Then by~\ref{cond:theta:one:n}, in the first iteration of multi-lag NP-MOJO, we identify $\wt\theta_1$ which detects $\theta_1$ and satisfies { \cblue $\vert \wt\theta_1 - \theta_1 \vert < G$. }
The set $\mc C_1$ contains the estimators of $\theta_1$ only, since for all $\wt\theta \in \mc C_1$ and $j > 1$,
\begin{align*}
\vert \wt\theta - \theta_j \vert \ge \vert \theta_2 - \theta_1 \vert - \vert \wt\theta_1 - \theta_1 \vert - \vert \wt\theta - \wt\theta_1 \vert > (3 - c)G \ge G,
\end{align*}
such that by~\ref{cond:theta:one:n}, $\wt\theta$ cannot be an estimator of $\theta_j, \, j > 1$.
Besides, any estimator of $\theta_1$ is contained in $\mc C_1$.
To see this, if $\wt\theta \notin \mc C_1$,
\begin{align*}
\vert \wt\theta - \theta_1 \vert \ge \vert \wt\theta - \wt\theta_1 \vert - \vert \wt\theta_1 - \theta_1 \vert > (c - 1) G \ge G.
\end{align*}
From~\ref{cond:theta:two:n}, there exists $\wt\theta \in \mc C_1$ detected at lag $\ell^{(1)}$ such that analogously as in~\eqref{eq:td}, we have
\begin{align*}
(d^{(1)}_{\ell^{(1)}})^{-1} T_{\ell^{(1)}}(G, \wt\theta) = 1 + o(1)
\end{align*}
conditional on $\mc E_n$. At other $\wt\theta^\prime \in \mc C_1 \setminus \{ \wt\theta \}$ detected at some $\ell \in \mc L^{(1)} \setminus \{\ell^{(1)}\}$, we have
\begin{align*}
\frac{T_{\ell}(G, \wt\theta^\prime)}{ d^{(1)}_{\ell} }  
&\le 1 + O \l( \frac{\sqrt{\log (n)}}{\sqrt{G} d^{(1)}_{\ell}} \r) = 1 + o(1), \text{ \ such that}
\\
\frac{T_{\ell}(G, \wt\theta^\prime)}{ T_{\ell^{(1)}}(G, \wt\theta) } &\le \frac{d^{(1)}_{\ell}}{d^{(1)}_{\ell^{(j)}}} (1 + o(1)) < 1
\end{align*}
for large enough $n$. 
This implies that Step~3 of multi-lag NP-MOJO identifies $\wt\theta \in \mc C^{(1)} \cap \wh{\Theta}_{\ell^{(1)}}$ as $\wh\theta_1$.
The rest of the proof is analogous to the proof of~\ref{thm:multilag1} and is omitted. 
\end{proof}

\section{Supporting lemmas}
\label{sec:aux-lemma}

\subsection{For Lemma~\ref{lemma:weight-int-identities}}

The proof of Lemma~\ref{lemma:weight-int-identities} requires the following lemmas for the weight functions $w_1$, $w_2$ and $w_3$.  Lemmas~\ref{integral-lemma} and~\ref{integral-lemma2} are stated without proof in \cite{fan2017multivariate}, whilst Lemma~\ref{integral-lemma3} is stated without proof in \cite{bakirov2006multivariate}. To the best of our knowledge, there is no proof of these results in the related literature, so we provide proofs here for completeness.
\begin{lemma}\label{integral-lemma}
For $\beta>0$ and any $x \in \mathbb{R}^p$,
\begin{align}
I_1(\beta,x) = \int_{\mathbb{R}^p} \cos (\langle t,x \rangle) \exp \l(-\frac{1}{2\beta^2} \Vert t \Vert^2 \r) dt = C_1(\beta,p) \exp \l(- \frac{\beta^2}{2} \Vert x \Vert^2 \r),
 \end{align}
where $C_1(\beta,p) =  (2 \pi)^{p/2} \beta^{p} $.
\end{lemma}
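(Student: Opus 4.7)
The plan is to recognise $I_1(\beta,x)$ as (a multiple of) the characteristic function of a centred Gaussian on $\mathbb{R}^p$ with covariance $\beta^2 \mathbf{I}$, and to carry out the computation by reducing to a one-dimensional complex Gaussian integral via a separability argument.

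First I would observe that the function $t \mapsto \sin(\langle t,x\rangle)\exp(-\|t\|^2/(2\beta^2))$ is odd in $t$ (because $\sin$ is odd and the Gaussian weight is even under $t \mapsto -t$), so its integral over $\mathbb{R}^p$ vanishes. Consequently,
\begin{equation*}
I_1(\beta,x) \;=\; \int_{\mathbb{R}^p} \exp\!\bigl(\imath \langle t,x\rangle\bigr)\, \exp\!\Bigl(-\tfrac{1}{2\beta^2}\|t\|^2\Bigr)\, dt.
\end{equation*}
Both the exponential weight and $\exp(\imath\langle t,x\rangle)$ factor over the coordinates, so by Fubini the integral equals $\prod_{k=1}^{p} J(\beta,x_k)$, where
\begin{equation*}
J(\beta,s) \;=\; \int_{\mathbb{R}} \exp\!\Bigl(\imath s u - \tfrac{u^2}{2\beta^2}\Bigr)\, du, \qquad s \in \mathbb{R}.
\end{equation*}

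Next, for each factor I would complete the square in the exponent, writing
\begin{equation*}
\imath s u - \tfrac{u^2}{2\beta^2} \;=\; -\tfrac{1}{2\beta^2}\bigl(u - \imath \beta^2 s\bigr)^2 \;-\; \tfrac{\beta^2 s^2}{2}.
\end{equation*}
A standard contour-shift argument (Cauchy's theorem applied to $\exp(-z^2/(2\beta^2))$ along the rectangular contour with vertices $\pm R$ and $\pm R + \imath\beta^2 s$, letting $R\to\infty$ and noting the vertical sides vanish because of the Gaussian decay) reduces the integral along $\mathbb{R}+\imath\beta^2 s$ to the real Gaussian integral $\int_{\mathbb{R}} \exp(-v^2/(2\beta^2))\,dv = \sqrt{2\pi}\,\beta$. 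Hence $J(\beta,s) = \sqrt{2\pi}\,\beta \exp(-\beta^2 s^2/2)$.

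Finally, combining the $p$ factors gives
\begin{equation*}
I_1(\beta,x) \;=\; \prod_{k=1}^{p} \sqrt{2\pi}\,\beta\, \exp\!\Bigl(-\tfrac{\beta^2 x_k^2}{2}\Bigr) \;=\; (2\pi)^{p/2}\beta^{p}\exp\!\Bigl(-\tfrac{\beta^2}{2}\|x\|^2\Bigr),
\end{equation*}
which is the claimed identity with $C_1(\beta,p)=(2\pi)^{p/2}\beta^{p}$. No step is particularly delicate; the only point requiring care is justifying the contour shift in $J(\beta,s)$, which is routine but is the one place where a short argument (rather than mere invocation of ``by symmetry'') is needed.
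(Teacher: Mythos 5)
Your proof is correct. The difference from the paper lies in how the multivariate integral is reduced to one dimension: you exploit the fact that the Gaussian weight $w_1$ and the factor $\exp(\imath\langle t,x\rangle)$ both separate over coordinates, so after discarding the odd sine part you get a product of $p$ one-dimensional integrals $J(\beta,x_k)$ directly, with no change of variables. The paper instead first proves the $p=1$ case (by quoting the characteristic function of $\mathcal N(0,\beta^2)$ rather than computing it via a contour shift as you do) and then handles general $p$ by invoking orthogonal invariance of $I_1(\beta,\cdot)$ to rotate $x$ onto a coordinate axis before separating. Your route is the more elementary one for this particular weight, since separability makes the rotation unnecessary; the paper's rotation argument is the template that still works for the non-separable weight $w_3$ treated in its Lemma~B.3, which is presumably why the authors use it uniformly. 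Your explicit justification of the contour shift (and of dropping the sine term by oddness together with absolute integrability, which licenses Fubini) is a complete and careful substitute for the paper's appeal to the known Gaussian characteristic function.
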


\begin{proof}
First, consider the case where $p=1$, from which the general case will follow. Recognising $I_1 (\beta ,x)$ as the (scaled) characteristic function of $\mc N(0, \beta^2)$, we have 
\begin{align*}
I_1(\beta,x) = \sqrt{2\pi} \beta \exp \l( -\frac{\beta^2 x^2}{2} \r).
\end{align*}
For the general case, note that $I(\beta,x)$ is invariant under orthogonal transformations $H$ of $x$, so that
\begin{align*}
I_1 (\beta, x)  =  \int_{\mathbb{R}^p} \cos \l( \langle Ht,Hx \rangle \r) \exp \l(-\frac{1}{2\beta^2} \Vert Ht \Vert^2 \r) dt =  \int_{\mathbb{R}^p} \cos (\langle t, Hx \rangle) \exp \l(-\frac{1}{2\beta^2} \Vert t \Vert^2 \r) dt,
\end{align*}
which follows since the inner product and Euclidean norm are invariant under orthogonal transformations, and the transformation $t \mapsto Ht$ leaves the Lebesque measure $dt$ unchanged. Therefore, to evaluate $I(\beta,x)$, we can replace $x$ with $\Vert x \Vert (1, 0, \ldots, 0)$. Letting $t = (t_1, \ldots t_p)$, we obtain
\begin{align*}
I_1(\beta,x) &= \int_{\mathbb{R}^p} \cos ( t_1 \Vert x \Vert ) \exp \l(-\frac{1}{2\beta^2} \Vert t \Vert^2 \r) dt \\
&= \int_{\mathbb{R}} \cos ( t_1 \Vert x \Vert ) \exp \l(-\frac{1}{2\beta^2} t_1^2 \r) dt_1 \int_{\mathbb{R}^{p-1}} \exp \l\{ -\frac{1}{2\beta^2} \l( t_2^2 + \cdots + t_p^2 \r) \r\} dt_2 \cdots dt_p \\
&= \sqrt{2 \pi} \beta \exp \l( - \frac{\beta^2}{2} \Vert x \Vert^2 \r) \prod_{\ell=2}^{p} \int_{\mathbb{R}} \exp \l(-\frac{ t_\ell^2}{2\beta^2}  \r) dt_\ell \\
&= (2 \pi)^{p/2} \beta^p \exp \l( - \frac{\beta^2}{2} \Vert x \Vert^2 \r).
\end{align*}
\end{proof}

\begin{lemma}\label{integral-lemma2}
For $\delta>0$ and any $x \in \mathbb{R}^p$,
\begin{align}
I_2(\delta,x) = \int_{\mathbb{R}^p} \cos (\langle t,x \rangle) \prod_{j=1}^p t_j^2 \exp(-\delta t_j^2) dt = C_2 (\delta , x) \prod_{j=1}^p \frac{2 \delta - x_j^2}{2 \delta} \exp \l( -\frac{1}{4\delta} x_j^2 \r) ,
 \end{align}
where $C_2(\delta,p) =  2^{-1} \pi^{p/2} \delta^{-3p/2}$.
\end{lemma}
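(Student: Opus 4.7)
The strategy is to exploit separability: the weight $\prod_j t_j^2 \exp(-\delta t_j^2)$ already factorises across coordinates, so the only obstacle to factorisation is the cosine term. First I would write $\cos(\langle t, x\rangle) = \operatorname{Re}(\exp(\imath \langle t, x\rangle)) = \operatorname{Re}\bigl(\prod_{j=1}^p \exp(\imath t_j x_j)\bigr)$ and pass the real part outside the integral. Fubini's theorem (justified by absolute integrability, which follows from the Gaussian factors) then splits the $p$-dimensional integral into a product of one-dimensional integrals
\[
I_2(\delta, x) = \operatorname{Re} \prod_{j=1}^p J(\delta, x_j), \qquad J(\delta, y) = \int_{\mathbb{R}} t^2 \exp(\imath t y - \delta t^2) \, dt.
\]

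Next I would evaluate the one-dimensional integral $J(\delta, y)$. The cleanest route is to write $t^2 \exp(\imath t y) = -\partial_y^2 \exp(\imath t y)$ and differentiate under the integral sign (justified by dominated convergence, using the rapid decay of $\exp(-\delta t^2)$), so that $J(\delta, y) = -\partial_y^2 K(\delta, y)$ where $K(\delta, y) = \int_{\mathbb{R}} \exp(\imath t y - \delta t^2) \, dt = \sqrt{\pi/\delta} \exp(-y^2/(4\delta))$ is the standard Gaussian Fourier transform (essentially the $p=1$ case of Lemma~\ref{integral-lemma} with $\beta^2 = 1/(2\delta)$). Two straightforward differentiations yield
\[
J(\delta, y) = \sqrt{\frac{\pi}{\delta}} \cdot \frac{2\delta - y^2}{4\delta^2} \exp\!\left(-\frac{y^2}{4\delta}\right).
\]

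Since $J(\delta, x_j)$ is real for every $j$, the real part in the outer expression is redundant, and taking the product over $j = 1, \ldots, p$ gives
\[
I_2(\delta, x) = \frac{\pi^{p/2}}{4^p \delta^{5p/2}} \prod_{j=1}^p (2\delta - x_j^2) \exp\!\left(-\frac{x_j^2}{4\delta}\right) = C_2(\delta, p) \prod_{j=1}^p \frac{2\delta - x_j^2}{2\delta} \exp\!\left(-\frac{x_j^2}{4\delta}\right),
\]
after pulling a factor of $(2\delta)^{-1}$ per coordinate out, leaving the constant $C_2(\delta, p) = \pi^{p/2}/(2^p \delta^{3p/2})$ as defined in Lemma~\ref{lemma:weight-int-identities}\ref{lemma:weight-int2}. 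There is no real obstacle here beyond the bookkeeping of constants: the only points that need a brief justification are Fubini (for the factorisation) and differentiation under the integral sign (for reducing $J$ to a derivative of the Gaussian Fourier transform), both immediate from the Gaussian decay.
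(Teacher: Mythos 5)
Your proof is correct, and it takes a genuinely different route from the paper's. The paper first establishes the $p=1$ case by integrating by parts twice and invoking the Gaussian cosine integral of Lemma~\ref{integral-lemma}, and then handles general $p$ by induction, using the cosine addition formula and a parity argument to show the sine cross-term integrates to zero. You instead complexify the cosine as $\operatorname{Re}\prod_j \exp(\imath t_j x_j)$, factorise the whole integral in one Fubini step, and evaluate the one-dimensional factor $J(\delta,y)$ by writing $t^2 e^{\imath t y} = -\partial_y^2 e^{\imath t y}$ and differentiating the Gaussian Fourier transform twice. This avoids the induction entirely and replaces the odd-function cancellation with the simpler observation that each factor $J(\delta,x_j)$ is real; the differentiation-under-the-integral device is also the more systematic way to absorb the polynomial weight $t_j^2$, and would extend painlessly to higher even powers. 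Your justifications (Fubini from the Gaussian domination, dominated convergence for the two differentiations) are exactly the right ones and are immediate here. One bookkeeping point in your favour: your final constant $C_2(\delta,p)=\pi^{p/2}/(2^p\delta^{3p/2})$ agrees with the weight normalisation in Lemma~\ref{lemma:weight-int-identities}~\ref{lemma:weight-int2} and with the last line of the paper's own inductive computation, which produces $2^{-p}\pi^{p/2}\delta^{-3p/2}$; the prefactor $2^{-1}$ displayed in the statement of Lemma~\ref{integral-lemma2} (and the argument ``$C_2(\delta,x)$'') is a typo in the paper, so no correction to your argument is needed.
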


\begin{proof}
 We proceed by induction on the dimension $p$. First, consider the case where $p=1$. Then, repeatedly using integration by parts and Lemma~\ref{integral-lemma}, we obtain
 \begin{align*}
 I_2 (\delta , x) &= \int_{\mathbb{R}} \cos (tx) t^2  \exp (-\delta t^2) dt   = \int_{\mathbb{R}}  \left\{ t \cos (tx) \right\} \times \left\{ t  \exp (-\delta t^2) \right\} dt   \\
 & = \frac{1}{2 \delta} \int_{\mathbb{R}} \l\{ \cos (tx) - tx \sin (tx) \r\} \exp \l( -\delta t^2 \r) dt \\ 
 &= 2^{-1} \sqrt{\pi} \delta^{-3/2} \exp \l( - \frac{1}{4 \delta} x^2  \r) -\frac{x^2}{4 \delta^2} \int_{\mathbb{R}} \cos (tx) \exp(-\delta t^2) dt \\
 & = 2^{-1} \sqrt{\pi} \delta^{-3/2} \exp \l( - \frac{1}{4 \delta} x^2  \r) -\frac{\sqrt{\pi}x^2}{4 \delta^{5/2}} \exp \l( - \frac{1}{4 \delta} x^2  \r) \\
 & = 2^{-1} \sqrt{\pi} \delta^{-3/2} \exp \l( - \frac{1}{4\delta} x^2  \r) \l( \frac{2 \delta - x^2}{2 \delta} \r) .
 \end{align*}

For general dimension $p$, assume the result is true for dimension $p-1$ and proceed via induction. Using the cosine summation formula, we have
\begin{align*}
  & \int_{\mathbb{R}^p} \cos (\langle t,x \rangle) \prod_{j=1}^p t_j^2 \exp(-\delta t_j^2) dt  \\
  =& \int_{\mathbb{R}^p} \left\{ \cos \l( \sum_{j=1}^{p-1} t_j x_j \r) \cos (t_p x_p)  -   \sin \l( \sum_{j=1}^{p-1} t_j x_j \r)   \sin (t_p x_p)  \right\} \prod_{j=1}^p t_j^2 \exp(-\delta  t_j^2) dt \\
  =& \int_{\mathbb{R}^{p-1}} \cos \l( \sum_{j=1}^{p-1} t_j x_j \r)  \prod_{j=1}^{p-1} t_j^2 \exp(-\delta  t_j^2) dt_j \int_{\mathbb{R}} \cos (t_p x_p) t_p^2  \exp(-\delta t_p^2 ) dt_{p} \\
  & \ \ - \int_{\mathbb{R}^{p-1}} \sin \l( \sum_{j=1}^{p-1} t_j x_j \r)  \prod_{j=1}^{p-1} t_j^2 \exp(-\delta  t_j^2) dt_j \int_{\mathbb{R}} \sin (t_p x_p) t_p^2  \exp(-\delta t_p^2 ) dt_{p} \\
  =& \int_{\mathbb{R}^{p-1}} \cos \l( \sum_{j=1}^{p-1} t_j x_j \r)  \prod_{j=1}^{p-1} t_j^2 \exp(-\delta  t_j^2) dt_j \int_{\mathbb{R}} \cos (t_p x_p) t_p^2  \exp(-\delta t_p^2 ) dt_{p} \\
  =& \ 2^{-p} \pi^{p/2} \delta^{-3p/2} \prod_{j=1}^p \frac{2 \delta - x_j^2}{2 \delta} \exp \l( -\frac{1}{4\delta} x_j^2 \r) ,
\end{align*}
where the third line follows since the one-dimensional integral is the integral of an odd function and integrates to $0$, and the fourth line follows from the inductive assumption and the proof in the case $p=1$. Hence the result follows by induction.
\end{proof}

\begin{lemma}\label{integral-lemma3}
For $\gamma \in (0,2)$ and any $x,y \in \mathbb{R}^{p}$,
\begin{align}\label{non-sep-weight-identity}
& I_3 (\gamma,x,y) = \int_{\mathbb{R}^{p}} \int_{\mathbb{R}^{p}} \frac{1 - \cos (\langle u,x \rangle) \cos (\langle v,y \rangle)}{(\Vert u \Vert^{2} +\Vert v \Vert^{2})^{(\gamma+2p)/2}} dudv = C_3 (\gamma , 2p) \l( \Vert x \Vert^2 + \Vert y \Vert^2  \r)^{\gamma/2},
\\
& \text{where \ } C_3(\gamma, p) = \frac{2 \pi^{p/2} \Gamma (1 - \gamma/2 ) }{\gamma 2^{\gamma} \Gamma((p+\gamma)/2 )}.    
\nonumber
\end{align}
\end{lemma}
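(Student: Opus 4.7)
The plan is to reduce the double integral on $\mathbb{R}^p \times \mathbb{R}^p$ to two single integrals on $\mathbb{R}^{2p}$ of the same classical form. The key observation is the product-to-sum identity
\begin{align*}
1 - \cos(a)\cos(b) &= \tfrac{1}{2}\{1 - \cos(a+b)\} + \tfrac{1}{2}\{1 - \cos(a - b)\},
\end{align*}
which applied with $a = \langle u, x\rangle$ and $b = \langle v,y\rangle$ gives $a \pm b = \langle (u,v), (x, \pm y)\rangle$. Writing $w = (u,v) \in \mathbb{R}^{2p}$, so that $\Vert u\Vert^2 + \Vert v\Vert^2 = \Vert w\Vert^2$, and setting $z_{\pm} = (x, \pm y) \in \mathbb{R}^{2p}$, this decomposes $I_3(\gamma, x, y)$ as
\begin{align*}
I_3(\gamma, x, y) = \tfrac{1}{2} J(\gamma, z_+) + \tfrac{1}{2} J(\gamma, z_-), \qquad
J(\gamma, z) := \int_{\mathbb{R}^{2p}} \frac{1 - \cos(\langle w, z\rangle)}{\Vert w \Vert^{2p + \gamma}}\, dw.
\end{align*}

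The next step is to establish the classical single-integral identity
\begin{align*}
J(\gamma, z) = C_3(\gamma, 2p)\, \Vert z \Vert^{\gamma}, \qquad z \in \mathbb{R}^{2p}, \quad \gamma \in (0,2),
\end{align*}
which is the $d = 2p$ case of the standard evaluation used in \cite{bakirov2006multivariate} and \cite{szekely2007}. I would prove it by first invoking rotational invariance of both the numerator (after integrating out the dependence on the angle between $w$ and $z$) and the denominator to reduce to the case $z = \Vert z \Vert e_1$, then factoring in spherical coordinates to separate a radial integral from a surface integral over $S^{2p - 1}$. The radial part $\int_0^\infty r^{-1-\gamma}(1 - \cos(r\Vert z\Vert \cos\vartheta))\, dr$ evaluates to a constant multiple of $(\Vert z\Vert |\cos\vartheta|)^{\gamma}$ via a standard Gamma-function formula (or integration by parts twice, since $\gamma \in (0,2)$ guarantees integrability at both endpoints), and the remaining integral over $S^{2p - 1}$ of $|\cos\vartheta|^\gamma$ yields a ratio of Gamma functions. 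Collecting constants reproduces $C_3(\gamma, 2p) = 2\pi^{p}\Gamma(1-\gamma/2)/\{\gamma 2^\gamma \Gamma(p + \gamma/2)\}$.

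The proof concludes by observing $\Vert z_+\Vert = \Vert z_-\Vert = (\Vert x \Vert^2 + \Vert y\Vert^2)^{1/2}$, so
\begin{align*}
I_3(\gamma, x, y) = \tfrac{1}{2}\bigl\{C_3(\gamma, 2p) \Vert z_+\Vert^{\gamma} + C_3(\gamma, 2p)\Vert z_-\Vert^{\gamma}\bigr\} = C_3(\gamma, 2p)\bigl(\Vert x \Vert^2 + \Vert y \Vert^2\bigr)^{\gamma/2}.
\end{align*}
The main obstacle is the computation of $J(\gamma, z)$, since the integrand's singularity at $w = 0$ (controlled by $\gamma > 0$) and its slow decay at infinity (controlled by $\gamma < 2$) must be handled simultaneously; this is precisely why the assumption $\gamma \in (0,2)$ appears. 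The subsequent constant-tracking for $C_3$ is routine but must be done carefully, and can alternatively be bypassed by citing Lemma~1 of \cite{szekely2007} directly.
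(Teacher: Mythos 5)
Your proof is correct, and it rests on the same key step as the paper's: the product-to-sum identity $2\cos a\cos b=\cos(a+b)+\cos(a-b)$, which splits $I_3$ into two integrals each involving a single cosine. Where you diverge is in how those two integrals are evaluated. You observe that $\langle u,x\rangle\pm\langle v,y\rangle=\langle w,z_\pm\rangle$ with $w=(u,v)$ and $z_\pm=(x,\pm y)$, so each piece is literally the classical one-cosine integral over $\mathbb{R}^{2p}$ evaluated at $z_\pm$, and since $\Vert z_+\Vert=\Vert z_-\Vert=(\Vert x\Vert^2+\Vert y\Vert^2)^{1/2}$ the result follows from a single application of the known identity $\int_{\mathbb{R}^d}\Vert w\Vert^{-d-\gamma}\{1-\cos\langle w,z\rangle\}\,dw=C_3(\gamma,d)\Vert z\Vert^\gamma$ with $d=2p$ (Lemma~1 of \cite{szekely2007}). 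The paper never makes this identification: it instead reduces $x$ and $y$ separately to $\Vert x\Vert e_p$ and $\Vert y\Vert e_p$ by orthogonal invariance, passes to explicit $2p$-dimensional spherical coordinates, applies the Sz\'ekely--Rizzo radial formula to the inner integral, and then evaluates a product of $2p-2$ angular integrals of powers of $\sin\theta_j$ by hand, tracking the telescoping Gamma-function ratios. Your route buys a substantially shorter argument and delegates all constant-tracking to the cited $d$-dimensional identity; the paper's route is self-contained at the level of the $2p$-dimensional computation (it only cites the one-dimensional radial evaluation) but is considerably longer. One small point to keep honest in a full write-up: justify that the split into $\tfrac12 J(\gamma,z_+)+\tfrac12 J(\gamma,z_-)$ is legitimate, i.e.\ that each piece is separately (absolutely) integrable near $w=0$ and at infinity for $\gamma\in(0,2)$ — you flag exactly this at the end, so the plan is complete.
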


  \begin{proof}
We first prove the case where $p=1$, from which the general case will follow. First, use the trigonometric identity $2\cos(a)\cos(b) = \cos(a+b) + \cos(a-b)$, to obtain
 \begin{align*}
 I_3 (\gamma, x,y) = \frac{1}{2} \int_{\mathbb{R}^{2}} \frac{1 - \cos (ux+vy) }{(u^{2} +v^{2})^{(\gamma+2)/2}} dudv + \frac{1}{2} \int_{\mathbb{R}^{2}} \frac{1 - \cos (ux-vy) }{(u^{2} +v^{2})^{(\gamma+2)/2}} dudv  = : A + B.
 \end{align*}
 Considering the first term, make the transformation to polar coordinates $u = r \cos \theta$, $v= r \sin \theta$, to obtain
 \begin{align*}
A = \frac{1}{2} \int_0^{2 \pi} \int_0^{\infty} \frac{1- \cos \l\{ \l(x\cos \theta + y \sin \theta \r) r  \r \} }{|r|^{\gamma+1}} dr d\theta .
 \end{align*}
 Next, note that by Identity 3.032.2 of \cite{gradshteyn2014table} and since the integrand is an even function with respect to $r$,
  \begin{align*}
A =  \frac{1}{2} \int_0^{\pi} \int_\mathbb{R} \frac{1- \cos \l( \l( \sqrt{x^2 + y^2} \cos \theta \r) r  \r )}{|r|^{\gamma+1}} dr d\theta .
 \end{align*}
 This term can be integrated with respect to $r$ using Lemma 1 from \cite{szekely2007}, to yield
  \begin{align*}
A = \frac{1}{2} C_2( \gamma,1) (x^2 + y ^2)^{\gamma/2} \int_0^{\pi} \vert \cos \theta \vert^{\gamma}  d\theta .
 \end{align*}
 Next, by elementary trigonometric integral identities, we obtain
  \begin{align*}
A = C_2(\gamma, 1) \l( x^2 + y^2 \r)^{\gamma /2} \int_0^{\pi /2} \l\vert \cos \theta \r\vert^\gamma d\theta =  C_2(\gamma, 1) \l( x^2 + y^2 \r)^{\gamma /2} \frac{\sqrt{\pi} \Gamma ( (\gamma + 1)/2 )}{2 \Gamma (\gamma/2 + 1 )}.
 \end{align*}
A similar calculation shows that
   \begin{align*}
B = C_2( \gamma,1) \l( x^2 + y^2 \r)^{\gamma /2} \frac{\sqrt{\pi} \Gamma ( (\gamma + 1)/2 )}{2 \Gamma (\gamma/2 + 1 )}.
 \end{align*}
Minor simplifications yield the required
 \begin{align*}
I_3(\gamma,x,y) = C_2 (\gamma,2) \l( x^2 + y^2 \r)^{\gamma /2} .
 \end{align*}
 For the general case, note that $I(x,\gamma)$ is invariant under orthogonal transformations $H$ of $x$ and $y$, so that
\begin{align*}
\int_{\mathbb{R}^{p}} \int_{\mathbb{R}^{p}} \frac{1 - \cos ( \langle u, x \rangle ) \cos ( \langle v, y \rangle )}{(\Vert u \Vert^{2} + \Vert v \Vert^{2})^{(\gamma+2p)/2}} dudv = \int_{\mathbb{R}^{p}} \int_{\mathbb{R}^{p}} \frac{1 - \cos ( \langle u, Hx \rangle ) \cos ( \langle v, Hy \rangle )}{(\Vert u \Vert^{2} + \Vert v \Vert^{2})^{(\gamma+2p)/2}} dudv ,
\end{align*}
which follows since the inner product and Euclidean norm are invariant under orthogonal transformations. Therefore, to evaluate $I(\gamma,x,y)$ we can replace $x$ with $\Vert x \Vert (0, 0, \ldots, 1)$, and $y$ with $|y|_2 (0, 0, \ldots, 1)$. Hence, letting $u = (u_1, \ldots, u_p)$ and $v = (v_1, \ldots , v_p)$, 
\begin{align*}
 I_3(\gamma,x,y) =  \int_{\mathbb{R}^{p}} \int_{\mathbb{R}^{p}}\frac{1 - \cos (u_p \Vert x \Vert) \cos (v_p \Vert y \Vert )}{( \Vert u \Vert^{2} + \Vert v \Vert^{2})^{(\gamma+2p)/2}} dudv .
\end{align*}
As in the $p=1$ case, we split the integral into two parts to obtain 
\begin{align*}
I_3 (\gamma, x,y) = \frac{1}{2} \int_{\mathbb{R}^{p}} \int_{\mathbb{R}^{p}} \frac{1 - \cos ( u_p \Vert x \Vert + v_p  \Vert y \Vert) }{(\Vert u \Vert^{2} + \Vert v \Vert^{2})^{(\gamma+2p)/2}} dudv 
\\
+ \frac{1}{2} \int_{\mathbb{R}^{p}} \int_{\mathbb{R}^{p}} \frac{1 - \cos (u_p \Vert x \Vert - v_p \Vert y \Vert) }{( \Vert u \Vert^{2} + \Vert v \Vert^{2})^{(\gamma+2p)/2}} dudv  = A + B.
 \end{align*}
Now, focusing on the the first term, make the transformation to $2p$-dimensional spherical coordinates, so that
\begin{align*}
u_1 &= r \cos \theta_1, v_1 = r \sin \theta_1 \cos \theta_2, \\
u_2 &= r \sin  \theta_1 \sin  \theta_2 \cos \theta_3  , v_2 = r \sin  \theta_1 \sin \theta_2 \sin \theta_3 \cos \theta_4,\\
& \vdots \\
u_p &= r \l( \prod_{j=1}^{2p-2} \sin \theta_j \r) \cos \theta_{2p-1}, v_p = r \l( \prod_{j=1}^{2p-2} \sin \theta_j \r) \sin \theta_{2p-1} ,
\end{align*}
where the $(\theta_1, \ldots , \theta_{2p-2})$ range over $D= [0, \pi]^{2p-2}$, and $\theta_{2p-1}$ ranges over $[0, 2\pi)$. The Jacobian of this transformation is given by
\begin{align*}
|J| = r^{2p-1} dr  \prod_{j=1}^{2p-2} \sin^{2p-1-j} ( \theta_{j} ) d\theta_j.
\end{align*}
Hence,
\begin{align*}
 A &=   \frac{1}{2} \int_D \int_0^{2 \pi} \int_{0}^{\infty} \frac{1 - \cos \l( r (\Vert x \Vert \cos \theta_{2p-1} + \Vert y \Vert \sin \theta_{2p-1}) \prod_{j=1}^{2p-2} \sin \theta_j  \r)  }{r^{\gamma+1}} dr d\theta_{2p-1} 
 \\
 & \qquad \qquad \qquad \qquad \times \prod_{j=1}^{2p-2} \sin^{2p-1-j} ( \theta_{j} ) d\theta_j
 \\
 &= \frac{1}{2} C_2( \gamma, 1) ( \Vert x \Vert^2 + \Vert y \Vert^2 )^{\gamma/2} \int_0^\pi \vert \cos \theta_{2p-1} \vert^\gamma d\theta_{2p-1}  \prod_{\ell=1}^{2p-2} \int_0^{\pi} \sin^{2p+\gamma-1-\ell} \mc I_\ell d \theta_{\ell} \\[1ex]
 &= \frac{1}{2} C_2(\gamma, 1) ( \Vert x \Vert^2 + \Vert y \Vert^2 )^{\gamma/2} \frac{\sqrt{\pi} \Gamma ((\gamma+1)/2)}{\Gamma (\gamma/2 +1)}  \prod_{\ell=1}^{2p-2}  \frac{\sqrt{\pi} \Gamma ((2p+\gamma - \ell)/2 )}{\Gamma ( (2p+\gamma -1 - \ell )/2 + 1 )} \\[1ex]
 &= ( \Vert x \Vert^2 + \Vert y \Vert^2 )^{\gamma/2} \frac{{\pi} \Gamma (1 - \gamma/2)}{\Gamma (1 + \gamma/2)}  \times \frac{ \pi^{p-1} \Gamma (1 + \gamma/2 )}{ \Gamma ( (2p+\gamma)/2)} = \frac{\pi^p \Gamma (1 - \gamma/2)}{\gamma 2^\gamma \Gamma ( (2p + \gamma)/2) } ( \Vert x \Vert^2 + \Vert y \Vert^2 )^{\gamma/2}.
\end{align*}
The second line follows using Lemma 1 from \cite{szekely2007}, and the third follows using the fact that $\sin \theta_i$ is nonnegative on $[0,\pi]$, and the same trigonometric identities used in the $p=1$ case. The final line follows from cancellation in the numerator and denominator of consecutive product terms. The term $B$ can be dealt with in an identical manner to yield the same expression. Therefore, we have that
\begin{align*}
I_3 (\gamma, x , y ) =     \frac{2 \pi^p \Gamma (1 - \gamma/2)}{\gamma 2^\gamma \Gamma ( (2p + \gamma)/2) } ( \Vert x \Vert^2 + \Vert y \Vert^2 )^{\gamma/2} = C_2( \gamma, 2p) ( \Vert x \Vert^2 + \Vert y \Vert^2 )^{\gamma/2} .
\end{align*}
\end{proof}

\subsection{For Theorem~\ref{thm:consistency}}

\begin{lemma}\label{uniform-expectation}
Suppose that Assumptions~\ref{assum:beta-mixing},~\ref{assum:kernel} and~\ref{assum:change-point}~\ref{assum:min-space} hold. Then, for $\mc D_\ell (G,k)$ defined in Equation~\eqref{eq:pop-test-stat}, we have that
\begin{align*}
\max_{G \leq k \leq n- G} \l\vert \mathbb{E} \{ \tstatkl \}   -   \mc D_\ell (G,k)\r\vert = O(G^{-1/2}).
\end{align*}    
\end{lemma}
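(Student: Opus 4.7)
The plan is to decompose $T_\ell(G,k) = S_1(k) + S_2(k) - 2 S_3(k)$ into its three $V$-statistic pieces (the two diagonal-block sums and the cross-block sum), take expectations term by term, and match the resulting weighted averages against the Lemma~\ref{lemma:weight-int-identities} representation of $d^{(j)}_\ell$ as $\mathbb{E}h(Y_1^{(j)},\tilde Y_1^{(j)}) + \mathbb{E}h(Y_1^{(j-1)},\tilde Y_1^{(j-1)}) - 2\mathbb{E}h(\tilde Y_1^{(j)}, Y_1^{(j-1)})$. The matching has a combinatorial part (splitting each sum by which stationary segment each index belongs to) and a probabilistic part (replacing dependent expectations by fully decoupled ones via $\beta$-mixing).

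By Assumption~\ref{assum:change-point}~\ref{assum:min-space}, $\min_{0 \le j \le q}(\theta_{j+1}-\theta_j)\ge 2G$, so the window $[k-G+1,\, k+G-\ell]$ contains at most one change point $\theta_{j}$. I would partition the indices into the pre-change block $\mathcal{B}_- = [k-G+1,\theta_j]$ and post-change block $\mathcal{B}_+ = [\theta_j+1, k+G-\ell]$ (one of them empty if no change point lies in the window). Each of $S_1,S_2,S_3$ then splits into sub-sums indexed by pairs of these blocks, and the counts of pairs give rise to precisely the factor $((G-\ell-|k-\theta_j|)/(G-\ell))^2$ appearing in $\mathcal{D}_\ell(G,k)$ once the expectations inside each sub-sum are replaced by the decoupled kernel evaluations.

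For the probabilistic replacement I would invoke the standard coupling characterization of $\beta$-mixing: for $s<t$ with $t-s>\ell$, there exists $\tilde Y_t$ independent of $Y_s$ with the same marginal law such that $\mathbb{P}(Y_t \ne \tilde Y_t) \le \beta(t-s-\ell)$. Boundedness of $h$ (Assumption~\ref{assum:kernel}) then gives
\[
\bigl\lvert \mathbb{E}[h(Y_s,Y_t)] - \mathbb{E}[h(Y_s,\tilde Y_t)] \bigr\rvert \le 2\Vert h\Vert_\infty\, \beta(t-s-\ell),
\]
and when $s,t$ lie in the same block the right-hand expectation equals the decoupled expectation entering $d^{(j)}_\ell$; the cross-block term in $S_3$ is handled identically. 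Aggregating over pairs, splitting at a cutoff $m$, the total approximation error is bounded by
\[
\frac{C}{(G-\ell)^2}\Bigl[(G-\ell)(m+\ell) + (G-\ell)\sum_{r>m}\beta(r)\Bigr] \;\le\; C'\Bigl[\frac{m+\ell}{G} + \frac{1}{G}\sum_{r>m}\beta(r)\Bigr].
\]
Under $\beta(r)\le C_\beta r^{-\gamma_2}$ with $\gamma_2\ge 1$, choosing $m \asymp \sqrt{G}$ balances the two contributions and yields the claimed $O(G^{-1/2})$ rate, uniformly in $k$ since $m$ and the $\beta$-mixing constants are independent of $k$.

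The main technical obstacle is the careful bookkeeping of the decomposition when the window straddles a change point. One must verify that the various edge contributions---the $O(G)$ diagonal terms $s=t$ in $S_1$ and $S_2$, the degenerate pairs with $|s-t|\le\ell$ in $S_3$, and the handful of cross-segment boundary pairs---each contribute only $O(1/G)$ to the bias, and that the leading combinatorial weights from each sub-sum combine to reproduce the exact factor $((G-\ell-|k-\theta_j|)/(G-\ell))^2$ in $\mathcal{D}_\ell(G,k)$. Once the accounting is laid out cleanly, the $O(G^{-1/2})$ term arising from the $\beta$-mixing truncation dominates the $O(G^{-1})$ edge contributions, yielding the lemma.
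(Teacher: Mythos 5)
Your proposal is correct, but it handles the key probabilistic step by a genuinely different route from the paper. The paper's proof performs a Hoeffding decomposition of each one- and two-sample $V$-statistic into its fully decoupled mean $\mu_\ell(G,k)$ plus linear and degenerate remainder terms, and then invokes Lemma~2.2 of the cited Harel--Puri reference for nonstationary absolutely regular arrays to obtain $\mathbb{E}\{V^{(r)}(G,k)^2\} = O(G^{-1})$, whence $\vert \mathbb{E}\{T_\ell(G,k)\}-\mu_\ell(G,k)\vert = O(G^{-1/2})$ by Cauchy--Schwarz. You instead bound each pairwise difference $\vert \mathbb{E}\{h(Y_s,Y_t)\}-\mathbb{E}\{h(Y_s,\tilde Y_t)\}\vert$ directly by $2\Vert h\Vert_\infty$ times the relevant $\beta$-coefficient via the total-variation/Berbee coupling characterisation and then sum over pairs. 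This is more elementary and self-contained, and it actually yields a sharper bound: since $\sum_{r>m}\beta(r)$ is $O(1)$ when $\gamma_2>1$ and $O(\log G)$ when $\gamma_2=1$, taking $m=O(1)$ gives a bias of order $G^{-1}$ (or $G^{-1}\log G$) rather than $G^{-1/2}$; your choice $m\asymp\sqrt G$ discards this gain but still delivers the claimed rate, so nothing is lost for the lemma as stated. Two remarks. First, your coupling step needs the $\beta$-mixing coefficient between $\sigma(Y_s)$ and $\sigma(Y_t)$, which follows from Assumption~\ref{assum:beta-mixing} only under the process-level reading of that assumption (the innovations-based definition does not literally cover $\sigma(Y_t)$, which depends on the infinite past); however, the paper's own proof makes the identical identification when it asserts that $\{Y_t\}$ inherits the mixing rate of $\{X_t\}$, so this is not a gap relative to the paper. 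Second, the combinatorial segment-splitting that converts the decoupled means into $((G-\ell-\vert k-\theta_j\vert)/(G-\ell))^2 d^{(j)}_\ell$ plus $O(G^{-1})$ edge terms is where the bulk of the paper's written proof lives; you only sketch it, but you correctly identify its structure and the orders of the diagonal, lag-$\ell$ and boundary contributions, so the plan is sound.
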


\begin{proof}
Firstly, from Assumption~\ref{assum:beta-mixing}, since $\{X_t \}_{t \in \mathbb{Z}}$ is $\beta$-mixing with algebraic decaying mixing coefficients, then so too is $\{ Y_t \}_{t \in \mathbb{Z}}$ with the same decay rate, since $\ell$ is fixed. Next, we have
 \begin{align*}
\tstatkl & = \frac{1}{(G-\ell)^2} \l\{ \sum_{s,t = k - G+1}^{k - \ell} h(Y_s, Y_t) + \sum_{s,t = k+1}^{k +G - \ell} h(Y_s, Y_t) -2 \sum_{s = k - G+1}^{k - \ell} \sum_{t = k +1}^{k +G - \ell} h(Y_s, Y_t)  \r\} \\
&=: T^{(1)}_{\ell} (G,k) + T^{(1)}_{\ell}  (G,k+G) - 2 T^{(2)}_{\ell}  (G,k) .
\end{align*}
Define the following notations, which are the nonstationary analogues of the standard quantities used in a Hoeffding decomposition of a $V$-statistic of order $2$. When dealing with a vector, let $\leq$ denote the inequality which is satisfied for all coordinates of the vector. Let
\begin{align*}
h_{1,G}  (y_1) = \sum_{t = k-G+1}^{k-\ell}   \int_{\mathbb{R}^{2p}} h(y_1 , y_2) dF_{t} (y_2), 
\end{align*}
where $F_t$ denotes the cdf of $Y_{t}$. Next, let
\begin{align*}
V^{(1)}_{\ell}  (G,k) &=  \frac{1}{(G-\ell)^2} \sum_{s = k-G+1}^{k-\ell}  \int_{\mathbb{R}^{2p}} h_{1,G}  (y_1)  \times d \l( \mathbb{I}{\{  Y_s \leq y_1 \}} - F_{s} (y_{1} ) \r),
\\
V^{(2)}_{\ell}  (G,k) &=  \frac{1}{(G-\ell)^2} \sum_{s,t = k-G+1}^{k-\ell} \int_{\mathbb{R}^{2p}} 
 \int_{\mathbb{R}^{2p}} h(y_1, y_2)  \times d \l( \mathbb{I}{\{ Y_s \leq y_1 \}} - F_{s} (y_{1} ) \r) d \l( \mathbb{I}{\{ Y_t \leq y_2 \}} - F_{t} (y_{2} ) \r).
\end{align*}
Further, define
\begin{align*}
\mu^{(1)}_{\ell}  (G, k) &= \frac{1}{(G-\ell)^2} \sum_{s,t = k-G+1}^{k-\ell} \int_{\mathbb{R}^{2p}} \int_{\mathbb{R}^{2p}} h(y_1 , y_2) dF_{s} (y_1) dF_t (y_2) \\
&= \frac{1}{(G-\ell)^2} \sum_{s,t = k-G+1}^{k-\ell} \mathbb{E} ( h(Y_s, \wt Y_t) ),
\\
\mu^{(2)}_{\ell} (G, k) &= \frac{1}{(G-\ell)^2} \sum_{s=k-G+1}^{k-\ell} \sum_{t = k+1}^{k+G-\ell} \int_{\mathbb{R}^{2p}} \int_{\mathbb{R}^{2p}} h(y_1 , y_2) dF_{s} (y_1) dF_t (y_2) \\
&=  \frac{1}{(G-\ell)^2 } \sum_{s=k-G+1}^{k-\ell} \sum_{t = k+1}^{k+G-\ell}  \mathbb{E} ( h(Y_s, \wt Y_t) ),
\end{align*}
where $\wt Y_t$ denotes an independent copy of $Y_t$. Then, following e.g. \cite{harel1989limiting}, by the Hoeffding decomposition, we have that
\begin{align*}
T^{(1)}_{\ell} (G,k) = \mu^{(1)}_{\ell}  (G, k) + 2 V^{(1)}_{\ell}  (G,k) + V^{(2)}_{\ell}  (G,k) .
\end{align*}
By Lemma 2.2 of \cite{harel1989limiting}, we have that $ \mathbb{E} \{ V^{(r)}_\ell (G,k)^2\} = O(G^{-1})$ for $r = 1, 2$, from which it follows that 
\begin{align*}
\l\vert \mathbb{E} \left\{ T^{(1)}_{\ell}  (G,k) \right\} - \mu^{(1)}_{\ell}  (G, k ) \r\vert = O(G^{-1/2}). 
\end{align*}
An identical argument applies to the term $T^{(1)}_{\ell}  (G,k+G)$. For $T^{(2)}_{\ell}  (G,k)$, the Hoeffding decomposition takes the following form:
\begin{align*}
 T^{(2)}_{\ell}  (G,k) &= \mu^{(2)}_{\ell}  (G,k) + 2 U^{(1)}_{\ell} (G,k)  + U^{(2)}_{\ell}  (G,k)  , \text{ \ where}
 \\
U^{(1)}_{\ell}  (G,k) &=  \frac{1}{(G-\ell)^2} \sum_{s = k+1}^{k+G-\ell}  \int_{\mathbb{R}^{2p}} h_{1,G}  (y_1)  \times d \l( \mathbb{I}{\{ Y_s \leq y_1 \}}  - F_{s} (y_{1} ) \r), \\
  U^{(2)}_{\ell}  (G,k)  &= \frac{1}{(G-\ell)^2}\sum_{s=k-G+1}^{k-\ell} \sum_{t = k+1}^{k+G-\ell} \int_{\mathbb{R}^{2p}} 
 \int_{\mathbb{R}^{2p}} h(y_1, y_2)  \\
 & \qquad \qquad \qquad \qquad
 \times d \l( \mathbb{I}{\{ Y_s \leq y_1\}} - F_{s} (y_{1} ) \r) d \l( \mathbb{I}{\{ Y_t \leq y_2 \}} - F_{t} (y_{2} ) \r),
\end{align*}
with $\mathbb{E}\{U^{(1)}_{\ell} (G,k) \}  =0$. Analogously, we have from Equation (2.14) of \cite{harel1989limiting} that $\mathbb{E} \{ U^{(2)}_{\ell} (G,k)^2\} = O(G^{-1})$, so that 
\begin{align*}
\l\vert \mathbb{E} \left\{ T^{(2)}_{\ell}  ( G,k ) \right\} - \mu^{(2)}_{\ell}  (G, k) \r\vert = O(G^{-1/2}).     
\end{align*}
Therefore, we have that
\begin{align} \label{eq:bias}
& \l\vert \mathbb{E} \{ T_{\ell}  (G,k) \} - \mu_{\ell}  (G, k) \r\vert = O(G^{-1/2}), \text{ \ where}
\\
& \mu _{\ell} (G, k) = \mu^{(1)}_{\ell} (G, k) + \mu^{(1)}_{\ell} (G, k+G) -2 \mu^{(2)}_{\ell} (G, k) .  
\nonumber
\end{align}

Next, we work out the form of $ \mu_{\ell}  (G, k)$, in order to express it in terms of the population quantity $d^{(j)}_\ell$. Firstly, consider the case where $k \in \{ \theta_{j} +G , \ldots , \theta_{j+1} - G \}$ for some $j = 0 , \ldots , q$. Then, by Assumption~\ref{assum:change-point}~\ref{assum:min-space}, we have that
\begin{align*}
\mu^{(1)}_{\ell}  (G,k)  = \mu^{(1)}_{\ell}  (G,k+G) = \mu^{(2)}_{\ell} (G,k) = \mathbb{E} \l\{ h \l(Y^{(j)}_1 , \wt Y_1^{(j)} \r) \r\},  \end{align*}
so that $\mu_{\ell} (G,k) = 0$. Now, consider $k \in \{ \theta_j - G + \ell + 1, \ldots , \theta_j  \}.$
Denote $\wt h_{s,t} = \mathbb{E} \{ h (Y_s , \wt Y_t ) \}$, and let 
\begin{align*}
 & \mc A_\ell^{(1)} (G,k) = \{ (s,t) : k +1 \leq s,t \leq k+G-\ell, \quad s \text{ or } t \in \{\theta_j -\ell + 1 , \ldots , \theta_j \} \}, \\[1ex]
  & \mc A_\ell^{(2)} (G,k) = \{ (s,t) : k -G +1 \leq s \leq k-\ell, \quad \theta_j -\ell + 1 \leq t \leq \theta_j \} .
 \end{align*}
Then, $\mu_{\ell}  (G,k)$ is decomposed as
\begin{align*}
&  \mu_{\ell}  (G,k) = \frac{1}{(G-\ell)^2}  \l(  \sum_{s,t = k-G+1}^{k-\ell} \hst + \sum_{s,t = k+1}^{k + G -\ell} \hst  - 2 \sum_{s=k-G+1}^{k-\ell} \sum_{t = k+1}^{k + G -\ell} \hst  \r)   \\[1ex]
 & = \frac{1}{(G-\ell)^2} \l(  \sum_{s,t = k-G+1}^{k-\ell} \hst +    \sum_{s,t = k+1}^{\theta_{j} - \ell} \hst + 2 \sum_{s = \theta_{j}+1}^{k+G-\ell}  \sum_{t = k+1}^{\theta_{j} - \ell} \hst + \sum_{s,t = \thetajl +1}^{k+G-l} \hst + \sum_{s, t \in  \mc A_\ell^{(1)} (G,k)} \hst \r) \\
 & \ - \frac{2}{(G-\ell)^2} \l(  \sum_{s=k+1}^{\thetajl - \ell} \sum_{t=k-G+1}^{k-\ell} \hst + \sum_{s = k-G+1}^{k-\ell} \sum_{t=\thetajl +1}^{k+G-\ell} \hst + 2 \sum_{s,t \in  \mc A_\ell^{(2)} (G,k)} \hst \r).
\end{align*}
Let $h^{(j)} = \mathbb{E} ( h (  Y^{(j)}_1 , \wt Y^{(j)}_1 ) )$ and analogously for $h^{(j-1)}$, and let $h^{(j-1,j)} = \mathbb{E} ( h ( Y^{(j-1)}_1 , \wt Y^{(j)}_1 )  )$. Noting that $\vert\mc A^{(1)}_\ell (G,k) \vert = O (2 \ell G)$ and $\vert\mc A^{(2)}_\ell (G,k) \vert = O (\ell G)$, we have that terms involving $\mc A^{(1)}_\ell (G,k)$ and $\mc A^{(2)}_\ell (G,k)$ are of order $O (G^{-1})$, since $\wt{h}_{s,t}$ is bounded by Assumption~\ref{assum:kernel}. Then, rearranging terms and collecting remainder terms all of order $O (G^{-1})$, we have that
\begin{align}
 \mu_{\ell}  (G,k) & = h^{(j-1)} +\frac{(\thetajl - \ell - k)^2}{(G-\ell)^2}  h^{(j-1)}  + \frac{2 (G-\ell + k - \thetajl)(\thetajl - \ell - k)}{(G-\ell)^2}  h^{(j-1,j)} \nonumber \\[1ex]
 & \ \ + \frac{(G-\ell + k - \thetajl)^2}{(G-\ell)^2}  h^{(j)}  -  \frac{2 (G-\ell)(\thetajl - \ell - k)}{(G-\ell)^2}  h^{(j-1)} \nonumber \\[1ex]
 & \ \ - \frac{2 (G-\ell)(G - \ell + k - \thetajl )}{(G-\ell)^2}  h^{(j-1,j)} + O(G^{-1}) \nonumber \\[1ex]
 & =  \frac{(G-\ell)^2  +(\thetajl -k - \ell)^2 - 2 (G-\ell)(\thetajl - k - \ell)  }{(G-\ell)^2} h^{(j-1)} + \frac{(G-\ell + k - \thetajl)^2}{(G-\ell)^2}  h^{(j)}   \nonumber  \\[1ex]
 & \ \ -2 \frac{(G-\ell)(G-\ell +k - \thetajl) - (G-\ell +k -\thetajl)(\thetajl - k - \ell)}{(G-\ell)^2}  h^{(j-1,j)} + O(G^{-1}) \nonumber \\[1ex]
 & =  \frac{(G-\ell +k - \thetajl)^2 - \ell^2 + 2\ell (G + k - \thetajl)}{(G-\ell)^2} h^{(j-1)} + \frac{(G-\ell + k - \thetajl)^2}{(G-\ell)^2}  h^{(j)} \nonumber \\[1ex]
 & \ \ -2 \frac{(G-\ell +k - \thetajl)^2 -\ell (G-\ell +k - \thetajl)}{(G-\ell)^2}  h^{(j-1,j)}  + O(G^{-1}) \nonumber \\[1ex]
 & =  \l( \frac{G - \ell + k - \theta_{j}}{G-\ell} \r)^2 \left( h^{(j-1)} + h^{(j)} - 2 h^{(j-1,j)} \right) + O (G^{-1}) \nonumber \\[1ex]
 &=  \label{expectation-left-change} \l( \frac{G - \ell + k - \theta_{j}}{G-\ell} \r)^2 d^{(j)}_\ell +  O(G^{-1}).
 \end{align}
A similar argument can be used in the case where $k \in \{ \theta_j +1 , \ldots , \theta_j + G - \ell -1 \}$, to yield
\begin{align}\label{expectation-right-change}
 \mu_{\ell}  (G,k) =  \l( \frac{G - \ell - k + \theta_{j}}{G-\ell} \r)^2 d^{(j)}_\ell +  O(G^{-1}),  
\end{align}
Combining Equations~\eqref{eq:bias},~\eqref{expectation-left-change} and~\eqref{expectation-right-change}, we get that for any $ |k - \theta_{j} \vert <G - \ell$, 
\begin{align*}
 \l\vert \mathbb{E} \{ \tstatkl \}   -   \l( \frac{G - \ell -  |k - \theta_{j}\vert}{G-\ell} \r)^2  d^{(j)}_\ell \r\vert = O(G^{-1/2}).   
\end{align*}
 {\cblue Lastly, when $k \in \{  \theta_{j} - G +1, \ldots , \theta_{j} - G + \ell \}$ or $k \in \{  \theta_{j} + G -\ell, \ldots , \theta_{j} - G -1\}$, analogous calculations show that $\mu_{\ell} (G,k) = {O}(G^{-1})$}, so that $\mathbb{E}\{ \tstatkl \} = O(G^{-1/2})$ for $\vert k - \theta_{j} \vert > G - \ell$, yielding the desired result.
\end{proof}

\begin{lemma}[\cite{xu2024change}, Theorem 4]\label{lemma:xu} 
Let $\{Z_t\}_{t \in \mathbb{Z}}$ be an $\mathbb{R}$-valued, mean-zero, possibly nonstationary process which admits the form $Z_t = g_t ( \mc F_t )$, where $g_t$ is a measurable function and $\mc F_t = \sigma ( \vep_s, \, s \leq t )$ with i.i.d.\ random elements $\{\vep_s\}_{s \in \mathbb{Z}}$.  Let the cumulative functional dependence measure $\Delta_{m,2} (Z)$ be defined as in Assumption~\ref{assum:functional-dep}.  
Assume that there exist absolute constants $\gamma_1(Z), \gamma_3(Z), C_\text{F}, C_Z > 0$ such that 
\begin{align*}
    \sup_{m \geq 0}\exp(C_\text{F} m^{\gamma_1(Z)})\Delta_{m,2} (Z) \leq C_Z   
\end{align*}
and $\sup_{t \in \mathbb{Z}}\mathbb{P}(|Z_t| > x ) \leq \exp(1 - x ^{\gamma_3(Z)})$, for any $x > 0$.  
Also suppose that $\gamma(Z) = \{1/\gamma_1(Z) + 1/\gamma_3(Z)\}^{-1} < 1$. 
Then, there exist absolute constants $c_1, c_2 > 0$ such that for any $z \ge 1$ and integer $n \geq 3$,
\begin{align*}
\p\l( \frac{1}{\sqrt n} \l\vert \sum_{t = 1}^n Z_t \r\vert \geq z \r) \leq n\exp \l(- c_1 z^{\gamma(Z)} n^{\gamma(Z)/2}\r) + 2\exp \l( -c_2 z^2 \r).
\end{align*}
\end{lemma}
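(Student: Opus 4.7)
The statement is a verbatim restatement of Theorem~4 of \cite{xu2024change}, so in the paper the proof is simply a reference to that source. Nonetheless, the natural strategy to establish such a Nagaev-type bound for a mean-zero, possibly nonstationary functional process with sub-exponential marginals and exponentially decaying functional dependence is to combine truncation with an $m$-dependent approximation calibrated to the target deviation level $z$.

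First, I would split each summand as $Z_t = Z_t^{<} + Z_t^{>}$, where $Z_t^{<} = Z_t \mathbb{I}\{|Z_t| \le M_z\} - \mathbb{E}[Z_t \mathbb{I}\{|Z_t| \le M_z\}]$ and $Z_t^{>}$ is the complementary centred part, for a truncation level $M_z$ proportional to a suitable power of $z\sqrt n$. The contribution of the heavy-tailed part $\sum_t Z_t^{>}$ is handled by a union bound together with the sub-exponential tail $\mathbb{P}(|Z_t| > x) \le \exp(1 - x^{\gamma_3})$, producing the first term of the form $n \exp(-c_1 z^{\gamma(Z)} n^{\gamma(Z)/2})$ after optimising $M_z$; the exponent $\gamma(Z) = (1/\gamma_1 + 1/\gamma_3)^{-1}$ emerges from balancing the tail mass against the dependence-induced approximation error below.

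Second, for the bounded part $\sum_t Z_t^{<}$, I would use the functional-dependence structure $Z_t = g_t(\mc F_t)$ to approximate each $Z_t^{<}$ by its $m$-step conditional analogue $\mathbb{E}[Z_t^{<} \mid \vep_{t-m}, \ldots, \vep_t]$, with the $L_2$ approximation error bounded by $\Delta_{m,2}(Z) \lesssim \exp(-C_F m^{\gamma_1})$. Choosing $m$ so that this approximation error is dominated by $z/\sqrt n$ allows the remainder to be absorbed. The $m$-step conditional sequence forms a block-decomposable near-martingale, to which a Bernstein-type inequality can be applied (for instance via the Berbee coupling of blocks of length $m$, or via a martingale Bernstein bound on the Cramér projection), yielding the second term $2\exp(-c_2 z^2)$ after optimising over $m$.

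The main technical obstacle is the careful interplay between the three tuning parameters $M_z$, $m$, and the block length used in the Bernstein step: each must be chosen as an explicit function of $z$ and $n$ so that all remainder contributions (truncation bias, dependence approximation bias, and variance proxy in the Bernstein bound) balance to produce a clean two-term bound with universal constants $c_1, c_2$ that do not depend on $n$ or $z$. Verifying that the constant $c_1$ genuinely involves $\gamma(Z)$ rather than $\min(\gamma_1, \gamma_3)$ requires the more delicate Orlicz-norm/Young-inequality argument used in the proof of \cite[Theorem~4]{xu2024change}, which I would reproduce rather than attempt to re-derive from scratch.
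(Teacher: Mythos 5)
The paper offers no proof of this lemma: it is imported verbatim from Theorem~4 of \cite{xu2024change}, and the bracketed citation in the lemma header is the entire justification, exactly as you observe in your opening sentence. Your sketch of the truncation-plus-$m$-dependence-approximation route to the Fuk--Nagaev bound (with the exponent $\gamma(Z)$ arising from balancing the tail decay $\gamma_3$ against the dependence decay $\gamma_1$) is a faithful outline of how such results are established in the cited literature, but since the paper does not reproduce that argument there is nothing further to compare.
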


\begin{lemma}\label{lemma:unif-prob} 
Suppose that Assumptions~\ref{assum:functional-dep},~\ref{assum:kernel} and~\ref{assum:change-point}~\ref{assum:min-space} hold.
Then, for any $\ell \le L$ with some fixed $L < \infty$, we have as $n \to \infty$, for any $z \ge 1/\sqrt{G - \ell}$,
\begin{align*}
\p\l( \max_{G \leq k \leq n-G} \l\vert \tstatkl - \mathbb{E}\{ \tstatkl \} \r\vert \geq z  \r)
\le 6nG^2 \exp\l( - c_1^\prime z^\gamma G^\gamma \r) + 12 n G \exp\l( - c_2^\prime z^2 G \r),
\end{align*}
where $c_1^\prime$ and $c_2^\prime$ depend only on $C_F$, $C_X$, $\gamma_1$, $C_h$, $p$ and $L$. %\begin{align*}
\end{lemma}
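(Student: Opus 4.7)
The plan is to reduce the problem to a uniform bound on a finite collection of centred sums and then apply the tail inequality of Lemma~\ref{lemma:xu}. Recall from the proof of Lemma~\ref{uniform-expectation} the decomposition $T_\ell(G,k) = T^{(1)}_\ell(G,k) + T^{(1)}_\ell(G,k+G) - 2T^{(2)}_\ell(G,k)$, and, for each of these, the Hoeffding expansion
$T^{(1)}_\ell(G,k) = \mu^{(1)}_\ell(G,k) + 2V^{(1)}_\ell(G,k) + V^{(2)}_\ell(G,k)$ and analogously for $T^{(2)}_\ell$. Since $\mu^{(r)}_\ell$ are deterministic and $\mathbb{E}\{V^{(1)}_\ell\} = \mathbb{E}\{U^{(1)}_\ell\} = 0$, it suffices to control, uniformly in $k$, the linear parts $V^{(1)}_\ell, U^{(1)}_\ell$ and the degenerate bilinear remainders $V^{(2)}_\ell - \mathbb{E}\{V^{(2)}_\ell\}, U^{(2)}_\ell - \mathbb{E}\{U^{(2)}_\ell\}$.

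For the linear pieces, write $V^{(1)}_\ell(G,k) = (G-\ell)^{-2} \sum_{s=k-G+1}^{k-\ell} \psi_{k,G}(Y_s)$ for an appropriate centred, symmetrised functional $\psi_{k,G}$. Because $h$ is Lipschitz in the difference of its arguments with constant $C_h$ and is bounded (Assumption~\ref{assum:kernel}), the process $Z_s = \psi_{k,G}(Y_s)$ inherits the cumulative functional dependence measure of $\{X_t\}$ from Assumption~\ref{assum:functional-dep}, with $\Delta_{m,2}(Z) \le O(G C_h) \cdot \Delta_{m,2}$, and is sub-exponential (in fact bounded). Applying Lemma~\ref{lemma:xu} to the sum $\sum_s Z_s$ and rescaling produces, for each fixed $k$, a tail bound of the form $G\exp(-c_1 z^\gamma G^\gamma) + 2\exp(-c_2 z^2 G)$ with $\gamma = (1/\gamma_1 + 1/\gamma_3)^{-1}$. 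An identical treatment handles $U^{(1)}_\ell$.

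For the degenerate bilinear remainder $V^{(2)}_\ell(G,k)$ (and similarly $U^{(2)}_\ell$), I would fix the outer index $t$ and regard the inner sum over $s$ as a linear functional whose summand inherits the Lipschitz and dependence properties as above, so that Lemma~\ref{lemma:xu} gives, for each $t$, a tail of the same form $G\exp(-c_1 z^\gamma G^\gamma) + 2\exp(-c_2 z^2 G)$. A union bound over the at most $G - \ell$ outer indices $t$ contributes an extra factor $G$ to each exponential, producing $G^2\exp(-c_1 z^\gamma G^\gamma) + 2G\exp(-c_2 z^2 G)$ per $k$. Summing the six bilinear and linear contributions from the three statistics $T^{(1)}_\ell(G,k), T^{(1)}_\ell(G,k+G), T^{(2)}_\ell(G,k)$, and finally taking a union bound over the at most $n$ values $k \in \{G, \ldots, n-G\}$, yields the claimed $6nG^2\exp(-c_1^\prime z^\gamma G^\gamma) + 12nG\exp(-c_2^\prime z^2 G)$. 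The assumption $\ell \le L$ keeps $G - \ell \asymp G$ so that the constants depend only on $C_F, C_X, \gamma_1, C_h, p, L$, and the lower bound $z \ge 1/\sqrt{G-\ell}$ ensures the Bernstein regime is active.

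The main obstacle is the bilinear remainder: a naive moment bound combined with Bernstein would lose a factor of $\sqrt{G}$ in the Gaussian-regime exponent. Preserving the sharp $z^2 G$ rate requires exploiting the \emph{degeneracy} of the Hoeffding remainder together with the exponential decay of $\Delta_{m,2}$, so that after conditioning on the outer index $t$ the inner sum can be treated as $G$ near-independent blocks separated by mixing-width $O(\log G)$, exactly the setting in which Lemma~\ref{lemma:xu} delivers the desired exponent. The remaining bookkeeping is the verification that the boundary residual terms (the $O(\ell)$-many mixed summands across segment boundaries) contribute only to constants absorbed into $c_1^\prime, c_2^\prime$ via $\ell \le L$.
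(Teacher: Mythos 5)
Your route is genuinely different from the paper's, and it contains a gap at exactly the point you flag as the main obstacle. The paper does not use the Hoeffding decomposition here at all: it rewrites $(G-\ell)^2 T_\ell(G,k)$ as a combination of three double sums and then reorganises each double sum \emph{by diagonals}, i.e.\ as $\sum_{s}\sum_{t}h_{t,t+s}$ with the offset $s$ fixed in the inner sum. For each fixed offset $s$, the summands $h_{t,t+s}=h(Y_t,Y_{t+s})$ form (after an index shift) a single adapted process of the form $g_t(\mathcal F_t)$ whose cumulative functional dependence measure is bounded by $O(C_h\sqrt{p})\Delta_{m,2}$ via the Lipschitz property in Assumption~\ref{assum:kernel}, so Lemma~\ref{lemma:xu} applies directly; a union bound over the $O(G)$ diagonals and the $O(n)$ values of $k$ then produces the stated $6nG^2$ and $12nG$ prefactors. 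No degenerate $U$-statistic ever has to be controlled.

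The gap in your version is the treatment of the bilinear remainder $V^{(2)}_\ell(G,k)$ (and $U^{(2)}_\ell$). Fixing the outer index $t$ and ``regarding the inner sum over $s$ as a linear functional'' does not put you in the setting of Lemma~\ref{lemma:xu}: the summand $g(Y_s,Y_t)$ depends on the random $Y_t$, so for $s<t$ it is not $\mathcal F_s$-measurable (the lemma requires $Z_s=g_s(\mathcal F_s)$), and conditioning on $Y_t$ destroys the functional-dependence bounds for the remaining $Y_s$'s because they are serially dependent on $Y_t$. Degeneracy of the Hoeffding remainder gives you $\mathbb{E}\{V^{(2)}_\ell(G,k)^2\}=O(G^{-1})$ (which is all the paper uses, in Lemma~\ref{uniform-expectation}, for the bias), but upgrading that to an exponential tail uniform in $k$ for a degenerate two-sample statistic of dependent data is a substantially harder problem, and your sketch of ``$G$ near-independent blocks separated by mixing-width $O(\log G)$'' is not an argument Lemma~\ref{lemma:xu} supplies. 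A secondary issue: you should normalise $\psi_{k,G}$ by $G$ before computing its dependence measure; as written, $\Delta_{m,2}(Z)=O(GC_h)\Delta_{m,2}$ would feed a $G$-dependent constant $C_Z$ into Lemma~\ref{lemma:xu} and degrade the exponents. The diagonal decomposition is the device that makes all of this unnecessary.
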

\begin{proof}
By symmetry and boundedness of the kernel $h$, with $|h| \leq \bar{h}$ for some $\bar{h}>0$, we can re-write the test statistic $\tstatkl$ as :
\begin{align}
&  (G-\ell)^2 \tstatkl = \sum_{s,t = k - G+1}^{k - \ell} h_{s,t} + \sum_{s,t = k+1}^{k +G - \ell} h_{s,t} -2 \sum_{s = k - G+1}^{k - \ell} \sum_{t = k +1}^{k +G - \ell} h_{s,t} 
 \nonumber \\
 & =  -\sum_{s,t=k-G+1}^{k+G-\ell}  h_{s,t}   + 2 \sum_{s,t = k - G+1}^{k-\ell} h_{s,t} + 2 \sum_{s,t = k+1}^{k+G-\ell} h_{s,t} + O \l( \ell  (2G -\ell) \bar{h} \r)  \nonumber \\
 & = - 2\sum_{s=0}^{2G-\ell -1} \sum_{t = k-G+1}^{k+G - \ell -s}  h_{t,t+s}  + 4 \sum_{s = 0}^{G - \ell -1} \sum_{t=k-G+1}^{k-\ell-s} h_{t,t+s} 
+ 4 \sum_{s = 0}^{G-\ell-1} \sum_{t=k+1}^{k+G-\ell -s} h_{t,t+s} + O (G) 
\nonumber 
\\
&{ \cblue =  - 2\sum_{s=0}^{2G-\ell -3} \sum_{t = k-G+1}^{k+G - \ell -s}  h_{t,t+s}  + 4 \sum_{s = 0}^{G - \ell -3} \sum_{t=k-G+1}^{k-\ell-s} h_{t,t+s} 
+ 4 \sum_{s = 0}^{G-\ell-3} \sum_{t=k+1}^{k+G-\ell -s} h_{t,t+s} + O (G) } \nonumber \\
& =: 2 H_{1, k} + 4 H_{2, k} + 4 H_{3, k} + O(G). \label{t-stat-rewritten} 
\end{align}
Letting $v(s)= 2G - \ell - s$, we have
\begin{align*}
& P_{1, k} = \p\l( \vert H_{1, k} - \mathbb{E}(H_{1, k}) \vert > \frac{(2G-\ell)^{2} z}{10} \r) 
% = \p \l( \l\vert \sum_{s={\cred 0} }^{2G - \ell - 1} \sum_{t=k-G+1}^{k-G+v(s)} \l( h_{t,t+s} - \mathbb{E}(h_{t,t+s}) \r) \r\vert > \frac{(2G-\ell)^{2} z}{10} \r) 
% \leq \sum_{s={\cred 0}}^{2G - \ell - 1} \p \l(  \l\vert \sum_{t=k-G+1}^{k-G+v(s)} \l( h_{t,t+s} - \mathbb{E}(h_{t,t+s}) \r) \r\vert > \frac{(2G-\ell)^{2} z}{20\cred{(\kappa + 1)}} \r)
\\ 
& \le \sum_{s={ 0}}^{2G - \ell - 3} \underbrace{\p \l(  \l\vert \sum_{t=k-G+1}^{k-G+v(s)} \l\{ h_{t,t+s} - \mathbb{E}(h_{t,t+s}) \r\} \r\vert > \frac{(2G-\ell) z}{10} \r)}_{P_{1, k, s}}.
% \label{eq:h:1k}
\end{align*}
For the sequence $\{ h_{t,t+s} \}_{{t \in \mathbb{Z}}}$, we consider its cumulative functional dependence measure (see the definition in~\eqref{eq:func:dep}):
\begin{align*}
\delta_{u, \nu}(h_{\cdot, \cdot + s}) &= \sup_{t \in \mathbb{Z}} \l\Vert h_{t, t + s} - h_{t, t + s, \{t - u\}} \r\Vert_\nu = \sup_{t \in \mathbb{Z}} \l\Vert h (Y_t, Y_{t+s} ) - h (Y_{t, \{t  - u \}}, Y_{t+s, \{ t  - u \}} ) \r\Vert_\nu 
\\
&= \sup_{t \in \mathbb{Z}} \l\Vert h_0 (Y_t- Y_{t+s} ) - h_0 (Y_{t, \{t-u \}}- Y_{t+s, \{ t-u\}} )  \r\Vert_\nu 
\\
&\le C_h \sup_{t \in \mathbb{Z}} \l( \l\Vert \Vert Y_t- Y_{t, \{ t - u\} } \Vert \r\Vert_\nu + \l\Vert \Vert Y_{t + s}- Y_{t+s, \{ t-u\}} \Vert \r\Vert_\nu \r)
\\
&\le C_h (2pC_\nu)^{1/\nu} \l( \max_{1 \le i \le p} \delta_{u, \nu, i} + \max_{1 \le i \le p} \delta_{u + s, \nu, i}  \r)
\end{align*}
where $C_\nu = 1$ if $0 < \nu \leq 1$ and $C_\nu = 2^{\nu-1}$ if $\nu >1$. 
The second equality and the first inequality follow from Assumption~\ref{assum:kernel} and Minkowski's inequality, and the last inequality follows from Cr inequality.
Then,
\begin{align*}
\Delta_{m, \nu}(h_{\cdot, \cdot + s}) &= \sum_{u = m}^\infty \delta_{u, \nu}(h_{\cdot, \cdot + s})
\le 2 C_h (2p C_\nu )^{1/\nu} \Delta_{m, \nu},
\end{align*}
such that under Assumption~\ref{assum:functional-dep}, with $C_F$, $C_X$ and $\gamma_1$ defined therein, 
\begin{align*}
\sup_{m \ge 0} \exp(C_F m^{\gamma_1}) \Delta_{m, 2}(h_{\cdot, \cdot + s})  \le 
4 C_h \sqrt{p} C_X.
\end{align*}
This allows us to apply Theorem~4 of \cite{xu2024change} (stated as Lemma~\ref{lemma:xu} here) and obtain
\begin{align*}
P_{1, k, s} \leq v (s) \exp \l( - c_1^\prime z^\gamma G^{\gamma} \r) + 2 \exp \l(-c_2^\prime v(s)^{-1} z^2 G^2 \r)
\end{align*}
with $\gamma = 2\gamma_1/(2+\gamma_1) < 1$, from the boundedness of the kernel assumed in Assumption~\ref{assum:kernel}.
where $c_k^\prime, \, k = 1, 2$, depend only on $C_F$, $C_X$, $\gamma_1$, $C_h$, $p$ and $L$.
Then, we have 
\begin{align*}
P_{1, k}  &\leq \sum_{s = { 0}}^{2G - \ell - 3}  v (s) \exp \l( - c_1^\prime  z^\gamma G^{\gamma} \r)    +   2\sum_{s= {0}}^{2G - \ell - 3} \exp \l( - 2 c_2^\prime v(s)^{-1} z^2 G^2 \r) \\
& \leq 2 G^2 \exp \l( - c_1^\prime  z^\gamma G^{\gamma} \r)  + 4G \exp \l( - c_2^\prime  z^2 G \r) ,
\end{align*}
where the second inequality follows since $\sum_{s=0}^{2G - \ell - 3} v(s) < 2G^2$ and $v(s) = 2G - \ell - s$ is decreasing in $s$.
Then by Bonferroni correction,
\begin{align*}
\p \l( \max_{G \leq k \leq n-G} \l\vert H_{1, k} - \mathbb{E}(H_{1, k}) \r\vert > \frac{(2G-\ell)^2 z}{10} \r) 
\leq  2n G^2 \exp \l( - c_1^\prime  z^\gamma G^{\gamma} \r)  +  4nG \exp \l( - c_2^\prime  z^2 G \r),
\end{align*}
and we can similarly bound the deviations of $H_{2, k}$ and $H_{3, k}$ over $k$.
Hence, the conclusion follows.
\end{proof}

\begin{lemma}\label{lemma:eta-crit} 
Suppose that the assumptions of Theorem~\ref{thm:consistency} hold. 
\begin{enumerate}[label = (\roman*)]
\item \label{lemma:eta-crit-a}  For any $\eta \in (0,1)$, we have $\p ( \mc S_{\ell,n} (j) ) \to 1$ for any $j \in \mc I_\ell$ and $\p ( \mc S_{\ell,n} ) \to 1$, where
\begin{align*}
\mc S_{\ell,n} (j) = \left\{ T_\ell(G, \theta_{j} ) \geq \max \l(  \max_{k: \, |k - \theta_{j}\vert > (1- \eta) (G-\ell) } \tstatkl , \zeta_\ell (n,G) \r)  \right\} 
\end{align*}
and $\mc S_{\ell, n}  = \bigcap_{j \in \mc I_\ell} \mc S_{\ell, n} (j)$.

\item  \label{lemma:eta-crit-b} 
For any $\eta \in (0,1)$, we have $\p (  \wt{\mc S}_{\ell,n} (j) ) \to 1$ for any $j \in \mc I_\ell$ and $\p (  \wt{\mc S}_{\ell,n} ) \to 1$, where
\begin{align*}
\begin{split}
\wt{\mc S}_{\ell,n} (j) = & \bigcap_{0 \leq r \leq \lceil \frac{2}{\eta} \rceil -2} \left[ \left\{ T_\ell\l(G, \theta_{j} + \l\lfloor \frac{r \eta (G-\ell)}{2} \r\rfloor \r) \geq \max_{k: \, \frac{(r+1) \eta (G-\ell)}{2} \le k - \theta_{j} \le \frac{(r+2) \eta (G-\ell)}{2} } \tstatkl  \right\}   \right. \\
& \left. \bigcap \left\{ T_\ell\l(G, \theta_{j} - \l\lfloor \frac{r \eta (G-\ell)}{2} \r\rfloor \r) \geq \max_{k: \, \frac{(r+1) \eta (G-\ell)}{2} \le \theta_j - k \le \frac{(r+2) \eta (G-\ell)}{2} } \tstatkl  \right\} \right],
\end{split}
\end{align*}
and $\wt{\mc S}_{\ell, n}  = \bigcap_{j \in \mc I_\ell} \wt{\mc S}_{\ell, n} (j)$.
\end{enumerate}
 
\end{lemma}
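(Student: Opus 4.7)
The plan is to reduce both parts to the deterministic analysis of $\mathcal{D}_\ell(G,k)$ by conditioning on the high-probability event
\[
\mathcal{E}_{\ell,n} = \left\{ \max_{G \le k \le n-G}\left|T_\ell(G,k) - \mathcal{D}_\ell(G,k)\right| \le \tfrac{c_0}{2}\sqrt{\log(n)/G}\right\}
\]
established in~\eqref{eq:Op-bound} via Lemmas~\ref{uniform-expectation} and~\ref{lemma:unif-prob}. On $\mathcal{E}_{\ell,n}$, any comparison between two values of $T_\ell(G,\cdot)$ can be replaced by the corresponding comparison of $\mathcal{D}_\ell(G,\cdot)$, up to a deterministic slack of $c_0\sqrt{\log(n)/G}$, which Assumption~\ref{assum:change-point}~\ref{assum:change-size} renders negligible compared with $d^{(j)}_\ell$.

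For part~\ref{lemma:eta-crit-a}, I first note that under Assumption~\ref{assum:change-point}~\ref{assum:min-space} the minimum spacing $\theta_{j+1}-\theta_j\ge 2G$ guarantees that for any $k$ with $|k-\theta_j|\le G-\ell$, the only nonzero term in~\eqref{eq:pop-test-stat} is the one indexed by $j$, so
\[
\mathcal{D}_\ell(G,k) = \left(\tfrac{G-\ell-|k-\theta_j|}{G-\ell}\right)^{2} d^{(j)}_\ell.
\]
In particular, $\mathcal{D}_\ell(G,\theta_j)=d^{(j)}_\ell$, whereas for $k$ in the annulus $(1-\eta)(G-\ell)<|k-\theta_j|\le G-\ell$ one has $\mathcal{D}_\ell(G,k)\le \eta^{2}d^{(j)}_\ell$; and for $|k-\theta_j|>G-\ell$ but still inside the triangle of some neighbouring change point, an analogous bound of the form $\eta^{2}d^{(j')}_\ell$ applies since the minimum-spacing argument precludes overlap between the triangles centred at distinct $\theta_{j'}$. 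The separation $(1-\eta^{2})d^{(j)}_\ell$ between $\mathcal{D}_\ell(G,\theta_j)$ and the maximum of $\mathcal{D}_\ell(G,k)$ over the prescribed set then dominates $c_0\sqrt{\log(n)/G}$ by Assumption~\ref{assum:change-point}~\ref{assum:change-size}, while the same assumption, coupled with the explicit form $\zeta_\ell(n,G)=c_\zeta\sqrt{\log(n)/G}$, yields $T_\ell(G,\theta_j)\ge d^{(j)}_\ell(1-o(1))>\zeta_\ell(n,G)$. Taking a union bound over $j\in\mathcal{I}_\ell$ (finitely many sets) gives $\mathbb{P}(\mathcal{S}_{\ell,n})\to 1$.

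For part~\ref{lemma:eta-crit-b}, the same triangle representation shows that, inside the interval $|k-\theta_j|\le G-\ell$, $\mathcal{D}_\ell(G,k)$ is a strictly decreasing function of $|k-\theta_j|$. Setting $a_r=\lfloor r\eta(G-\ell)/2\rfloor/(G-\ell)$ and any $b\in[(r+1)\eta/2,(r+2)\eta/2]$, a direct computation gives
\[
\mathcal{D}_\ell\bigl(G,\theta_j+\lfloor r\eta(G-\ell)/2\rfloor\bigr) - \mathcal{D}_\ell(G,\theta_j+b(G-\ell)) = \bigl[(1-a_r)^{2}-(1-b)^{2}\bigr]d^{(j)}_\ell \ge \tfrac{\eta}{2}(2-a_r-b)\, d^{(j)}_\ell,
\]
which is bounded below by a constant multiple of $\eta d^{(j)}_\ell$ uniformly over $0\le r\le \lceil 2/\eta\rceil-2$ and the specified range of~$b$. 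The choice of the range of $r$ keeps $b\le 1$, so we stay inside the triangle and no cross-contamination from other change points arises. Assumption~\ref{assum:change-point}~\ref{assum:change-size} again makes this gap asymptotically larger than $c_0\sqrt{\log(n)/G}$, so on $\mathcal{E}_{\ell,n}$ the claimed inequality lifts from $\mathcal{D}_\ell$ to $T_\ell$. The symmetric inequality on the left of $\theta_j$ follows identically, and a union bound over the finitely many pairs $(r,j)$ completes the proof.

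The main subtlety I foresee is the part~\ref{lemma:eta-crit-a} step that bounds $\mathcal{D}_\ell(G,k)$ for $k$ outside the triangle centred at $\theta_j$ but possibly inside a triangle centred at another $\theta_{j'}$: one must invoke the minimum-spacing assumption to argue that such competing triangles can only contribute $\eta^{2}d^{(j')}_\ell$-type terms in the annulus and that the dominant comparison relevant to detecting $\theta_j$ is with its own triangle. Once this deterministic picture is in place, the concentration bound from $\mathcal{E}_{\ell,n}$ together with Assumption~\ref{assum:change-point}~\ref{assum:change-size} turns each of the asserted inequalities into a routine arithmetic check.
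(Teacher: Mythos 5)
Your proposal is correct and follows essentially the same route as the paper's proof: condition on the uniform approximation event $\mc E_{\ell,n}$ from~\eqref{eq:Op-bound}, read off the triangular profile of $\mc D_\ell(G,k)$ to get the value $d^{(j)}_\ell$ at $\theta_j$ versus at most $\eta^2 d^{(j)}_\ell$ on the annulus, and let Assumption~\ref{assum:change-point}~\ref{assum:change-size} absorb the $O(\sqrt{\log(n)/G})$ slack; your explicit monotonicity computation for part~\ref{lemma:eta-crit-b} simply fills in what the paper dismisses as "analogous arguments". The one point where you are no more rigorous than the paper is the treatment of $k$ in the annulus that lies inside the triangle of a different $\theta_{j'}$ — there $\mc D_\ell(G,k)$ can be as large as $d^{(j')}_\ell$ rather than $\eta^2 d^{(j')}_\ell$, so the comparison in $\mc S_{\ell,n}(j)$ must implicitly be read as local to $\theta_j$'s own triangle (which is how the lemma is actually used in Step~2 of Theorem~\ref{thm:consistency}).
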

\begin{proof}
Recall the definition of $\mc E_{\ell, n}$ in Equation~\eqref{eq:Op-bound}.
Conditional on $\mc E_{\ell, n}$, at change point $\theta_{j}, \, j \in \mc I_\ell$, we have
\begin{align*}
 T_\ell (G, \theta_{j} ) = d^{(j)}_\ell  + O \l( \sqrt{\frac{\log(n)}{G}} \r).
\end{align*}
Then by Assumption~\ref{assum:change-point}~\ref{assum:change-size}, $(\log(n))^{-1/2} \sqrt{G} T_\ell (G, \theta_{j} ) \to \infty$, so that $T_\ell(G,\theta_{j} ) > \zeta_\ell (n, G)$. 
Also for any $k$ such that  $|k - \theta_{j} \vert > (1-\eta)(G-\ell)$,
\begin{align*}
& \max_{k: \, \vert k - {\theta}_{j} \vert > (1 -\eta) (G- \ell) } T_\ell (G, k)  =  \max_{k: \, \vert k - {\theta}_{j} \vert > (1 -\eta)( G - \ell)} \frac{(G - \ell - \vert k - {\theta}_{j}  \vert )^2}{(G-\ell)^2} d^{(j)}_\ell+ O \l( \sqrt{\frac{\log(n)}{G}} \r) \\
\leq  & \, \frac{ \{ G - \ell - (1-\eta) (G-\ell) \}^2}{(G-\ell)^2} d^{(j)}_\ell+  O \l( \sqrt{\frac{\log(n)}{G}} \r) = \eta^2 d^{(j)}_\ell+  O \l( \sqrt{\frac{\log(n)}{G}} \r) 
\end{align*}
conditional on $\mc E_{\ell, n}$, so that the assertion for $ \mc S_{\ell,n} (j)$ follows. 
Since the above arguments hold for any $j \in \mc I_\ell$ conditional on $\mc E_{\ell, n}$, the assertion for $\mc S_{\ell, n}$ also holds.
Analogous arguments apply to the proof of~\ref{lemma:eta-crit-b} and are omitted.
\end{proof}

\end{document}